\newtheorem{theorem}{Theorem}
\newtheorem{lemma}[theorem]{Lemma}
\newtheorem{claim}{Claim}
\newtheorem{proposition}[theorem]{Proposition}
\newtheorem*{proposition2}{Proposition}
\newtheorem{observation}[theorem]{Observation}
\newtheorem{question}[theorem]{Question}
\newtheorem{construction}[theorem]{Construction}
\newtheorem*{definition}{Definition}
\newtheorem{corollary}[theorem]{Corollary}
\newtheorem{conjecture}[theorem]{Conjecture}
\newcommand{\M}{\gamma^{\text{\tiny{ID}}}}
\newcommand{\E}{\mathbb{E}}
\newcommand{\Bin}{\mbox{Bin}}
\renewcommand{\int}{\mathrm{g}}
\newcommand{\Ebig}{E_{\mbox{{\sc big}}}}
\newcommand{\Egood}{E_{\mbox{{\sc ic}}}}
\title{ Bounds for identifying codes\\ in terms of degree parameters\footnote{\noindent This research was supported by the ANR Project IDEA 
- Identifying coDes in Evolving grAphs, ANR-08-EMER-007, 2009-2011. Both authors thank the hospitality and support of the graduate research training
network "Methods for Discrete Structures" in Berlin, Germany, where part of the research took place. The second author wants to thank the FPU grant from the Ministerio de Educaci\'on de España.}} 
\author{Florent Foucaud\footnote{\noindent
LaBRI - Universit\'e de Bordeaux,
351 cours de la Lib\'eration,
33405 Talence cedex, France.
}
\and 
Guillem Perarnau\footnote{\noindent
MA4 - Universitat Polit\`ecnica de Catalunya,
C/ Jordi Girona 1-3,
08034 Barcelona, Spain.
}}
\begin{document}
\maketitle

\begin{abstract}
  An identifying code is a subset of vertices of a graph such that
  each vertex is uniquely determined by its neighbourhood within the
  identifying code. If $\M(G)$ denotes the minimum size of an
  identifying code of a graph $G$, it was conjectured by F.~Foucaud,
  R.~Klasing, A.~Kosowski and A.~Raspaud that there exists a constant $c$ such that if a connected graph $G$
  with $n$ vertices and maximum degree~$d$ admits an identifying
  code, then $\M(G)\leq n-\tfrac{n}{d}+c$. We use probabilistic
  tools to show that for any~$d\geq 3$, $\M(G)\leq
  n-\tfrac{n}{\Theta(d)}$ holds for a large class of graphs
  containing, among others, all regular graphs and all graphs of
  bounded clique number. This settles the conjecture (up to constants) for
  these classes of graphs. In the general case, we prove $\M(G)\leq
  n-\tfrac{n}{\Theta(d^{3})}$. In a second part, we prove that in any
  graph $G$ of minimum degree~$\delta$ and girth at least~5,
  $\M(G)\leq(1+o_\delta(1))\tfrac{3\log\delta}{2\delta}n$. Using the
  former result, we give sharp estimates for the size of the minimum
  identifying code of random $d$-regular graphs, which is about
  $\tfrac{\log d}{d}n$.
\end{abstract}

\section{Introduction}

Given a graph $G$, an identifying code $\mathcal{C}$ is a dominating set 
such that for any two vertices, their neighbourhoods within $\mathcal{C}$ are
nonempty and distinct. This property can be used to distinguish all
vertices of the graph from each other. Identifying codes have
found applications to various fields since the introduction of this
concept in~\cite{KCL98}. These applications include the location of
threats in facilities using sensors~\cite{DRSTU03}, error-detection
schemes~\cite{KCL98} and routing~\cite{CLST07} in networks, as well as
the structural analysis of RNA proteins~\cite{HKSZ06} (under the denomination of differentiating-dominating sets).

In this paper, we address the question of lower and upper bounds on
the size of an identifying code, thus extending earlier works on such
questions (see e.g.~\cite{M06,CHL07,GM07,FGKNPV10,FKKR10}). We focus
on degree-related graph parameters such as the minimum and maximum
degree, and also study the case of regular graphs. An important part
of the paper is devoted to giving the best possible upper bound for
the size of an identifying code depending on the order and the maximum
degree of the graph, a question raised in~\cite{F09}. We also give
improved bounds for graphs of large girth and study identifying codes
in random regular graphs. The main tools used herein are
probabilistic.

We begin by giving our notations and defining the concepts used
throughout the paper.

As \emph{graphs} and unless specifically mentioned, we understand
simple, undirected and finite graphs. The vertex set of a graph $G$ is
denoted by $V(G)$ and its edge set $E(G)$. We also denote its order by
$n=|V(G)|$. The maximum degree of $G$ will be denoted by $d=d(G)$, its
minimum degree, by $\delta=\delta(G)$, and its average degree, by
$\overline{d}=\overline{d}(G)$. We denote by $u\sim v$, the adjacency
between two vertices $u$ and $v$, and by $u\not\sim v$, their
non-adjacency. The set of neighbours of some vertex $v$ is called its
\emph{open neighbourhood} and denoted by $N(v)$, whereas the set of
its neighbours and itself is called its \emph{closed neighbourhood}
and denoted by $N[v]$. If two distinct vertices $u,v$ are such that
$N[u]=N[v]$, they are called \emph{twins}. If $N(u)=N(v)$ but
$u\not\sim v$, $u$ and $v$ are called \emph{false twins}. The
symmetric difference between two sets $A$ and $B$ is denoted by
$A\Delta B$. We use $\log(x)$ to denote the natural logarithm of
$x$. We also make use of the standard asymptotic notations
$o,O,\Theta,\Omega$ and $\omega$.  Usually the asymptotics are taken
either on variables $d,\delta$ or $n$.  We use the notations $o_x(1)$
and $O_x(1)$ to stress the fact that the asymptotic is taken on
variable $x$. If we write $o(1)$ or $O(1)$, then by convention the
asymptotic is taken only on $n$, the number of vertices.

Given a graph $G$ and a subset $\mathcal{C}$ of vertices of $G$, $\mathcal{C}$ is called a
\emph{dominating set} if each vertex of $V(G)\setminus \mathcal{C}$ has at least
one neighbour in $\mathcal{C}$. Set $\mathcal{C}$ is called a \emph{separating set} of $G$ if for each pair $u,v$ of vertices of $G$,
$N[u]\cap \mathcal{C}\neq N[v]\cap \mathcal{C}$ (equivalently, $(N[u]\Delta N[v])\cap
\mathcal{C}\neq\emptyset$). We have the following definition:

\begin{definition}
  Given a graph $G$, a subset of vertices of $V(G)$ which is both a
  dominating set and a separating set is called an \emph{identifying
    code} of $G$.
\end{definition}

First of all it must be stressed that not every graph can have an
identifying code. Observe that a graph containing twin vertices does
not admit any separating set or identifying code. In fact a graph
admits an identifying code if and only if it is \emph{twin-free},
i.e. it has no pair of twins (one can see that if $G$ is twin-free,
$V(G)$ is an identifying code of $G$). Note that if for three distinct
vertices $u,v,w$ of a twin-free graph $G$, $N[u]\Delta N[v]=\{w\}$,
then $w$ belongs to any identifying code of $G$. In this case we say
that $w$ is \emph{$uv$-forced}, or simply \emph{forced}. Observe that
any isolated vertex must belong to any identifying code for the reason
that it must be dominated. For example, an edgeless graph needs all
the vertices in any identifying code. Hence, the bounds of this paper
only hold for graphs with few isolated vertices. In order to shorten
the statements of our results, we assume that all considered graphs
do not have any isolated vertices.

For a given graph, the problem of finding a minimum identifying code
is known to be NP-hard~\cite{CHL03}, even in graphs having small maximum degree and
high girth (to be precise, in planar graphs of maximum degree~4
having arbitrarily large girth~\cite{A10} and planar graphs of maximum
degree~3 and girth at least~9~\cite{ACHL10}).

The minimum size of an identifying code in a graph $G$, denoted
$\M(G)$, is the \emph{identifying code number} of $G$. It is known that for a
twin-free graph $G$ on $n$ vertices having at least one edge, we have:
$$\lceil\log_2(n+1)\rceil\le\M(G)\leq n-1$$ The lower bound is proved
in~\cite{KCL98} and the upper bound, in~\cite{GM07}. Both bounds are
tight and all graphs reaching these two bounds have been classified
(see~\cite{M06} for the lower bound and~\cite{FGKNPV10} for the upper
bound).

When considering graphs of given maximum degree $d$, it was shown
in~\cite{KCL98} that the lower bound can be improved to
$\M(G)\geq\tfrac{2n}{d+2}$. This bound is tight and a
classification of all graphs reaching it has been proposed
in~\cite{F09}. For any $d$, these graphs include some
regular graphs and graphs of arbitrarily large girth.

It was conjectured in~\cite{FKKR10} that the following upper bound holds.

\begin{conjecture}[\cite{FKKR10}]\label{conj}
  There exists a constant $c$ such that for any nontrivial connected twin-free graph $G$ of maximum
  degree~$d$, $\M(G)\leq n-\tfrac{n}{d}+c$.
\end{conjecture}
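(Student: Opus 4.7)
The plan is probabilistic: construct the identifying code as $\mathcal{C} = V(G) \setminus S$, where $S$ is a random ``removal'' set formed by including each vertex of $V(G)$ independently with probability $p = c/d$ for a suitable absolute constant $c$. Then $\E[|S|] = cn/d$, and Chernoff concentration yields $|\mathcal{C}| \leq n - n/d + O(1)$ with positive probability, matching the target bound up to the constant $c$.

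The obstacles to $\mathcal{C}$ being an identifying code are \emph{domination failures} $A_v = \{N[v] \subseteq S\}$ and \emph{separation failures} $B_{u,v} = \{(N[u] \Delta N[v]) \subseteq S\}$. First I would place every forced vertex into $\mathcal{C}$ deterministically -- these must lie in any identifying code anyway -- which kills every $B_{u,v}$ with $|N[u] \Delta N[v]| = 1$. For the remaining pairs, $\Pr[B_{u,v}] \leq p^{|N[u] \Delta N[v]|} \leq p^2$. For domination, $\Pr[A_v] \leq p^{|N[v]|}$; vertices of degree at least $\sqrt{d}$ already contribute negligibly, and the few vertices of very small degree can be handled by a short separate argument together with the connectedness hypothesis.

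Next I would split the separation failures by the graph distance between $u$ and $v$. \emph{Distant} pairs (distance $>2$) have disjoint closed neighbourhoods, so $|N[u] \Delta N[v]| = |N[u]| + |N[v]|$ is large whenever $u,v$ both have moderate degree; their failure probability is at most $p^{\Omega(d)}$ and a union bound over the $O(n^2)$ such pairs is easily swallowed. \emph{Close} pairs (distance $\leq 2$) number at most $O(n d^2)$, but their symmetric difference can be as small as $2$, so a naive union bound gives only $O(n d^2) \cdot p^2 = O(c^2 n)$, which is useless. On this restricted family I would invoke an asymmetric Lov\'asz Local Lemma, in which each event $B_{u,v}$ depends on the $O(d)$ random choices indexed by $N[u] \cup N[v]$ and each vertex lies in $O(d^2)$ close-pair events.

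The hard part will be bridging the remaining gap. The LLL dependency degree in the sketch above is of order $d^3$, which forces $p = O(d^{-3/2})$ and only delivers $\M(G) \leq n - n/\Theta(d^{3/2})$ rather than the conjectured linear-in-$d$ savings -- this is precisely the kind of slack that the authors bound weakens into the general $n - n/\Theta(d^3)$ estimate. Closing the gap seems to require a genuinely structural input: a statement that the ``near-twin'' pairs, those with $|N[u] \Delta N[v]|$ bounded by an absolute constant, form a sparse configuration in any twin-free graph of maximum degree $d$. One natural attempt is a two-round exposure that first covers all near-twin pairs deterministically and then randomises on the residual graph, where every surviving pair has $|N[u] \Delta N[v]| = \Omega(\log d)$ and the union bound suffices; producing the required sparsity bound on near-twin pairs in \emph{arbitrary} twin-free graphs -- as opposed to the regular or bounded-clique-number classes handled in the sequel -- is the principal obstacle.
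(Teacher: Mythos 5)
You are attempting to prove a statement that the paper itself does not prove: it is stated as a conjecture (due to Foucaud, Klasing, Kosowski and Raspaud) and remains open. The paper only establishes weaker partial results with its Local Lemma machinery -- $\M(G)\leq n-\tfrac{nf(G)^2}{103d}$ (Theorem~\ref{thm:bigthm}), hence $n-\tfrac{n}{\Theta(d)}$ for regular graphs and graphs of bounded clique number and $n-\tfrac{n}{\Theta(d^3)}$ in general -- and explicitly describes these as settling the conjecture only ``up to constants''. Your sketch is essentially a rediscovery of that strategy (random removal set, events for domination and separation failures, forced vertices placed in the code deterministically, weighted/asymmetric LLL on the close pairs), and your closing paragraph correctly identifies the same obstruction the paper isolates, namely controlling the forced/near-twin structure in arbitrary twin-free graphs, which the paper quantifies via $f(G)$ and Proposition~\ref{prop:f(G)} ($f(G)\geq\tfrac{1}{d+1}$, and this is tight). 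So as a proof of the conjecture the proposal has a genuine, unfilled gap -- one you acknowledge yourself.

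Beyond that, there is a quantitative error in the first paragraph that matters even for the partial results: with $p=c/d$ the code has size about $n-\tfrac{cn}{d}$, and the LLL/union-bound constraints force $c$ to be a constant strictly smaller than $1$ (in the paper, $p=\tfrac{1}{kd}$ with $k\geq 30$). The resulting bound is $n-\tfrac{n}{\Theta(d)}$, which differs from the conjectured $n-\tfrac{n}{d}+c$ by a term of order $\tfrac{n}{d}$, not by an additive constant; so ``matching the target bound up to the constant $c$'' conflates the conjecture's additive constant with a multiplicative constant in the $\tfrac{n}{d}$ term. Since there exist graphs with $\M(G)=n-\tfrac{n}{d}$ exactly (Constructions~\ref{constr1} and~\ref{constr2}), no multiplicative slack is available, and a first-moment/LLL argument of this shape cannot by itself reach the conjectured bound even for regular graphs. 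Your distance split also needs care: pairs at distance greater than $2$ need no argument at all once domination and distance-$\leq 2$ separation hold (Observation~\ref{obs:dist2}), whereas the low-degree vertices you defer to ``a short separate argument'' are exactly where $|N[u]\Delta N[v]|$ can be $2$ independently of $d$, so they cannot be waved away by connectedness alone.
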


Graphs of maximum degree~$d$ such that $\M(G) = n -\frac n d$ are
known (e.g. the complete bipartite graph $K_{d,d}$ and richer classes
of graphs described in Section~\ref{apdx:constructions}). Therefore if
Conjecture~\ref{conj} holds, there would exist a constant $c$ such
that for any twin-free graph $G$ on $n$ vertices and of maximum degree~$d$ we
would have $\tfrac{2}{d+2}n\leq\M(G)\leq n-\tfrac{n}{d}+c$, with both
bounds being tight.

Note that Conjecture~\ref{conj} holds for graphs of maximum degree~$2$
(see~\cite{GMS06}). It was shown in~\cite{FGKNPV10} that $\M(G)\leq
n-\tfrac{n}{\Theta(d^5)}$, and $\M(G)\leq n- \tfrac{n}{\Theta(d^3)}$
when $G$ has no forced vertices (in particular, this is true when $G$
is regular). It is also known that the conjecture holds in an
asymptotic way if $G$ is triangle-free: then, $\M(G)\leq
n-\tfrac{n}{d(1+o_d(1))}$~\cite{FKKR10}.

Identifying codes have been previously studied in two models of random
graphs, that is the classic random graph model~\cite{FMMRS07}
and the model of random geometric graphs~\cite{MS09}. To our
knowledge random regular graphs have not been studied in the context
of identifying codes.
\vspace{0.3cm}

In this paper, we further study Conjecture~\ref{conj} and prove that it is
tight (up to constants) for large enough values of~$d$ and for a large
class of graphs, including regular graphs and graphs of bounded clique
number (Corollaries~\ref{cor:constantf(G)}
and~\ref{cor:boundedcliques}). In the general case, we prove that
$\M(G)\leq n- \tfrac{n}{\Theta(d^3)}$
(Corollary~\ref{cor:general}). These results improve the known bounds
given in~\cite{FGKNPV10} and support Conjecture~\ref{conj}. Moreover,
we show that the much improved upper bound of $\M(G)\leq
(1+o_{\delta}(1))\frac{3\log{\delta}}{2\delta}n$ holds for graphs
having girth at least~5 and minimum degree~$\delta$
(Theorem~\ref{the:g5}). This bound is used to give an asymptotically
tight bound of about $\frac{\log{d}}{d}n$ for the identifying code
number of almost all random $d$-regular graphs
(Corollary~\ref{cor:lb}).

We summarize our results for the special case of regular graphs in
Table~\ref{tab:results} and compare them to the bound for the
dominating set problem (the table contains references for both the
bound and its tightness). All bounds are asymptotically tight. We note
that identifying codes behave far from dominating sets in general, as
shown by the first lines of the table: there are regular graphs having
much larger identifying code number than domination number. However,
for larger girth and for almost all regular graphs, the bounds for the
two problems coincide asymptotically, as shown by the last lines of
the table.

\begin{table}[ht!]
\centering
    \begin{tabular}{ccc}
 		&  Identifying codes & Dominating sets\\ 
\hline
\hline
 & & \\[-0.2cm]
     in general &  $n-\tfrac{n}{103d}$ & $\sim\tfrac{\log d}{d}n$ \\
             &  Thm.~\ref{thm:bigthm}, Constr.~\ref{constr2}  & \cite{AS00}, \cite{TY07}\\
 & & \\[-0.3cm]
    \hline
 & & \\[-0.2cm]
     girth~4 &  $n-\tfrac{n}{d(1+o_d(1))}$ & $\sim\tfrac{\log d}{d}n$ \\
             &  \cite{FKKR10}, Constr.~\ref{constr3} & \cite{AS00}, \cite{TY07} \\
 & & \\[-0.3cm]
    \hline
 & & \\[-0.2cm]
     girth~5 &  $(1+o_d(1))\tfrac{3\log d}{2d}n$ & $\sim\tfrac{\log d}{d}n$ \\
             &  Thm.~\ref{the:g5}, Thm.~\ref{prop:domination} & \cite{AS00}, \cite{TY07} \\
 & & \\[-0.3cm]
    \hline
 & & \\[-0.2cm]
     almost all graphs &  $\tfrac{\log{d}+\log\log d+O_d(1)}{d}n$ &
$\sim\tfrac{\log d}{d}n$\\
      &  Thm.~\ref{prop:RRG}, Thm.~\ref{prop:domination} & \cite{AS00}, \cite{TY07}\\
 & & \\[-0.3cm]
    \end{tabular}
\caption{Summary of the upper bounds for $d$-regular graphs}
\label{tab:results}
\end{table}

In order to prove our results, we use probabilistic techniques. For
some results, we use the weighted version of Lov\'asz' Local Lemma to
show the existence of an identifying code, together with the Chernoff
bound to show that this code is small enough.
To bound the number of forced vertices in a graph we study an auxiliary directed graph that
captures the underlying structure of these vertices. This new technique we introduce can be useful
to study the number of forced vertices in a more general context, which is an important problem in
the community of identifying codes.
We also make use of other probabilistic techniques such as the Alteration
Method~\cite{AS00} in order to give better bounds in more restricted
cases. Finally, we work with the Configuration Model~\cite{B01} in
order to compute the identifying code number of almost all random regular graphs.

\vspace{0.3cm}

The organization of this paper is as follows. In
Section~\ref{sec:prel} we state some preliminary results which will be
used throughout the paper. In Section~\ref{sec:mainresult}, we improve
the known upper bounds on the identifying code number of graphs of maximum
degree~$d$. This gives new large families
of graphs for which Conjecture~\ref{conj} holds (up to constants). In
Section~\ref{sec:girth5}, we give an upper bound for graphs having
minimum degree~$\delta$ and girth at least~$5$. In Section~\ref{sec:RRG}, we
give sharp bounds for the identifying code number of almost all
$d$-regular graphs. A further section is dedicated to various
constructions of families of graphs which show the tightness of some
of our results (Section~\ref{apdx:constructions}).

\section{Preliminary results}\label{sec:prel}

We first recall a well-known probabilistic tool: the Lov\'asz Local
Lemma. We use its weighted version, a particularization of the general
version where each event has an assigned weight. The proof can be
found in~\cite{MR01}.

\begin{lemma}[Weighted Local Lemma~\cite{MR01}]
\label{lem:WLL}
Let $\mathcal{E}=\left\{E_1,\ldots,E_M\right\}$ be a set of (typically ``bad'')
 events such that each $E_i$ is mutually independent of
 $\mathcal{E}\setminus(\mathcal{D}_i\cup\left\{E_i\right\})$ where
 $\mathcal{D}_i\subseteq \mathcal E$. Suppose that there exist some integer
 weights $t_1,\ldots,t_M\ge 1$ and a real $p\leq\tfrac{1}{4}$ such that for
 each $1\le i\le M$:
\begin{itemize}
\item $\Pr(E_i)\le p^{t_i}$, and
\item $\sum_{E_j\in \mathcal{D}_i}(2p)^{t_j}\le\frac{t_i}{2}$
\end{itemize}

Then $\Pr(\bigcap_{i=1}^M\overline{E_i})\ge\prod_{i=1}^M(1-(2p)^{t_i})>0$.
\end{lemma}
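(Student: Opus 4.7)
The plan is to follow the classical induction proof of the Lov\'asz Local Lemma, now tracking exponents. The key auxiliary claim is
\[\Pr\Bigl(E_i \,\Big|\, \bigcap_{j\in S}\overline{E_j}\Bigr)\le (2p)^{t_i}\qquad\text{for every }S\subseteq\{1,\dots,M\}\text{ with }i\notin S.\]
Granting this, the chain rule applied to $\overline{E_1},\dots,\overline{E_M}$ gives $\Pr\bigl(\bigcap_i\overline{E_i}\bigr)\ge\prod_i\bigl(1-(2p)^{t_i}\bigr)$; the hypothesis $p\le 1/4$ forces $(2p)^{t_i}\le 1/2$, so each factor is positive and the product is strictly positive, which is the desired conclusion.

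I would argue the claim by induction on $|S|$. The base case $|S|=0$ is exactly the first hypothesis $\Pr(E_i)\le p^{t_i}\le(2p)^{t_i}$. For the inductive step, split $S=S_1\sqcup S_2$ with $S_1=S\cap\mathcal{D}_i$ and $S_2=S\setminus\mathcal{D}_i$, and write
\[\Pr\Bigl(E_i\,\Big|\,\bigcap_{j\in S}\overline{E_j}\Bigr)=\frac{\Pr\bigl(E_i\cap\bigcap_{j\in S_1}\overline{E_j}\,\bigm|\,\bigcap_{j\in S_2}\overline{E_j}\bigr)}{\Pr\bigl(\bigcap_{j\in S_1}\overline{E_j}\,\bigm|\,\bigcap_{j\in S_2}\overline{E_j}\bigr)}.\]
The numerator is bounded by $\Pr(E_i\mid\bigcap_{j\in S_2}\overline{E_j})$, which equals $\Pr(E_i)\le p^{t_i}$ because $E_i$ is mutually independent of $\{E_j\}_{j\in S_2}\subseteq\mathcal{E}\setminus(\mathcal{D}_i\cup\{E_i\})$.

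The main obstacle is to show that the denominator is at least $2^{-t_i}$; combined with the numerator bound, this yields exactly the target $(2p)^{t_i}$. Enumerating $S_1=\{j_1,\dots,j_k\}$ and telescoping,
\[\Pr\Bigl(\bigcap_{\ell=1}^{k}\overline{E_{j_\ell}}\,\Big|\,\bigcap_{j\in S_2}\overline{E_j}\Bigr)=\prod_{\ell=1}^{k}\Pr\Bigl(\overline{E_{j_\ell}}\,\Big|\,\bigcap_{\ell'<\ell}\overline{E_{j_{\ell'}}}\cap\bigcap_{j\in S_2}\overline{E_j}\Bigr)\ge\prod_{j\in S_1}\bigl(1-(2p)^{t_j}\bigr),\]
where the last inequality applies the inductive hypothesis to each factor (the conditioning set has size strictly less than $|S|$).

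It therefore suffices to prove $\prod_{j\in S_1}(1-(2p)^{t_j})\ge 2^{-t_i}$. Taking $-\log$, this is equivalent to $\sum_{j\in S_1}-\log\bigl(1-(2p)^{t_j}\bigr)\le t_i\log 2$. Since each $(2p)^{t_j}$ lies in $[0,1/2]$, I would invoke the elementary inequality $-\log(1-x)\le(2\log 2)\,x$ on $[0,1/2]$ (the two sides agree at both endpoints, the left side is convex, and the right side is linear). Combining with the second hypothesis $\sum_{j\in\mathcal{D}_i}(2p)^{t_j}\le t_i/2$ gives $\sum_{j\in S_1}-\log(1-(2p)^{t_j})\le 2\log 2\cdot t_i/2=t_i\log 2$, which closes the induction and hence proves the lemma.
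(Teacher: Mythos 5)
Your proof is correct. Note that the paper itself gives no proof of this lemma: it simply cites Molloy and Reed, where the weighted version is obtained by specializing the general (asymmetric) Local Lemma with the choice $x_i=(2p)^{t_i}$ and checking its hypothesis via $\prod_{E_j\in\mathcal{D}_i}\bigl(1-(2p)^{t_j}\bigr)\ge 4^{-\sum_{E_j\in\mathcal{D}_i}(2p)^{t_j}}\ge 4^{-t_i/2}=2^{-t_i}$, so that $\Pr(E_i)\le p^{t_i}=(2p)^{t_i}\cdot 2^{-t_i}$. You instead rerun the classical induction from scratch, and your key numeric step is exactly the same inequality in logarithmic form ($-\log(1-x)\le(2\log 2)x$ on $[0,1/2]$ is equivalent to $1-x\ge 4^{-x}$ there), applied to the conditioned neighbours $S_1\subseteq\mathcal{D}_i$ rather than to all of $\mathcal{D}_i$. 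What your route buys is a self-contained argument that does not presuppose the general Local Lemma; what the citation route buys is brevity. Two small points you gloss over, both standard: the conditional probabilities must be well defined, which follows because your denominator bound (together with the base case) shows inductively that $\Pr\bigl(\bigcap_{j\in S}\overline{E_j}\bigr)>0$ for every $S$, and the same remark justifies the final chain-rule telescoping; and when $S_1=\emptyset$ the inductive step degenerates to the mutual-independence identity directly, which your decomposition handles with denominator $1$.
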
 

Note that in Lemma~\ref{lem:WLL}, since $p\leq\tfrac{1}{4}$ and $(1-x)\geq e^{-(2\log 2)x}$ in
$x\in [0,1/2]$, we have:
\begin{equation}\label{eq:expbound}
\textstyle \Pr(\bigcap_{i=1}^M\overline{E_i})\geq\exp \left\{- (2\log 2)\sum_{i=1}^M
(2p)^{t_i} \right\}.
\end{equation}

We also use the following version of the well-known Chernoff bound,
which is a reformulation of Theorem A.1.13 in~\cite{AS00}.

\begin{theorem}[Chernoff bound~\cite{AS00}]\label{Chernoff}
Let $X$ be a random variable of $n$ independent trials of probability
 $p$, and let $a>0$ be a real number. Then $\Pr(X-np\le-a)\le
 e^{-\frac{a^2}{2np}}$.
\end{theorem}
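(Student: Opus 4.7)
The plan is to use the standard Cram\'er--Chernoff exponential moment method for the lower tail of a binomial. Write $X=\sum_{i=1}^n X_i$ with the $X_i$ i.i.d.\ Bernoulli$(p)$. For any parameter $t>0$, Markov's inequality applied to the nonnegative variable $e^{-tX}$ yields
\[ \Pr(X-np\le -a) \;=\; \Pr\!\big(e^{-tX}\ge e^{-t(np-a)}\big) \;\le\; e^{t(np-a)}\,\E[e^{-tX}]. \]
By independence, $\E[e^{-tX}]=(1-p+pe^{-t})^n$, and the elementary inequality $1+x\le e^x$ sharpens this to $\E[e^{-tX}]\le\exp\!\big(np(e^{-t}-1)\big)$.

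Next I would optimize the resulting upper bound over $t>0$. Assuming $a\le np$ (otherwise the event has probability $0$ and the claim is trivial), I set $\epsilon:=a/(np)\in(0,1]$ and differentiate in $t$; the optimizer is $t=-\log(1-\epsilon)$. Substituting back, the exponent in the upper bound collapses to $-np\big[(1-\epsilon)\log(1-\epsilon)+\epsilon\big]$.

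The only genuinely analytic step, and the main obstacle, is passing from this sharp-but-unwieldy bound to the clean Gaussian-like form $e^{-a^2/(2np)}$. This amounts to the elementary inequality
\[ (1-\epsilon)\log(1-\epsilon)+\epsilon \;\ge\; \frac{\epsilon^2}{2} \qquad \text{for all } \epsilon\in[0,1]. \]
I would verify it by letting $f(\epsilon)$ be the difference of the two sides and checking $f(0)=0$, $f'(0)=0$, and $f''(\epsilon)=\epsilon/(1-\epsilon)\ge 0$ on $[0,1)$, so that $f$ is convex and therefore nonnegative. Multiplying through by $np$ and noting that $np\epsilon^2/2=a^2/(2np)$ then turns the previous exponent into something at most $-a^2/(2np)$, giving the claimed bound. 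The whole argument is thus Markov plus independence plus a one-variable optimization plus one convexity check.
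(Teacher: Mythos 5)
Your proof is correct: the paper does not prove this statement itself but cites it as a reformulation of Theorem A.1.13 of Alon--Spencer, and your argument (Markov applied to $e^{-tX}$, the bound $1+x\le e^x$, optimizing $t=-\log(1-\epsilon)$, and the convexity inequality $(1-\epsilon)\log(1-\epsilon)+\epsilon\ge\epsilon^2/2$) is exactly the standard exponential-moment derivation underlying that cited result. The only cosmetic point is the boundary case $\epsilon=1$ (where the optimal $t$ is infinite), which you can dispatch directly via $\Pr(X=0)=(1-p)^n\le e^{-np}\le e^{-np/2}$ or by continuity.
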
 

The following observation gives an equivalent condition for a set to
be an identifying code, and follows from the fact that for two
vertices $u,v$ at distance at least~3 from each other,
$N[u]\Delta N[v]=N[u]\cup N[v]$.

\begin{observation}\label{obs:dist2}
For a graph $G$ and a set $\mathcal{C}\subseteq V(G)$, if $\mathcal{C}$ is dominating and
$N[u]\cap \mathcal{C}\neq N[v]\cap \mathcal{C}$ for each pair of vertices $u,v$ at
distance at most two from each other, then $N[u]\cap \mathcal{C}\neq N[v]\cap \mathcal{C}$
for each pair of vertices of the graph.
\end{observation}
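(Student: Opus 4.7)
The plan is straightforward and essentially follows the hint already embedded in the statement. I want to show that under the hypothesis, the only pairs not directly handled are those at distance at least~3, and that such pairs are automatically separated as soon as $\mathcal{C}$ is dominating.

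First, I would split the task: recall that $\mathcal{C}$ is an identifying code precisely when it is dominating and $N[u]\cap\mathcal{C}\ne N[v]\cap\mathcal{C}$ for every pair of distinct vertices $u,v$. The hypothesis already supplies domination together with the separation property for every pair at distance at most~$2$, so the only remaining case is a pair $u,v$ with $d(u,v)\ge 3$.

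For such a pair I would verify the key geometric fact $N[u]\cap N[v]=\emptyset$. Indeed, if some vertex $w$ belonged to both $N[u]$ and $N[v]$, then either $w\in\{u,v\}$ (forcing $u\sim v$ or $u=v$) or $w$ is a common neighbour of $u$ and $v$; in every case $d(u,v)\le 2$, a contradiction. This is exactly the identity $N[u]\Delta N[v]=N[u]\cup N[v]$ mentioned in the text.

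Finally, since $\mathcal{C}$ is dominating, both $N[u]\cap\mathcal{C}$ and $N[v]\cap\mathcal{C}$ are nonempty; but they sit inside the disjoint sets $N[u]$ and $N[v]$, hence they are themselves disjoint and nonempty, and in particular distinct. This concludes the argument. There is no real obstacle here: the only thing one might want to double-check is the trivial case analysis showing that $d(u,v)\ge 3$ forces disjoint closed neighbourhoods, which is the unique content of the observation.
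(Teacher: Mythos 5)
Your proof is correct and follows exactly the paper's line of reasoning: for vertices at distance at least~$3$ the closed neighbourhoods are disjoint (so $N[u]\Delta N[v]=N[u]\cup N[v]$), and domination then yields two nonempty disjoint, hence distinct, traces on $\mathcal{C}$. Nothing to add.
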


The next observation is immediate, but it is worth mentioning here.

\begin{observation}
Let $G$ be a twin-free graph and $\mathcal{C}$, an identifying code of $G$. Any
set $\mathcal{C}'$ such that  $\mathcal{C}\subseteq \mathcal{C}'$ is also an identifying code of $G$.
\end{observation}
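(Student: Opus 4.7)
The plan is to verify separately that $\mathcal{C}'$ satisfies each of the two defining properties of an identifying code, namely that it is both a dominating set and a separating set, using only the inclusion $\mathcal{C}\subseteq \mathcal{C}'$ and the fact that $\mathcal{C}$ already has these properties.

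For the dominating-set property, I would first observe the inclusion $V(G)\setminus \mathcal{C}'\subseteq V(G)\setminus \mathcal{C}$, which follows directly from $\mathcal{C}\subseteq \mathcal{C}'$. Hence any vertex $v\notin \mathcal{C}'$ is also not in $\mathcal{C}$, so by the assumption that $\mathcal{C}$ dominates, there exists some $w\in N(v)\cap \mathcal{C}$. Since $w\in \mathcal{C}\subseteq \mathcal{C}'$, this same neighbour witnesses that $v$ is dominated by $\mathcal{C}'$.

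For the separating-set property, I would use the equivalent reformulation already recalled in the excerpt: $\mathcal{C}$ separates $u$ and $v$ iff $(N[u]\Delta N[v])\cap \mathcal{C}\neq \emptyset$. Given any pair of distinct vertices $u,v$, the assumption on $\mathcal{C}$ provides some witness $w\in (N[u]\Delta N[v])\cap \mathcal{C}$; but then $w\in \mathcal{C}\subseteq \mathcal{C}'$, so $w\in (N[u]\Delta N[v])\cap \mathcal{C}'$ as well. Thus $\mathcal{C}'$ separates every pair too. No step here poses any real difficulty; both arguments are one-line set-containment manipulations, and no appeal to the twin-freeness of $G$ is needed beyond the fact that the hypothesis already presupposes the existence of the identifying code $\mathcal{C}$.
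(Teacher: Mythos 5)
Your proof is correct, and it is exactly the standard monotonicity argument the paper has in mind: the paper states this observation without proof, calling it immediate, and your two set-containment checks (domination and separation, via the $(N[u]\Delta N[v])\cap\mathcal{C}\neq\emptyset$ reformulation) are precisely the omitted details.
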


The next proposition shows an upper bound on the number of false
twins in a graph.

\begin{proposition}\label{prop:falsetwins}
  Let $G$ be a graph on $n$ vertices having maximum degree~$d$ and no
  isolated vertices, then $G$ has at most $\tfrac{n(d-1)}{2}$ pairs of
  false twins.
\end{proposition}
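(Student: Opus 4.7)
The plan is to exploit the fact that having identical open neighborhoods is an equivalence relation on $V(G)$, so pairs of false twins are exactly the pairs within the classes of this relation. Define $u \equiv v$ iff $N(u)=N(v)$ (this automatically forces $u\not\sim v$ when $u\neq v$, since $u\notin N(u)$), and let $C_1,\dots,C_k$ be the equivalence classes with sizes $n_i=|C_i|$. Then the number of false twin pairs equals $\sum_{i=1}^k \binom{n_i}{2}$.

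Next I would bound $n_i$ in terms of $d$. Within a class $C_i$, every vertex has the same open neighborhood $N_i$. Since $G$ has no isolated vertex, $N_i$ is nonempty, so pick any $w\in N_i$: by definition of the class, $w$ is adjacent to every vertex of $C_i$, hence $\deg(w)\ge n_i$. As $\deg(w)\le d$, this gives $n_i\le d$ for every class of size at least $2$ (classes of size $1$ contribute nothing to the count anyway).

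The conclusion then follows by a simple counting step: since $n_i-1\le d-1$ for each $i$, we have
\[
\sum_{i=1}^{k}\binom{n_i}{2}=\sum_{i=1}^{k}\frac{n_i(n_i-1)}{2}\le \frac{d-1}{2}\sum_{i=1}^{k}n_i \le \frac{(d-1)n}{2},
\]
using that the $C_i$ partition $V(G)$.

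There is no real obstacle here: the only subtlety is verifying that no-isolated-vertex hypothesis is used precisely at the point where a neighbor $w\in N_i$ is required to bound $n_i\le d$; without it, an equivalence class of isolated vertices could be arbitrarily large and the inequality would fail.
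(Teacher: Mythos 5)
Your proof is correct and takes essentially the same route as the paper's: the paper forms an auxiliary graph $H$ whose edges are exactly the false-twin pairs and bounds its maximum degree by $d-1$, which is the same key fact as your bound $n_i\le d$ on the equivalence classes of the relation $N(u)=N(v)$ (both obtained via a common neighbour, which is where the no-isolated-vertex hypothesis enters), followed by the same pair count. If anything, you make the role of that hypothesis more explicit than the paper does.
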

\begin{proof}
  Let us build a graph $H$ on $V(G)$, where two vertices $u,v$ are
  adjacent in $H$ if they are false twins in $G$. Note that since a
  vertex can have at most $d-1$ false twins, $H$ has maximum degree
  $d-1$. Therefore it has at most $\tfrac{n(d-1)}{2}$ edges and the
  claim follows.
\end{proof}

Note that the bound of Proposition~\ref{prop:falsetwins} is tight since
in a complete bipartite graph $K_{d,d}$, $n=2d$ and there are exactly
$2\binom{d}{2}=\tfrac{n(d-1)}{2}$ pairs of false twins.

\section{Upper bounds on the identifying code number}\label{sec:mainresult}

\subsection{Main theorem}

In this section, we improve the known upper bounds of~\cite{FGKNPV10} on
the identifying code number by using the Weighted Local Lemma, stated
in Lemma~\ref{lem:WLL}.

In the following, given a graph $G$ on $n$ vertices, we will denote by
$f(G)$ the proportion of non-forced vertices of $G$, i.e. the ratio
$\tfrac{x}{n}$, where $x$ is the number of non-forced vertices of $G$.

\begin{theorem}
\label{thm:bigthm}
  Let $G$ be a twin-free graph
  on $n$ vertices having maximum degree~$d\geq 3$. Then $\M(G) \leq
  n-\frac{nf(G)^2}{103d}$.
\end{theorem}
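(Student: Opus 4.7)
The plan is to construct an identifying code of the form $\mathcal{C} = V(G) \setminus S$, where $S$ is a random subset of the non-forced vertices obtained by including each one independently with a suitable probability $p$ of order $f(G)/d$. The goal is to show that, with positive probability, $V(G) \setminus S$ is still an identifying code and $|S| \geq nf(G)^2/(103d)$, which immediately yields the stated bound.

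First I would identify the bad events whose simultaneous avoidance guarantees that $V(G) \setminus S$ is a dominating and separating set. Using Observation~\ref{obs:dist2}, these are: a domination event $E_v^{\text{dom}}$ that occurs when $N[v] \subseteq S$, and, for each pair $u,w$ at distance at most two, a separation event $E_{u,w}^{\text{sep}}$ that occurs when $N[u] \Delta N[w] \subseteq S$. Because $S$ only contains non-forced vertices, any event whose ``blocking set'' meets the set of forced vertices has probability zero and may be discarded; in particular, the only separation events that survive have $|N[u]\Delta N[w]|\geq 2$. The remaining events have the natural weights $t_v^{\text{dom}} = |N[v]|$ and $t_{u,w}^{\text{sep}} = |N[u] \Delta N[w]|$, and they satisfy $\Pr(E) \leq p^t$.

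I would then apply the Weighted Lov\'asz Local Lemma (Lemma~\ref{lem:WLL}). The hypothesis to verify is that for each bad event $E_i$ of weight $t_i$, the weighted sum $\sum_{E_j\in\mathcal{D}_i}(2p)^{t_j}$ is at most $t_i/2$. The cleanest way is to organize this sum by running over each vertex $x$ in the support of $E_i$ and then by weight $t\geq 2$, so the problem reduces to bounding, for a fixed vertex $x$, the number of surviving domination and separation events of weight $t$ whose support contains $x$. The domination side is straightforward (there are at most $d+1$ domination events through $x$, each of weight at most $d+1$); the main work is on the separation side, and the dominant contribution comes from weight-$2$ separation events, so the crux is a structural analysis of pairs $(u,w)$ at distance at most two with $|N[u]\Delta N[w]|=2$, using the twin-free hypothesis to control their number. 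Once these counts are in hand, setting $p$ proportional to $f(G)/d$ with a small enough constant makes the hypothesis hold with margin, and the quantitative form~(\ref{eq:expbound}) yields an explicit exponential lower bound $\exp(-\alpha n)$, for some small $\alpha>0$, on the probability that no bad event occurs.

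Finally I would combine the LLL-derived lower bound with the Chernoff bound of Theorem~\ref{Chernoff}. Since $\E[|S|]=pf(G)n$, a Chernoff estimate shows that $|S|<\tfrac{1}{2}pf(G)n$ only with exponentially small probability $\exp(-\beta n)$ for some $\beta>\alpha$. A union bound then produces a realization of $S$ which is simultaneously a valid deletion and of size at least $\tfrac{1}{2}pf(G)n$; plugging in the chosen $p$ and optimizing constants yields the $103$ in the statement. The main obstacle is the combinatorial verification of the LLL hypothesis, and in particular the tight bookkeeping on weight-$2$ separation events, where both the twin-free assumption and the distance-at-most-$2$ restriction must be leveraged to keep the total contribution below the prescribed threshold while still allowing $p$ as large as $\Theta(f(G)/d)$.
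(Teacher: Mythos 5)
Your proposal follows essentially the same route as the paper's proof: delete a random subset $S$ of the non-forced vertices with $p=\Theta(f(G)/d)$, discard events meeting forced vertices so all surviving separation events have weight at least $2$, verify the Weighted Local Lemma hypothesis by counting domination and separation events through a fixed vertex (the weight-$2$ distance-two pairs being exactly the false twins, which are few), and then play the exponential LLL lower bound against a Chernoff bound on $|S|$ to extract a large valid $S$. The only differences are cosmetic (the paper splits the separation events into adjacent pairs, distance-two pairs and false twins, and optimizes the Chernoff cut-off rather than taking $\tfrac12\E|S|$), so the argument and the resulting bound are the same up to the value of the constant.
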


\begin{proof}
  Let $F$ be the set of forced vertices of $G$, and $V'=V(G)\setminus
  F$. Note that $|V'|=nf(G)$. By the definition of a forced vertex,
  any identifying code must contain all vertices of $F$.
	
In this proof, we first build a set $S$ in a random manner by choosing
vertices from $V'$. Then we exhibit some ``bad'' configurations --- if
none of those occurs, the set $\mathcal{C}=F\cup (V'\setminus S)$ is an identifying
code of $G$. Using the Weighted Local Lemma, we compute a lower bound on the
(non-zero) probability that none of these bad events occurs. Finally, we
use the Chernoff bound to show that with non-zero probability, the
size of $S$ is also large enough for our purposes. This shows
that such a ``good'' large set $S$ exists, and it can be used to build an identifying code
that has a sufficiently small size.

Let $p=p(d)$ be a probability which will be determined
later. We build the set $S\subseteq V'$ such that each vertex of $V'$
independently belongs to $S$ with probability $p$. Therefore the
random variable $|S|$ follows a binomial distribution \Bin($nf(G)$,$p$)
and has expected value $\E(|S|)=p n f(G)$.

Let us now define the set $\mathcal{E}$ of ``bad'' events. These are
of four types. An illustration of these events is given in
Figure~\ref{fig:configs}.

\begin{itemize}
\item \textbf{Type A$^j$ ($2\leq j\leq d+1$):} for each vertex $u\in V'$, let $A^j_u$ be the event that $|N[u]=j|$ and $N[u]\subseteq
      S$.
\item \textbf{Type B$^j$ ($2\leq j\leq 2d-2$):} for each pair $\left\{u,v\right\}$ of adjacent
vertices, let $B_{u,v}^j$ be the event that $|(N[u]\Delta N[v])|=j$ and
$(N[u]\Delta N[v])\subseteq S$. 
\item \textbf{Type C$^j$ ($3\leq j\leq 2d$):} for each pair $\left\{u,v\right\}$ of
	  vertices in $V'$ at distance two from each other, let $C_{u,v}^j$ be the event that
$|(N[u]\Delta N[v])|=j$ and $(N[u]\Delta N[v])\subseteq S$.      
\item \textbf{Type D:} for each pair $\left\{u,v\right\}$ of false twins in $V'$, let
$D_{u,v}$ be the event that
      $(N[u]\Delta N[v])=\{u,v\}\subseteq S$.
\end{itemize}

For the sake of simplicity, we refer to the events of type $A^j$,
$B^j$ and $C^j$ as events of type $A$, $B$ and $C$ respectively
whenever the size of the symmetric difference is not relevant.

Events of type $B_{u,v}^1$ are not defined since then $|N[u]\Delta
N[v]|=1$ and $F$ belongs to the code, so they never happen. Observe
that the events $C^j_{u,v}$ and $D_{u,v}$ are just defined over the
pairs of vertices in $V'$ because if either $u$ or $v$ belongs to $F$,
the event does not happen.

If no event of type $A$ occurs, $V(G)\setminus S$ is a dominating set of
$G$. If no event of type $B$ occurs, all pairs of adjacent
vertices are separated by $V(G)\setminus S$. If no event of type $C$ or $D$
occurs, all pairs of vertices at distance~2 from each other are
separated. Thus by Observation~\ref{obs:dist2}, if no event of type $A$, $B$, $C$ or $D$ occurs,
then $V(G)\setminus S$ is also a separating set of $G$, and therefore it is an identifying code of
$G$.

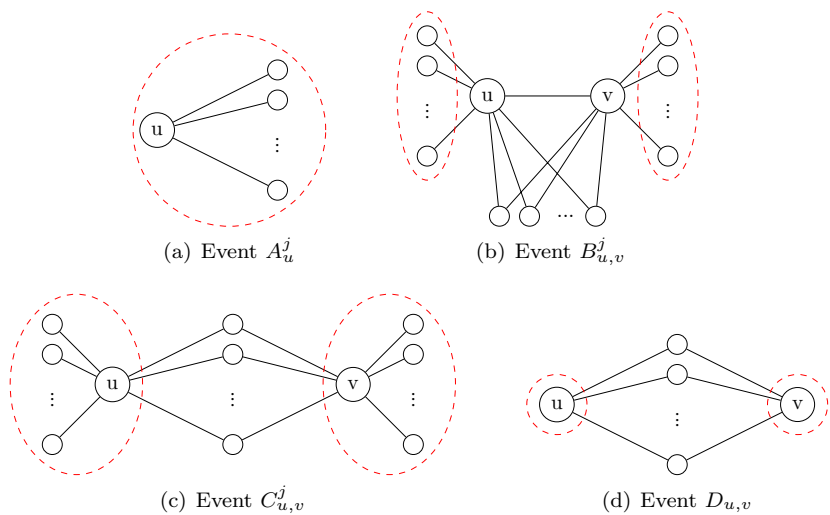
\begin{figure}[!htpb]
\centering

\subfigure[Event $A^j_u$]{\label{fig:config-a}
\scalebox{0.8}{\begin{tikzpicture}
\path (0,0) node[draw,shape=circle] (u) {u};
\path (2,1) node[draw,shape=circle] (a1) {};
\path (2,-1) node[draw,shape=circle] (a2) {};
\path (2,0.5) node[draw,shape=circle] (a3) {};
\draw (u) -- (a1)
      (u) -- (a2)
      (u) -- (a3);
\draw[red,dashed] (1.2,0) ellipse (1.6cm and 1.6cm);
\draw (2,-0.25) node[rotate=90] {...};
\end{tikzpicture}}

}\qquad
\subfigure[Event $B^j_{u,v}$]{\label{fig:config-b}
\scalebox{0.8}{{\begin{tikzpicture}
\path (-1,1) node[draw,shape=circle] (a1) {};
\path (-1,0.5) node[draw,shape=circle] (a2) {};
\path (-1,-1) node[draw,shape=circle] (a3) {};
\path (3,1) node[draw,shape=circle] (b1) {};
\path (3,0.5) node[draw,shape=circle] (b2) {};
\path (3,-1) node[draw,shape=circle] (b3) {};
\path (0,0) node[draw,shape=circle] (u) {u};
\path (2,0) node[draw,shape=circle] (v) {v};
\path (0.2,-2) node[draw,shape=circle] (n1) {};
\path (1.8,-2) node[draw,shape=circle] (n2) {};
\path (0.7,-2) node[draw,shape=circle] (n3) {};
\draw (u) -- (n1)
      (v) -- (n1)
      (u) -- (n2)
      (v) -- (n2)
      (u) -- (n3)
      (v) -- (n3)
      (a1) -- (u)
      (a2) -- (u)
      (a3) -- (u)
      (v) -- (b1)
      (v) -- (b2)
      (v) -- (b3)
      (u) -- (v);
\draw[red,dashed] (-1,0) ellipse (0.5cm and 1.4cm)
           (3,0) ellipse (0.5cm and 1.4cm);
\draw (1.3,-2) node {...}
      (-1,-0.25) node[rotate=90] {...}
      (3,-0.25) node[rotate=90] {...};
\end{tikzpicture}}
}
}\\
\subfigure[Event $C^j_{u,v}$]{\label{fig:config-c}
\scalebox{0.8}{\begin{tikzpicture}
\path (-1,1) node[draw,shape=circle] (a1) {};
\path (-1,0.5) node[draw,shape=circle] (a2) {};
\path (-1,-1) node[draw,shape=circle] (a3) {};
\path (5,1) node[draw,shape=circle] (b1) {};
\path (5,0.5) node[draw,shape=circle] (b2) {};
\path (5,-1) node[draw,shape=circle] (b3) {};
\path (0,0) node[draw,shape=circle] (u) {u};
\path (4,0) node[draw,shape=circle] (v) {v};
\path (2,1) node[draw,shape=circle] (n1) {};
\path (2,-1) node[draw,shape=circle] (n2) {};
\path (2,0.5) node[draw,shape=circle] (n3) {};
\draw (u) -- (n1)
      (v) -- (n1)
      (u) -- (n2)
      (v) -- (n2)
      (u) -- (n3)
      (v) -- (n3)
      (a1) -- (u)
      (a2) -- (u)
      (a3) -- (u)
      (v) -- (b1)
      (v) -- (b2)
      (v) -- (b3);
\draw[red,dashed] (-0.6,0) ellipse (1.1cm and 1.5cm)
           (4.6,0) ellipse (1.1cm and 1.5cm);
\draw (2,-0.25) node[rotate=90] {...}
      (-1,-0.25) node[rotate=90] {...}
      (5,-0.25) node[rotate=90] {...};
\end{tikzpicture}}
}\qquad
\subfigure[Event $D_{u,v}$]{\label{fig:config-d}
\scalebox{0.8}{\begin{tikzpicture}
\path (0,0) node[draw,shape=circle] (u) {u};
\path (4,0) node[draw,shape=circle] (v) {v};
\path (2,1) node[draw,shape=circle] (n1) {};
\path (2,-1) node[draw,shape=circle] (n2) {};
\path (2,0.5) node[draw,shape=circle] (n3) {};
\draw (u) -- (n1)
      (v) -- (n1)
      (u) -- (n2)
      (v) -- (n2)
      (u) -- (n3)
      (v) -- (n3);
\draw[red,dashed] (u) ellipse (0.5cm and 0.5cm)
           (v) ellipse (0.5cm and 0.5cm);
\draw (2,-0.25) node[rotate=90] {...};
\end{tikzpicture}}
}
\caption{The ``bad'' events. The vertices in dashed circles belong
 to set $S$.}
\label{fig:configs}
\end{figure}

Let $V(E_i)$ denote the set of vertices that must belong to set $S$ so that $E_i$ holds (see
Figure~\ref{fig:configs}, where the sets $V(E_i)$ are the ones inside the dashed circles). We will say that a vertex $v\in V(G)$ participates to $E_i$, if
$v\in V(E_i)$. We define the weight $t_i$ of each event $E_i\in\mathcal{E}$ as $|V(E_i)|$. For $j\geq
2$ and for $T\in\{A^j,B^j,C^j,D\}$, let $t_T$ be the weight of an event of type
$T$ (for an event $E_i\in\mathcal E$ of type $T$, $t_i=t_T$). We have
the following:
 $$t_{A^j}=j\qquad t_{B^j}=j\qquad t_{C^j}=j \qquad t_D=2$$

Some vertex $x$ can participate to at most $d+1$ events of type $A$ since if it participates to some
event $A^j_u$, then $u\in N[x]$. Vertex $x$ can participate to at most $d(d-1)$ events of type $B$:
supposing $x\in V(B^j_{u,v})$ and $u$ is adjacent to $x$, there are at most $d$ ways to choose $u$,
and at most $d-1$ ways to choose $v$ among $N(u)\setminus \{x\}$. Observe that if $x=u$ or $x=v$,
then $x\notin V(B_{u,v})$ (see Figure~\ref{fig:config-b}). Similarly $x$ can participate to at most
$d^2(d-1)$ events of type $C$: for some event $C^j_{u,v}$, there are at most $d(d-1)$ possibilities
if $x=u$ or $x=v$ and at most $d(d-1)^2$ if $u$ or $v$ is a neighbour of $x$. Finally, $x$ can
participate to at most $d-1$ events $D_{u,v}$ since $x$ can have at most $d-1$ false twins. For each
type $T$ of events ($T\in\{A^j,B^j,C^j,D\}$) and any vertex $v\in V(G)$, let us define $\int(v,T)$
to be the number of events $E_i$ of type $T$ such that $v\in V(E_i)$. Hence:

\begin{equation}\label{eq:ints}
\begin{split}
&\sum_{j=2}^{d+1}\int(v,A^j)\leq d+1 \qquad \sum_{j=2}^{2d-2} \int(v,B^j)\leq d(d-1) \\
&\sum_{j=3}^{2d} \int(v,C^j)\leq d^2(d-1) \qquad \int(v,D)\leq d-1
\end{split}
\end{equation}

Let us call $\Egood$ the event that no event of $\mathcal E$ occurs. Using the Weighted Local Lemma,
we want to show that $\Pr(\Egood)>0$. Given two events $E_i$ and $E_j$ of $\mathcal{E}$, we note
$i\sim j$ if $V(E_i)\cap V(E_j)=\emptyset$. Observe that for any event $E_i$ and any set
$T\subseteq \{j:\, i\not\sim j\}$, we have $\Pr(E_i \mid \cap_{j\in T} \overline{E_j})=\Pr(E_i)$,
since the vertices are included in $S$ with independent probabilities. This means that $E_i$ is
mutually independent from the set of all events $E_j$ for which $V(E_i)\cap V(E_j)=\emptyset$.

In order to apply the Weighted Local Lemma (Lemma~\ref{lem:WLL}), the
following conditions must hold for each event $E_i\in\mathcal E$:

\begin{equation*}
\sum_{i\sim j} (2p)^{t_j}\leq \frac{t_i}{2}
\end{equation*}

The latter conditions are implied by the following ones (for each event $E_i\in\mathcal E$):

\begin{equation*}
\begin{split}
\sum_{j=2}^{d+1}\sum_{v\in V(E_i)}&\int(v,A^j)(2p)^{t_{A^j}}+\sum_{j=2}^{2d-2}\sum_{v\in
V(E_i)}\int(v,B^j)(2p)^{t_{B^j}}+\\
&\sum_{j=3}^{2d} \sum_{v\in
V(E_i)}\int(v,C^j)(2p)^{t_{C^j}}+\sum_{v\in V(E_i)}\int(v,D)(2p)^{t_{D}}\leq\frac{t_i}{2}
\end{split}
\end{equation*}

Which are implied by:

\begin{equation*}
\begin{split}
t_i\cdot\max\limits_{v\in V(E_i)}&\left\{\sum_{j=2}^{d+1}\int(v,A^j)(2p)^{t_{A^j}}\right\}+
t_i\cdot\max\limits_{v\in V(E_i)}\left\{\sum_{j=2}^{2d-2}\int(v,B^j)(2p)^{t_{B^j}}\right\}+\\
&t_i\cdot\max\limits_{v\in V(E_i)}\left\{\sum_{j=3}^{2d}\int(v,C^j)(2p)^{t_{C^j}}\right\}
+t_i\cdot\max\limits_{v\in V(E_i)}\left\{\int(v,D)(2p)^{t_{D}}\right\}\leq\frac{t_i}{2}
\end{split}
\end{equation*}

Using the bounds of Inequalities~\eqref{eq:ints} and noting that for $p\leq 1/4$ and
any $j$, $(2p)^{t_{A^j}}\leq (2p)^2$, $(2p)^{t_{B^j}}\leq (2p)^2$ and
$(2p)^{t_{C^j}}\leq (2p)^3$, for any event $E_i$ this equation is implied by:

\begin{equation}
\label{eq:WLL}
(d+1)(2p)^2+d(d-1)(2p)^2+d^2(d-1)(2p)^3+(d-1)(2p)^2= 4d^2p^2+8d^3p^3+4dp^2-8d^2p^3\leq \frac{1}{2}
\end{equation}

Hence, we fix $p=\tfrac{1}{kd}$ where $k$ is a constant to be determined later.
Equation~\eqref{eq:WLL} holds for $k\geq 3.68$ for all $d\geq 3$. In fact, in the following steps of the proof,
we will assume that $k\geq 30$, and so Equation~\eqref{eq:WLL} will be satisfied for any $d\geq 3$.
Since $p\leq\tfrac{1}{4}$ and $\Pr(E_i)\leq p^{t_i}$ by the definition of $t_i$ and the choice of
$S$, the Weighted Local Lemma can be applied.

Let $M_T$ be the number of events of type $T$, where
$T\in\left\{A^j,B^j,C^j,D\right\}$. By Lemma~\ref{lem:WLL} we have:

$$
\Pr(\Egood)\geq\prod_{j=2}^{d+1}\prod_{i=1}^{M_{A^j}}(1-(2 p)^{t_{A^j}})\prod_{j=2}^{2d-2}\prod_{i=1}^{M_{B^j}}(1-(2
 p)^{t_{B^{j}}})\prod_{j=3}^{2d}\prod_{i=1}^{M_{C^j}}(1-(2 p)^{t_{C^j}})\prod_{i=1}^{M_D}(1-(2 p)^{t_D})
$$

Note that $\sum_{j=2}^{d+1}M_{A^j}= n f(G)$ since by definition there exists exactly one event
$A^j_u$ for each vertex of $u\in V'$. Moreover, $\sum_{j=2}^{2d-2}M_{B^j}\leq\frac{nd}{2}$ since
there is exactly one event type $B^j_{u,v}$ for each edge $uv\in E(G)$ and at most $\frac{nd}{2}$
edges in $G$. We also have that $\sum_{j=3}^{2d}M_{C^j}$ is at most the number of
pairs of vertices in $V'$ at distance~$2$ from each other. This is
also at most the number of paths of length~2 with both endpoints in
$V'$, which is upper-bounded by $\frac{n f(G) d
  (d-1)}{2}$. Finally, $M_D$ is the number of pairs of false
twins in $V'$, which is at most $n f(G)\tfrac{d-1}{2}$ by
Proposition~\ref{prop:falsetwins}. Hence, we have:

$$
\Pr(\Egood)  \geq  (1-(2 p)^2)^{nf(G)} (1-(2 p)^2)^{\tfrac{nd}{2}}(1-(2p)^3)^{\tfrac{n
f(G)d(d-1)}{2}}(1-(2p)^2)^{\tfrac{nf(G) (d-1)}{2}}
$$

Using Lemma~\ref{lem:WLL} (more precisely, we use
Equation~\eqref{eq:expbound}) and the fact that $p=\tfrac{1}{k
  d}$, we obtain:

\begin{align*}
\Pr(\Egood)  \geq&  \exp\left\{-(2\log 2)(2p)^2
\left(f(G)+\frac{d}{2}+\frac{f(G)d(d-1)2p}{2}+\frac{f(G)(d-1)}{2}
\right)n\right\}\\
\geq&  \exp\left\{-\frac{4\log
2}{k^2d}\left(\frac{2f(G)}{d}+1+\frac{2f(G)}{k}+f(G)\right)n\right\}
\end{align*}

Since $f(G)\leq 1$ and it is assumed that $k\geq 30$,
one can check that for any $d\geq 3$:\footnote{Note that this bound could be strengthened by assuming $d$ to be large
enough. Indeed, here the term $\tfrac{2f(G)}{d}$ can be as high as
$\tfrac{2}{3}$ when $d=3$ and $f(G)=1$, but can be chosen to be as low
as desired by assuming $d$ to be larger. However we aim at giving a
bound for any $d\geq 3$, hence we use the weaker bound presented here.}
$$
\Pr(\Egood)\ge \exp\left\{-\frac{164\log 2}{15k^2d}n\right\}
$$

The Weighted Local Lemma shows that $S$ has the desired properties
with probability $\Pr(\Egood)>0$, implying that such a set
exists. Note that we have no guarantee on the size of $S$. In fact, if
$S=\emptyset$ then $V(G)\setminus S=V(G)$ is always an identifying
code. Therefore we need to estimate the probability that $|S|$ is
far below its expected size. In order to do this, we use the
Chernoff bound of Theorem~\ref{Chernoff} by putting
$a=\tfrac{nf(G)}{cd}$ where $c$ is a constant to be determined. Let $\Ebig$ be the event that
$|S|-np>-\tfrac{nf(G)}{cd}$. We obtain:

\begin{eqnarray*}
\Pr(\overline{\Ebig}) & \le & \exp\left\{-\frac{\left(\tfrac{nf(G)}{cd}\right)^2}{2pnf(G)}\right\}\\
& = & \exp\left\{-\frac{kf(G)}{2c^2d}n\right\}
\end{eqnarray*}

Now we have:
\begin{eqnarray*}
\Pr(\Egood \text{~and~} \Ebig) & = & 1-\Pr(\overline{\Egood}\text{~or~} \overline{\Ebig})\\
&\ge & 1-\Pr(\overline{\Egood})-\Pr(\overline{\Ebig})\\
& = & 1-(1-\Pr(\Egood))-\Pr(\overline{\Ebig})\\
& = & \Pr(\Egood)-\Pr(\overline{\Ebig})\\
& \ge & \exp\left\{-\frac{164\log 2}{15k^2d}n \right\}-\exp\left\{ -\frac{kf(G)}{2c^2d}n\right\}
\end{eqnarray*}

Thus, $\Pr(\Egood \text{~and~} \Ebig)>0$ if
$c<\tfrac{k^{3/2}f(G)^{1/2}}{\sqrt{\tfrac{328\log 2}{15}}}$. We (arbitrarily) set
$c=\tfrac{k^{3/2}f(G)^{1/2}}{\sqrt{22\log 2}}$ in order to fulfill this condition.

Now we have to check that $\Ebig$ implies that $S$ is still large enough.

\begin{eqnarray}
|S| & \geq & \mathbb{E}(|S|)-\frac{nf(G)}{cd} \nonumber \\
& = & \frac{nf(G)}{kd}-\frac{nf(G)}{cd} \nonumber \\
& = & \left(\frac{1}{k} - \frac{\sqrt{22\log 2}}{k^{3/2}f(G)^{1/2}} \right)
\frac{nf(G)}{d}\label{finaleqn}
\end{eqnarray}

Since $|S|$ must be positive, from Equation~\eqref{finaleqn} we need
$k^{3/2}f(G)^{1/2} > \sqrt{22\log 2}\,k$, which leads to $k =
\tfrac{a_0}{f(G)}$ for $a_0>22\log 2$. Using all our previous assumptions, by derivating the expression of
$|S|$, one can check that $|S|$ is maximized when
$a_0=\tfrac{99\log 2}{2}$. Hence we set $k=\tfrac{99\log 2}{2f(G)}$.

Remark that under this condition and since $f(G)\leq 1$, we have $k\geq 34$
and our assumption following Equation~\eqref{eq:WLL} that $k\geq 30$,
is fulfilled.

Now, with $a_0=\tfrac{99\log 2}{2}$, we can see that:
$$
|S| \geq \left(\frac{1}{k}-\frac{1}{c}\right)\frac{nf(G)}{d}=\frac{a_0^{1/2}-\sqrt{22\log 2}}{a_0^{3/2}}\frac{f(G)^2}{d}n=\frac{2}{297\log
2}\frac{f(G)^2}{d}n\geq\frac{f(G)^2}{103d}n
$$

Hence finally the identifying code $\mathcal{C}=V\setminus S$ has size
$$
|\mathcal{C}| \leq n-\frac{nf(G)^2}{103d}
$$
\end{proof}

Note that for regular graphs, $f(G)=1$ because a forced vertex implies
the existence of two vertices with distinct degrees. We obtain the following result:

\begin{corollary}[Graphs with constant proportion of non-forced vertices]\label{cor:constantf(G)}
   Let $G$ be a twin-free graph
  on $n$ vertices having maximum degree~$d\geq 3$ and
  $f(G)=\tfrac{1}{\alpha}$ for some constant $\alpha\geq 1$. Then $\M(G)\le
  n-\frac{n}{103\alpha^2 d}$. In particular if $G$ is $d$-regular,
  $\M(G)\leq n-\frac{n}{103d}$.
\end{corollary}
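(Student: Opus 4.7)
The plan is to derive this corollary as an immediate specialization of Theorem~\ref{thm:bigthm}, which states the bound $\M(G)\le n-\tfrac{nf(G)^2}{103d}$ for every twin-free graph $G$ of maximum degree $d\ge 3$. First, I would simply substitute the hypothesis $f(G)=1/\alpha$ into this inequality to obtain
\[
\M(G)\le n-\frac{n\cdot (1/\alpha)^2}{103d}=n-\frac{n}{103\alpha^2 d},
\]
which is the first assertion. No additional probabilistic argument is needed, since all the work is already packaged inside Theorem~\ref{thm:bigthm}.

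For the regular case, I would argue that any $d$-regular (twin-free) graph satisfies $f(G)=1$, i.e.\ has no forced vertex, so that the previous step applies with $\alpha=1$ and gives $\M(G)\le n-\tfrac{n}{103d}$. To justify $f(G)=1$, I would verify the observation stated just before the corollary: if $w$ is $uv$-forced, then $N[u]\Delta N[v]=\{w\}$, so $N[u]$ and $N[v]$ differ in exactly one element. A short case check (first ruling out $w\in\{u,v\}$, which would force $u\sim v$ and $u\not\sim v$ simultaneously) shows that $w$ must be adjacent to exactly one of $u,v$, and hence $|N[u]|\ne |N[v]|$, i.e.\ $d(u)\ne d(v)$. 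In a $d$-regular graph this is impossible, so no vertex is forced.

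The proof is essentially a two-line consequence of Theorem~\ref{thm:bigthm} together with the elementary structural observation about forced vertices, so there is no real obstacle; the only point that requires any care is the small case analysis ensuring that a forced vertex in a regular graph truly creates a degree discrepancy.
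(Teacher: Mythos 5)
Your proposal is correct and matches the paper's own derivation: the paper obtains the corollary by plugging $f(G)=\tfrac{1}{\alpha}$ into Theorem~\ref{thm:bigthm}, and justifies the regular case exactly as you do, by noting that a forced vertex would force two vertices of distinct degrees (you merely spell out the short verification that the paper leaves implicit). Nothing is missing.
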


The next proposition will be proved in the next subsection.

\begin{proposition}\label{prop:f(G)}
Let $G$ be a graph on $n$ vertices and of maximum degree $d$. Then
$f(G)\geq\tfrac{1}{d+1}$.
\end{proposition}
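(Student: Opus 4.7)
The plan is to show that $V'$, the set of non-forced vertices of $G$, is a dominating set; then by the standard bound $\gamma(G)\ge n/(d+1)$ (each vertex dominates at most $d+1$ vertices, namely itself and its neighbours), we get $|V'|\ge\gamma(G)\ge n/(d+1)$, i.e. $f(G)\ge 1/(d+1)$.

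First I would unpack the structure of a forced vertex. If $w$ is forced with witness pair $(u,v)$, meaning $N[u]\Delta N[v]=\{w\}$, then a short case analysis rules out $w\in\{u,v\}$ (this would force $u\not\sim v$, contradicting $|N[u]\Delta N[v]|=1$) and shows $u\sim v$; after relabelling so that $|N[u]|>|N[v]|$ we have $N[v]\subsetneq N[u]$ with $N[u]\setminus N[v]=\{w\}$, hence $u\sim w$, $v\not\sim w$, and $\deg(u)=\deg(v)+1$. For each $w\in F$ I fix one such pair $(u_w,v_w)$ with $u_w$ the high-degree end.

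For the domination claim, I would argue by contradiction: suppose some $w_0\in V(G)$ satisfies $N[w_0]\subseteq F$, and pick such a $w_0$ with $\deg(w_0)$ maximum. Since $u_{w_0}\in N(w_0)\subseteq F$, the vertex $u_{w_0}$ is itself forced, so it has its own witness $(u_{u_{w_0}},v_{u_{w_0}})$. I would then chase the witness map $w\mapsto(u_w,v_w)$ through the auxiliary directed graph announced by the authors, using the invariant $\deg(u_w)=\deg(v_w)+1$ and the local structure $N[v_w]=N[u_w]\setminus\{w\}$ to produce a vertex $x\in F$ with $N[x]\subseteq F$ and $\deg(x)>\deg(w_0)$, contradicting the maximality of $\deg(w_0)$.

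The main obstacle is turning this witness-chasing into a rigorous descent argument: a purely local manipulation does not obviously keep the chain inside the bad set $\{x:N[x]\subseteq F\}$ long enough for the degree to strictly increase, since the intermediate witnesses may a priori escape into $V'$. This is exactly the role of the auxiliary digraph technique introduced in the paper, which organises the witness chain globally and lets one track exactly when it can or cannot leave $F$.
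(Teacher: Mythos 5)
Your first reduction is exactly the paper's: show that the set of non-forced vertices is a dominating set, then apply the trivial bound that a dominating set in a graph of maximum degree $d$ has size at least $\tfrac{n}{d+1}$. The difference is in how domination is obtained, and this is where your argument has a genuine gap. The paper gets it in one line from Lemma~\ref{lemma:takeout_vertex} (Bertrand's lemma, already stated and cited in Section~\ref{sec:mainresult}): for every vertex $u$ there is some $v\in N[u]$ such that $V(G)\setminus\{v\}$ is an identifying code, and any such $v$ is necessarily non-forced, since a forced vertex belongs to every identifying code. Hence every closed neighbourhood contains a non-forced vertex, and the non-forced vertices dominate. You instead propose to prove the domination statement from scratch by a maximal-degree descent: assume $N[w_0]\subseteq F$ with $\deg(w_0)$ maximum and ``chase witnesses'' to find $x$ with $N[x]\subseteq F$ and $\deg(x)>\deg(w_0)$. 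You never carry this out, and you explicitly concede the key obstacle yourself: the local facts you establish (that a forced $w$ has a witness pair $u\sim v$ with $N[v]\subsetneq N[u]$, $N[u]\setminus N[v]=\{w\}$ and $\deg(u)=\deg(v)+1$ -- these facts are correct) give you no mechanism to produce another vertex whose \emph{entire} closed neighbourhood is forced, let alone one of strictly larger degree. As written, the central claim of the proof is asserted, not proved.

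A secondary point: you guess that the auxiliary digraph $\mathcal{H}(G)$ is the tool the authors use here, but in the paper that structure is introduced only for Proposition~\ref{prop:forced_planar} (the bounded-clique-number refinement), not for Proposition~\ref{prop:f(G)}. If you want to complete your route without citing Bertrand's lemma, you would essentially have to reprove that lemma (its proof appears in~\cite{FGKNPV10}); the descent you sketch is not a known shortcut, and nothing in your outline indicates how to keep the witness chain inside the set $\{x : N[x]\subseteq F\}$. The clean fix is simply to invoke Lemma~\ref{lemma:takeout_vertex} together with the observation that a vertex removable from $V(G)$ while keeping an identifying code cannot be forced; with that substitution your argument coincides with the paper's.
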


We obtain the following general result:

\begin{corollary}[General case]\label{cor:general}
   Let $G$ be a twin-free graph
  on $n$ vertices having maximum degree~$d\geq 3$. Then $\M(G)\le
  n-\frac{n}{103d(d+1)^2}= n-\frac{n}{\Theta(d^3)}$.
\end{corollary}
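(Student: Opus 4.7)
The plan is to derive the corollary as an immediate consequence of the two preceding results in the excerpt: Theorem~\ref{thm:bigthm} and Proposition~\ref{prop:f(G)}. Theorem~\ref{thm:bigthm} yields, for every twin-free graph $G$ of maximum degree $d\geq 3$, the inequality $\M(G)\leq n-\tfrac{nf(G)^2}{103d}$. This bound is monotone decreasing in $f(G)$, so any uniform lower bound on $f(G)$ translates directly into an upper bound on $\M(G)$. Proposition~\ref{prop:f(G)} supplies exactly such a bound, namely $f(G)\geq\tfrac{1}{d+1}$, valid for every graph of maximum degree $d$ without any further structural assumption.

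Given these two ingredients, the proof is a one-line substitution: squaring the bound from Proposition~\ref{prop:f(G)} gives $f(G)^2\geq\tfrac{1}{(d+1)^2}$, and plugging this into Theorem~\ref{thm:bigthm} yields
\[
\M(G)\;\leq\;n-\frac{n}{103\,d\,(d+1)^2}\;=\;n-\frac{n}{\Theta(d^3)},
\]
which is the desired conclusion. Note that the constant factor in the denominator is explicit (103), and the cubic growth comes solely from the factor $d(d+1)^2$.

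All genuine content is therefore concentrated in Theorem~\ref{thm:bigthm}, already established by the Weighted Local Lemma plus Chernoff argument, and in Proposition~\ref{prop:f(G)}, which is the main obstacle but is invoked as a black box here. The introduction signals how Proposition~\ref{prop:f(G)} will be approached — by building an auxiliary digraph that encodes the forcing relation $N[u]\Delta N[v]=\{w\}$ and using the fact that each such triple $(u,v,w)$ is constrained by the local degree structure of $G$ — so the remaining work in completing the proof of Corollary~\ref{cor:general} lies entirely outside this corollary itself.
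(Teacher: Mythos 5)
Your derivation is exactly the paper's: Corollary~\ref{cor:general} is obtained by substituting the bound $f(G)\ge\tfrac{1}{d+1}$ of Proposition~\ref{prop:f(G)} into Theorem~\ref{thm:bigthm}, so the proposal is correct and takes the same route. (One minor aside: the auxiliary digraph $\mathcal{H}(G)$ you mention is used for the clique-number bound of Proposition~\ref{prop:forced_planar}, whereas Proposition~\ref{prop:f(G)} follows from Lemma~\ref{lemma:takeout_vertex} via a dominating-set argument; this does not affect your proof since you invoke the proposition as a black box.)
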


The next proposition will be proved in the next subsection as well.

\begin{proposition}\label{prop:forced_planar}
Let $G$ be a graph having no $k$-clique. Then there exists a constant
$\gamma(k)$ depending only on $k$, such that
$f(G)\geq\tfrac{1}{\gamma(k)}$.
\end{proposition}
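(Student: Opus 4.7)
The plan is to introduce an auxiliary digraph that records, for every forced vertex, a certificate of its being forced, and to exploit the $K_k$-freeness through a local Tur\'an-type clique argument. For each $w$ in the set $F$ of forced vertices of $G$, pick a witness pair $(u_w,v_w)$ with $u_w\sim v_w$, $u_w\sim w$, $v_w\not\sim w$, and $N[u_w]=N[v_w]\cup\{w\}$; such a pair exists because $w$ is forced, and when several are available one may be chosen arbitrarily. Then define the digraph $D$ on $V(G)$ with arc set $\{(w,u_w):w\in F\}$. Every vertex has out-degree at most $1$ in $D$, so $D$ is a functional digraph and the non-forced vertices of $G$ are exactly the vertices of out-degree $0$. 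In particular, if I can prove that each weakly connected component of $D$ has size at most some function $\gamma(k)$ and contains at least one non-forced vertex, then $G$ has at least $n/\gamma(k)$ non-forced vertices, and the proposition follows.

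The key local lemma I would prove is that the in-degree of every vertex $x$ in $D$ is at most $3(k-2)$. Let $I_x=\{w\in F:u_w=x\}$. The vertices $\{v_w:w\in I_x\}$ all lie in $N(x)$ and are pairwise distinct, since $N[v_w]=N[x]\setminus\{w\}$ recovers $w$. A direct computation from that identity shows that, for distinct $w,w'\in I_x$, one has $v_w\sim v_{w'}$ unless either $v_w=w'$ or $v_{w'}=w$. Since each $w\in I_x$ contributes at most one such coincidence, the induced subgraph on $\{v_w:w\in I_x\}$ has at most $|I_x|$ non-edges. Applying the Caro--Wei bound to its complement yields a clique of size at least $|I_x|/3$ inside $\{v_w:w\in I_x\}$, and adjoining $x$ (adjacent to every $v_w$) gives a clique of size $|I_x|/3+1$ in $G$; the $K_k$-freeness of $G$ then forces $|I_x|\leq 3(k-2)$.

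The remaining step, which I expect to be the main obstacle, is to turn this in-degree bound into a uniform bound $\gamma(k)$ on the size of every weakly connected component of $D$ and to ensure every component contains a non-forced vertex, since a priori a single component of a functional digraph with bounded in-degree could span most of $V(G)$. My proposed attack is in two stages. First, bound the length of every directed cycle $w_1\to\cdots\to w_\ell\to w_1$ in $D$: such a cycle carries rigid identities $N[w_{i+1}]=N[v_{w_i}]\cup\{w_i\}$, and in the cleanest case, where the arcs share a common head $x$, the vertices $v_{w_1},\ldots,v_{w_\ell}$ form a clique-minus-matching in $G$ that together with $x$ produces a $K_k$ once $\ell$ is large enough; the general case should reduce to this via the same Tur\'an-style extraction. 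Second, bound the depth of the tree parts of each component by iterating the in-degree inequality $3(k-2)$: after a constant-in-$k$ number of levels one either reaches a non-forced vertex or accumulates enough almost-twin relations to exhibit a $K_k$, which is forbidden. Combining the two stages caps each component of $D$ by $\gamma(k)$ vertices and simultaneously guarantees a non-forced vertex in every component, yielding $f(G)\geq 1/\gamma(k)$.
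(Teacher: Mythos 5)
Your in-degree lemma is correct and is essentially the same computation the paper does for its auxiliary digraph (the bound that a vertex has $O(k)$ in-neighbours in the Hasse-type digraph $\mathcal{H}(G)$ of closed-neighbourhood containment), but the step you defer --- that every weakly connected component of $D$ has size at most $\gamma(k)$ \emph{and contains a non-forced vertex} --- is precisely the heart of the matter, and the second half of it is false. Concretely, take $V=\{x,w,w',a,b\}$ with edges $xw,xw',xa,xb,ww',w'a,wb$. This graph is twin-free and $K_4$-free, and $N[w']=N[a]\cup\{w\}$, $N[w]=N[b]\cup\{w'\}$, $N[x]=N[w]\cup\{a\}=N[w']\cup\{b\}$, so the forced vertices are exactly $w,w',a,b$, and the only witness pair for $w$ is $(a,w')$ and for $w'$ is $(b,w)$. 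Hence in your digraph $D$ the arcs at $w,w'$ are forced to be $w\to w'$ and $w'\to w$, no other arc enters $\{w,w'\}$, and $\{w,w'\}$ is a weak component consisting entirely of forced vertices. This is not an accident: one can prove (it is Claim~\ref{clm:f(G)-2} of the paper, translated to your setting) that whenever the head $u_w$ of an arc is itself forced, its own witness head must be $w$; so all directed cycles of $D$ are $2$-cycles that no other arc can enter --- which \emph{produces} such all-forced components rather than excluding them. Your second-stage sketch also misfires on a smaller point: arcs of a directed cycle in a functional digraph never share a common head, so the ``common head $x$'' reduction for bounding cycle lengths does not apply.

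Because of this, the charging argument cannot stay inside $D$: for an all-forced cluster you must locate a non-forced vertex of $G$ \emph{outside} the component (in the example above, $x$, which is adjacent to both $w$ and $w'$) and, crucially, bound how many disjoint all-forced clusters can be charged to the same non-forced vertex. That multiplicity bound is the genuinely hard part, and it is where the $K_k$-freeness is really used in the paper: its Lemma~\ref{clm:f(G)-5} shows that a forced sink whose predecessors are all forced admits a non-forced vertex $w(s)$ with $F(s)\subseteq N_G[w(s)]$, and that $\ell$ such sinks sharing the same $w(s)$ yield a clique on $\ell+2$ vertices, whence $\ell+2<k$; combined with the bounded size of the sets $F(\cdot)$ this gives the covering by at most $\gamma(k)$ vertices per non-forced vertex. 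Your proposal contains no analogue of this step --- the phrase ``accumulates enough almost-twin relations to exhibit a $K_k$'' is exactly the missing argument --- so as written the proof has a genuine gap at its main step, even though the auxiliary-digraph framework and the in-degree bound are in the right spirit.
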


This leads to the following extension of
Corollary~\ref{cor:constantf(G)}, where $c(k)\leq 103\gamma(k)^2$:

\begin{corollary}[Graphs with bounded clique number]\label{cor:boundedcliques}
  There exists an integer $d_0$ such that for each twin-free graph $G$
  on $n$ vertices having maximum degree~$d\geq d_0$ and clique number
  smaller than~$k$, $\M(G)\le n-\frac{n}{c(k)d}$ for some constant
  $c(k)$ depending only on $k$. In particular this applies to
  triangle-free graphs, planar graphs, or more generally, graphs of
  bounded genus.
\end{corollary}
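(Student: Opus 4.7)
The plan is a one-line combination of the preceding theorem with the forthcoming proposition on non-forced vertices. By hypothesis $G$ has clique number strictly less than $k$, so Proposition~\ref{prop:forced_planar} supplies a constant $\gamma(k)$, depending only on $k$, such that $f(G)\ge 1/\gamma(k)$. Substituting this lower bound into Theorem~\ref{thm:bigthm}---which applies to every twin-free graph of maximum degree $d\ge 3$---yields
\[
\M(G) \;\le\; n - \frac{n\,f(G)^2}{103\,d} \;\le\; n - \frac{n}{103\,\gamma(k)^2\,d}.
\]
Setting $c(k):=103\,\gamma(k)^2$ and $d_0:=3$ then gives the bound exactly as stated.

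For the three advertised applications I would record the excluded clique sizes explicitly: triangle-free graphs exclude $K_3$, so the corollary applies with $k=3$; by Kuratowski's theorem planar graphs exclude $K_5$, so $k=5$ works; and graphs of bounded (orientable or non-orientable) genus exclude $K_t$ for some $t$ depending only on the genus, via Heawood-type bounds on the clique number of embeddable graphs, so the corollary applies with $k=t+1$. In each case the resulting constant $c(k)$ depends only on the forbidden clique size, and hence only on the structural parameter of the family.

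Since the proof of the corollary is a direct substitution, it carries no genuine difficulty of its own. The real content sits entirely in the two supporting statements: the probabilistic Theorem~\ref{thm:bigthm} (already proved above using the Weighted Local Lemma and the Chernoff bound) and Proposition~\ref{prop:forced_planar} (promised for the next subsection), which must bound the proportion of forced vertices in $K_k$-free graphs by a function of $k$ alone. Establishing that proposition---presumably via a structural or counting argument on the auxiliary directed graph of forcing triples described in the introduction---is the only genuine obstacle, and the present corollary then follows in a single line.
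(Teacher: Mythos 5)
Your proposal is correct and is exactly the paper's (implicit) argument: the corollary is stated right after the remark that $c(k)\leq 103\gamma(k)^2$, i.e.\ it is obtained by plugging the bound $f(G)\geq 1/\gamma(k)$ from Proposition~\ref{prop:forced_planar} into Theorem~\ref{thm:bigthm}, just as you do, with the applications following from the excluded-clique observations you list. Taking $d_0=3$ is fine since Theorem~\ref{thm:bigthm} already covers all $d\geq 3$.
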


We remark here that the previous corollaries support
Conjecture~\ref{conj}. They also lead us to think that the difficulty of
the problem lies in forced vertices.

\subsection{Bounding the number of non-forced vertices: proofs}

In this section, we prove the lower bounds for function $f(G)$ of the
statement of Theorem~\ref{thm:bigthm}. 

The following lemma was first proved in~\cite{Ber01}, and a proof can be
found in~\cite{FGKNPV10} (as~\cite{Ber01} is not accessible).

\begin{lemma}[\cite{Ber01}]\label{lemma:takeout_vertex}
  If $G$ is a finite twin-free graph without isolated vertices, then
  for every vertex $u$ of $G$, there is a vertex $v \in N[u]$ such
  that $V(G)\setminus\{v\}$ is an identifying code of $G$.
\end{lemma}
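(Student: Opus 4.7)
The first step is to reduce the statement to a purely structural question about forced vertices. Observe that for a vertex $v$, the set $V(G)\setminus\{v\}$ is an identifying code of $G$ if and only if $v$ is non-isolated and non-forced: domination holds as every $w\ne v$ dominates itself and $v$ is reached through any neighbour; separation fails at a pair $\{a,b\}$ precisely when $N[a]\Delta N[b]\subseteq\{v\}$, which (by twin-freeness) means $N[a]\Delta N[b]=\{v\}$, i.e.\ $v$ is $ab$-forced. Under the hypothesis that $G$ has no isolated vertices, the lemma is therefore equivalent to showing that every closed neighbourhood $N[u]$ contains at least one non-forced vertex; equivalently, the set of non-forced vertices is a dominating set of $G$.

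The plan is then to argue by contradiction: assume that every $v\in N[u]$ is forced, and for each such $v$ fix a witness pair $(x_v,y_v)$ with $N[x_v]\Delta N[y_v]=\{v\}$ and $v\in N[x_v]\setminus N[y_v]$, so $N[x_v]=N[y_v]\sqcup\{v\}$. A short analysis of this witness configuration is the first key step: $x_v\ne v$ (otherwise $y_v\in N[v]$ would force $v\in N[y_v]$, a contradiction), hence $x_v\sim v$; moreover $y_v\sim x_v$ (since $y_v\in N[y_v]\subseteq N[x_v]$ and $y_v\ne x_v$), and $\deg(x_v)=\deg(y_v)+1$. Writing $S_v=N(x_v)\cap N(y_v)$, one gets the clean picture $N(x_v)=\{y_v,v\}\sqcup S_v$ and $N(y_v)=\{x_v\}\sqcup S_v$: the pair $\{x_v,y_v\}$ is an edge whose endpoints would become twins in $G-v$.

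Next, I would select $v^\ast\in N[u]$ extremally, for instance maximizing $|N[v^\ast]|$ over $N[u]$, and try to locate $x_{v^\ast}$ and $y_{v^\ast}$ with respect to $N[u]$. Since $x_{v^\ast}\in N(v^\ast)$, the vertex $x_{v^\ast}$ lies within distance $2$ of $u$. When $x_{v^\ast}\in N[u]$, the extremal choice gives $|N[x_{v^\ast}]|\le|N[v^\ast]|$; combining this with $|N[x_{v^\ast}]|=|N[y_{v^\ast}]|+1$ and the near-twin structure above already constrains the configuration strongly, and in most subcases a short chase using the witness pair of $u$ (which is itself forced by assumption, and whose $x_u$ must lie in $N(u)\subseteq N[u]$) produces two vertices with identical closed neighbourhoods, contradicting twin-freeness.

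The main obstacle is the remaining case in which $x_{v^\ast}$ lies at distance exactly $2$ from $u$, so that the extremal choice within $N[u]$ gives no direct control on $|N[x_{v^\ast}]|$. In that subcase I would simultaneously exploit the witness pair $(x_u,y_u)$ of $u$ and the pair $(x_{v^\ast},y_{v^\ast})$ of $v^\ast$, using the common-neighbour sets $S_u$ and $S_{v^\ast}$ to compare $N[y_u]$ with $N[y_{v^\ast}]$ (or with $N[x_{v^\ast}]$); the goal is to show that these sets must coincide, again contradicting twin-freeness. This bookkeeping step is the technical heart of the argument and is where Bertrand's original proof in \cite{Ber01} (reproduced in \cite{FGKNPV10}) does the actual work.
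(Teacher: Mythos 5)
Your reduction is correct and is indeed the right way to read the statement: since $G$ has no isolated vertices and is twin-free, $V(G)\setminus\{v\}$ is an identifying code exactly when $v$ is non-forced, so the lemma is equivalent to saying that the non-forced vertices form a dominating set (this is precisely how the paper exploits it in Proposition~\ref{prop:f(G)}). Your analysis of a witness pair ($x_v\sim v$, $y_v\sim x_v$, $N[x_v]=N[y_v]\sqcup\{v\}$, so $x_v,y_v$ become twins in $G-v$) is also accurate. But these are the easy preliminaries; the actual content of the lemma is the derivation of a contradiction from ``every vertex of $N[u]$ is forced,'' and your proposal never carries it out. The case $x_{v^\ast}\in N[u]$ is settled only by the assertion that ``in most subcases a short chase \dots produces two vertices with identical closed neighbourhoods'' (no chase is exhibited, and the inequality $|N[x_{v^\ast}]|\le|N[v^\ast]|$ by itself yields no contradiction), and the case where $x_{v^\ast}$ is at distance $2$ from $u$ is explicitly left open, with the text deferring to Bertrand's original argument. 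Stating that the missing step is where \cite{Ber01} ``does the actual work'' is a citation, not a proof, so the proposal has a genuine gap at its core.

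For calibration: the paper itself does not reprove this lemma (it quotes it from \cite{Ber01}, with a proof reproduced in \cite{FGKNPV10}), so there is no internal proof to compare against; but a complete argument really does require a structural analysis of chains of forced vertices, of the kind the paper later develops for other purposes in Claims~\ref{clm:f(G)-2}--\ref{clm:f(G)-4}. One workable route: assuming all of $N[u]$ is forced, start from a pair with $N[x_1]=N[y_1]\sqcup\{u\}$, and use the fact that $x_1\in N[u]$ is itself forced to conclude (as in Claim~\ref{clm:f(G)-2}) that its witness head must be $u$ itself, i.e.\ $N[u]=N[y_2]\sqcup\{x_1\}$; iterating with Claims~\ref{clm:f(G)-3} and~\ref{clm:f(G)-4} produces vertices of $N[u]$ with strictly increasing closed neighbourhoods, contradicting finiteness. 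Some argument of this nature (or Bertrand's original case analysis) is exactly what your write-up would need to supply to be a proof rather than a plan.
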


We recall the statement of Proposition~\ref{prop:f(G)}:

\begin{proposition2}
Let $G$ be a graph on $n$ vertices and of maximum degree $d$. Then
$f(G)\geq\tfrac{1}{d+1}$.
\end{proposition2}
\begin{proof}
Observe that a vertex $v$ of $G$ is not forced only if $V(G)\setminus\{v\}$
is an identifying code of $G$. Hence, by
Lemma~\ref{lemma:takeout_vertex}, the set $S$ of non-forced vertices is a
dominating set of $G$, and thus $|S|\geq\tfrac{n}{d+1}$.
\end{proof}

Note that Proposition~\ref{prop:f(G)} is tight. Indeed, consider the
graph $A_k$ on $2k$ vertices defined in~\cite{FGKNPV10} as follows:
$V(A_k)=\{x_1,\ldots, x_{2k}\}$ and $E(A_k)=\{x_ix_j, |i-j|\leq
k-1\}$. $A_k$ can be seen as the $(k-1)$-th power of the path
$P_{2k}$. In the graph $A_k$ with an additional universal vertex $x$
(i.e. $x$ is adjacent to all vertices of $A_k$), one can
check that all vertices but $x$ are forced. This graph has $n=2k+1$
vertices, maximum degree~$2k$ and exactly
$1=\tfrac{n}{d+1}$ non-forced vertex. Taking all forced vertices
gives a minimum identifying code of this graph.

However, note that since for a fixed even value of $d$, we know only one
such graph, it is not enough to give a counterexample to
Conjecture~\ref{conj}. Indeed in this case the size of the code is 
$n-1= n-\frac{n}{d+1}=n-\tfrac{n}{d}+\frac{1}{n-1}= n-\tfrac{n}{d}+1$. So we ask the following
question:

\begin{question}
Does there exist a value of $d$ such that for an infinite number of values of $n$ there exists a
graph on $n$ vertices of maximum degree $d$ having exactly $\tfrac{n}{d+1}$ non-forced vertices?
\end{question}

Answering this question in positive would provide counterexamples to
Conjecture~\ref{conj}.  Note that for the similar question where we
replace $d+1$ by $d$, the answer is positive by
Construction~\ref{constr1} of Section~\ref{apdx:constructions}. For
any $d$, this construction provides arbitrarily large graphs having exactly
$\tfrac{n}{d}$ non-forced vertices.

Observe that graph $A_k$ contains two cliques of $k$ vertices. In
fact, we can improve the bound of Proposition~\ref{prop:f(G)} for
graphs having no large cliques. Let us first introduce an
auxiliary structure that will be needed in order to prove this result.

Let $G$ be a twin-free graph. We define a partial order $\preceq$ over
the set of vertices of $G$ such that $u\preceq v$ if $N[u]\subseteq
N[v]$. We construct an oriented graph $\mathcal{H}(G)$ on $V(G)$ as a
subgraph of the Hasse diagram of poset $(V(G),\preceq)$. The arc set of
$\mathcal{H}(G)$ is the set of all arcs $\overrightarrow{uv}$ where
there exists some vertex $x$ such that $N[v]=N[u]\cup\{x\}$. 
Then $x$ is $uv$-forced, and we note $x=f(\overrightarrow{uv})$.  For
a vertex $v$ of $V(G)$, we define the set $F(v)$ as the union of $v$
itself and the set of all predecessors and successors of $v$ in
$\mathcal{H}(G)$. Observe that $\mathcal{H}(G)$ has no directed cycle since it represents a partial
order, and thus predecessors and successors are well-defined.

\begin{lemma}\label{lemma:bounded-F}
Let $G$ be a graph having no $k$-clique. Then for each vertex $u$,
$|F(u)|\leq \beta(k)$, where $\beta(k)$ is a function depending only
on $k$.
\end{lemma}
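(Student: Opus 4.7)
The plan is to bound both the clique number and the independence number of $G[F(v)]$ in terms of $k$, and then invoke Ramsey's theorem.

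First, I would observe that every element of $F(v)\setminus\{v\}$ is comparable to $v$ in $(V(G),\preceq)$: indeed, any directed path in $\mathcal{H}(G)$ corresponds to a strictly increasing chain of closed neighbourhoods. Combined with the elementary fact that if $u\preceq v$ and $u\neq v$ then $u\in N[u]\subseteq N[v]$ and hence $u\sim v$, this gives $F(v)\setminus\{v\}\subseteq N(v)$. Consequently, for any clique $C$ of $G[F(v)]$ the set $C\cup\{v\}$ is a clique of $G$, so the $K_k$-free assumption yields $\omega(G[F(v)])\leq k-1$.

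The harder bound is $\alpha(G[F(v)])\leq k-1$. Let $P^*(v)$ and $S^*(v)$ denote respectively the sets of vertices that reach $v$ and are reachable from $v$ via directed paths in $\mathcal{H}(G)$. If $I$ is an independent set in $G[F(v)]$ with $|I|\geq 2$, then $v\notin I$ (as $v$ is adjacent to every vertex of $F(v)\setminus\{v\}$), so $I\subseteq P^*(v)\cup S^*(v)$. A pair with one element in $P^*(v)$ and one in $S^*(v)$ would be comparable in the poset (via $v$), and hence adjacent, so $I$ lies entirely in $P^*(v)$ or entirely in $S^*(v)$.

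Suppose $I\subseteq P^*(v)$ (the other case is symmetric), fix $u'\in I$, and pick a directed path $u'=w_0\to w_1\to\cdots\to w_r=v$ in $\mathcal{H}(G)$. A crucial sub-observation is that any chain in $(V(G),\preceq)$ induces a clique in $G$, since $u_i\preceq u_j$ with $u_i\neq u_j$ forces $u_i\in N[u_j]$; so $\{w_0,\ldots,w_r\}$ is a clique of size $r+1$, and the $K_k$-free hypothesis gives $r\leq k-2$. Each arc $\overrightarrow{w_iw_{i+1}}$ contributes a distinct forced vertex $x_i=f(\overrightarrow{w_iw_{i+1}})$, and collectively $\{x_0,\ldots,x_{r-1}\}=N[v]\setminus N[u']$. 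For any $u\in I\setminus\{u'\}$, we have $u\in N(v)\subseteq N[v]$ while $u\notin N[u']$ (as $u\not\sim u'$ and $u\neq u'$), hence $u\in N[v]\setminus N[u']$. Therefore $|I|-1\leq r\leq k-2$, giving $|I|\leq k-1$.

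Ramsey's theorem then yields $|F(v)|<R(k,k)$, so $\beta(k)=R(k,k)-1$ suffices. The main obstacle is the independence bound: the key realization is that the other elements of an independent set $I\subseteq P^*(v)$ must all appear as forced vertices on a single directed path from any fixed $u'\in I$ to $v$, which caps $|I|-1$ by the path length, which is in turn controlled by $K_k$-freeness via the chain-clique correspondence.
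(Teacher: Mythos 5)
Your proof is correct, but it takes a genuinely different route from the paper's. The paper works entirely inside the digraph $\mathcal{H}(G)$: it shows that the in-degree is at most $2k-3$ (two non-adjacent in-neighbours of a vertex must each be the unique vertex missing from the other's closed neighbourhood, so the in-neighbours induce a clique minus a matching), that the out-degree is at most $k-2$, and that every directed chain has at most $k-1$ vertices (a chain is a clique of $G$); counting a bounded-branching, bounded-depth tree then gives $\beta(k)=\sum_{i=0}^{k-2}(2k-3)^i$. You instead bound the clique and independence numbers of the induced subgraph $G[F(v)]$ and invoke Ramsey. The clique bound, via $F(v)\setminus\{v\}\subseteq N(v)$, is close in spirit to the paper's chain--clique correspondence, but your independence bound is new relative to the paper: an independent set among the predecessors of $v$, minus its anchor $u'$, injects into $N[v]\setminus N[u']$, whose size equals the length of a directed path from $u'$ to $v$ in $\mathcal{H}(G)$ and is therefore at most $k-2$. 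This buys a quantitatively better constant, $\beta(k)=R(k,k)-1\leq\binom{2k-2}{k-1}$, versus the paper's roughly $(2k-3)^{k-2}$, though of course only "some function of $k$" is needed. One small correction: the case $I\subseteq S^*(v)$ is not handled by symmetry, since transcribing your argument there would require $u\notin N[v]$ for $u\in I\setminus\{u'\}$, which is false because every element of $F(v)\setminus\{v\}$ lies in $N(v)$. Fortunately that case is immediate: any two successors $a,b$ of $v$ satisfy $a\in N[v]\subseteq N[b]$, so $S^*(v)$ induces a clique of $G$ and an independent set meets it in at most one vertex. With that one-line patch, your bound $\alpha(G[F(v)])\leq k-1$, and hence the whole argument, stands.
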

\begin{proof}
First of all, we prove that the maximum in-degree of $\mathcal{H}(G)$
is at most $2k-3$, and its out-degree is at most~$k-2$. 

Let $u$ be a vertex of $G$. Suppose $u$ has $2k-2$ in-neighbours in $\mathcal{H}(G)$. Since for each
in-neighbour $v$ of $u$, $|N[u]\Delta N[v]|=1$ in $G$, each of them is non-adjacent in $G$ to at
most one of the other in-neighbours (in the worst case the in-neighbours of $u$ induce in $G$ a
clique of $2k-2$ vertices minus the edges of a perfect matching). Hence they induce a clique of size
at least~$k-1$ in $G$. Together with vertex $u$, they form a $k$-clique in $G$, a contradiction.

Now suppose $u$ has $k-1$ out-neighbours in
$\mathcal{H}(G)$. Since for each out-neighbour $v$ of $u$ in
$\mathcal{H}(G)$, $N[u]\subseteq N[v]$ in $G$, $u$ and its out-neighbours form a $k$-clique in $G$,
a contradiction.

Now, consider the subgraph of $\mathcal{H}(G)$ induced by $F(u)$. We
claim that the longest directed chain in this subgraph has at most
$k-1$ vertices. Indeed, all the vertices of such a chain are pairwise
adjacent in $G$. Since $G$ is assumed not to have any $k$-cliques, there are at most $k-1$
vertices in a directed chain.

Finally, we obtain that $F(u)$ has size at most
$\beta(k)=\sum_{i=0}^{k-2}(2k-3)^i$ and the claim of the lemma follows.
\end{proof}

We now need to prove a few additional claims regarding the structure
of $\mathcal{H}(G)$. In the following claims, we suppose that $G$ is a
twin-free graph.

\begin{claim}\label{clm:f(G)-2}
  Let $s$ be a forced vertex in $G$ with $s=f(\overrightarrow{uv})$
  for some vertices $u$ and $v$. If $t$ is an in-neighbour of $s$ in
  $\mathcal{H}(G)$, then $v=f(\overrightarrow{ts})$. Moreover if $v$ is forced with
  $v=f(\overrightarrow{xy})$, then necessarily $y=s$.
\end{claim}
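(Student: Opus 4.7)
The plan is to prove both parts by carefully unpacking the definition of $\mathcal{H}(G)$ and using that $G$ is twin-free. A key reusable fact I would isolate first: whenever $N[b]=N[a]\cup\{c\}$ is a Hasse cover (so $c=f(\overrightarrow{ab})$), twin-freeness forces $a,b,c$ to be pairwise distinct with $a\sim b$, $b\sim c$, and $a\not\sim c$; moreover, for every vertex $w\notin\{a,b,c\}$ one has $a\sim w\iff b\sim w$ (since $w\in N[a]\iff w\in N[b]$). Call this the \emph{neighbourhood-equality property}. Note also that $s\notin N[u]\iff u\notin N[s]$ by symmetry of adjacency together with $s\ne u$.

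For the first assertion, set $z:=f(\overrightarrow{ts})$, so $N[s]=N[t]\cup\{z\}$. Since $s\sim v$ we have $v\in N[s]$, hence either $v=z$ (which is what we want) or $v\in N[t]$. To exclude the latter, I would use that $u\notin N[s]$. If $v=t$, then $u\in N[v]=N[t]\subseteq N[s]$, a contradiction. If instead $v\sim t$ with $v\ne t$, then $t\in N[v]=N[u]\cup\{s\}$, and since $t\ne s$ this forces $t\in N[u]$, so $u\in N[t]\subseteq N[s]$, again a contradiction. Thus $z=v$.

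For the second assertion, from $y\sim v$ we have $y\in N[v]=N[u]\cup\{s\}$, so it suffices to rule out $y\in N[u]$. If $y=u$, then $x\in N[x]\subseteq N[y]=N[u]\subseteq N[v]$ contradicts $v\not\sim x$ (with $v\ne x$). If instead $y\sim u$ with $y\ne u$, I would first check $x\notin\{u,v,s\}$: $x\ne v$ is given; $x\ne s$ holds because $v\sim s$ but $v\not\sim x$; and $x=u$ is impossible since it would give $N[y]=N[u]\cup\{v\}=N[u]$ (as $v\in N[u]$), making $y$ and $u$ twins. With $x$ outside $\{u,v,s\}$, apply the neighbourhood-equality property to the cover $\overrightarrow{xy}$ at $w=u$ (valid since $u\notin\{x,y,v\}$) to deduce $x\sim u$ from $y\sim u$; then apply it to $\overrightarrow{uv}$ at $w=x$ to deduce $v\sim x$ from $u\sim x$, contradicting $v\not\sim x$. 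Hence $y=s$.

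The main obstacle is taming the case analysis: a naive approach branches into many sub-cases. Introducing the neighbourhood-equality property up front is what collapses the second part into two clean applications of the same principle, and ensures the first part reduces to checking that $u$ cannot sneak into $N[s]$.
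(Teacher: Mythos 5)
Your proof is correct and follows essentially the same route as the paper: both parts rest on the facts that $N[v]\setminus N[u]=\{s\}$ and $N[s]\setminus N[t]$ is a singleton, and on showing that the relevant vertex cannot lie in the smaller closed neighbourhood. Your write-up merely organizes the same implications by contradiction and is somewhat more explicit about the degenerate equality cases (such as $v=t$, $y=u$, $x=u$) that the paper's adjacency phrasing passes over silently.
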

\begin{proof}
  For the first implication, suppose $s$ has an in-neighbour $t$ in
  $\mathcal{H}(G)$. An illustration is provided in
  Figure~\ref{fig:claimA}. Since $u\not\sim s$, then $u\not\sim
  t$. Moreover $v\not\sim t$ since $s=f(\overrightarrow{uv})$. Since
  $s\sim v$ the claim follows. For the other implication, suppose
  there exist two vertices $x,y$ such that
  $v=f(\overrightarrow{xy})$. Hence $y\sim v$ but $x\not\sim
  v$. Therefore $u\not\sim x$ (otherwise $v$ would be adjacent to $x$
  too) and hence $u\not\sim y$. Now the only vertex adjacent to $v$
  but not to $u$ is $s$, so $y=s$.
\end{proof}

\begin{figure}[!htpb]
\centering
\scalebox{0.8}{\begin{tikzpicture}[join=bevel,inner sep=0.6mm,line width=0.8pt,scale=1]
\path (0,0) node[draw,shape=circle,fill] (u) {};
  \path (u)+(-0.3,-0.3) node {$u$};
\path (2,0) node[draw,shape=circle,fill] (t) {};
  \path (t)+(0.3,-0.3) node {$t$};
\path (2,2) node[draw,shape=circle,fill] (s) {};
  \path (s)+(0.8,0.3) node {$s=f(\overrightarrow{uv})$};
\path (0,2) node[draw,shape=circle,fill] (v) {};
  \path (v)+(-0.8,0.3) node {$v=f(\overrightarrow{ts})$};
\draw[->, line width=1.2pt] (u) -- (v);
\draw[->, line width=1.2pt] (t) -- (s);
\draw[blue, line width=0.5pt] (v) -- (s);
\draw[-,dashed, blue, line width=0.5pt] (v) -- (t)
                       (u) -- (s);
\end{tikzpicture}}
\caption{The situation of Claim~\ref{clm:f(G)-2}. Arcs belong to
  $\mathcal{H}(G)$. Full thin edges belong to $G$ only, dashed edges
  are non-edges in $G$.}
\label{fig:claimA}
\end{figure}
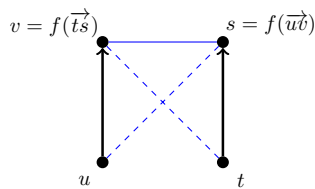

\begin{claim}\label{clm:f(G)-3}
Let $s$ be a forced vertex in $G$ with $s=f(\overrightarrow{uv})$ for
some vertices $u$ and $v$. Then $s$ has at most one in-neighbour in
$\mathcal{H}(G)$.
\end{claim}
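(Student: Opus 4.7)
The plan is to argue by contradiction. Suppose for the sake of contradiction that $s$ has two distinct in-neighbours $t_1, t_2$ in $\mathcal{H}(G)$. First I would invoke Claim \ref{clm:f(G)-2} applied to the arcs $\overrightarrow{t_1 s}$ and $\overrightarrow{t_2 s}$: since $s=f(\overrightarrow{uv})$, the claim forces $v=f(\overrightarrow{t_i s})$ for both $i=1,2$.

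Next I would unpack what the identity $v=f(\overrightarrow{t_i s})$ means at the level of closed neighbourhoods. By the very definition of an arc of $\mathcal{H}(G)$ and of the forced vertex labelling it, we have $N[t_i]\subseteq N[s]$ together with $N[s]=N[t_i]\cup\{v\}$, where $v\notin N[t_i]$. Consequently $N[t_i]=N[s]\setminus\{v\}$ for $i=1,2$, and therefore $N[t_1]=N[t_2]$. But $t_1$ and $t_2$ are distinct vertices, so this equality would make them twins in $G$, contradicting the standing assumption that $G$ is twin-free.

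I do not expect any real obstacle here: the work of the proof has already been done in Claim \ref{clm:f(G)-2}, which fixes the ``label'' of any in-arc into $s$ to be $v$. The only thing to verify is that this forces distinct in-neighbours of $s$ to have identical closed neighbourhoods, which is immediate from the definition of the Hasse-type diagram $\mathcal{H}(G)$ and the forcing vertex. Twin-freeness then closes the argument.
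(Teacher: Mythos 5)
Your proof is correct and follows exactly the paper's argument: apply Claim~\ref{clm:f(G)-2} to both in-arcs to get that $v$ is both $t_1s$-forced and $t_2s$-forced, deduce $N[t_1]=N[s]\setminus\{v\}=N[t_2]$, and conclude that $t_1,t_2$ would be twins, contradicting twin-freeness. No gaps; this matches the paper's proof of the claim.
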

\begin{proof}
Suppose $s$ has two distinct in-neighbours $t$ and $t'$ in
$\mathcal{H}(G)$ (see Figure~\ref{fig:claimB} for an illustration). By
Claim~\ref{clm:f(G)-2}, $v$ is both $ts$-forced and $t's$-forced. But
then $N[t]=N[s]\setminus\{v\}=N[t']$. Then $t$ and $t'$ are twins, a
contradiction since $G$ is twin-free.
\end{proof}

\begin{figure}[!htpb]
\centering
\scalebox{0.8}{\begin{tikzpicture}[join=bevel,inner sep=0.6mm,line width=0.8pt,scale=1]
\path (0,0) node[draw,shape=circle,fill] (u) {};
  \path (u)+(-0.3,-0.3) node {$u$};
\path (1.5,0) node[draw,shape=circle,fill] (t) {};
  \path (t)+(0.3,-0.3) node {$t$};
\path (2.5,0) node[draw,shape=circle,fill] (t2) {};
  \path (t2)+(0.3,-0.3) node {$t'$};
\path (2,2) node[draw,shape=circle,fill] (s) {};
  \path (s)+(0.8,0.3) node {$s=f(\overrightarrow{uv})$};
\path (0,2) node[draw,shape=circle,fill] (v) {};
  \path (v)+(-0.3,0.3) node {$v$};
\draw[->, line width=1.2pt] (u) -- (v);
\draw[->, line width=1.2pt] (t) -- (s);
\draw[->, line width=1.2pt] (t2) -- (s);
\draw[blue, line width=0.5pt] (v) -- (s);
\draw[-,dashed, blue, line width=0.5pt] (v) -- (t)
                      (v) -- (t2)
                      (u) -- (s);
\end{tikzpicture}}
\caption{The situation of Claim~\ref{clm:f(G)-3}. Arcs belong to
  $\mathcal{H}(G)$. Full thin edges belong to $G$ only, dashed edges
  are non-edges in $G$.}
\label{fig:claimB}
\end{figure}
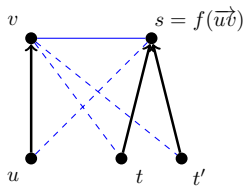

\begin{claim}\label{clm:f(G)-4}
Let $s$ be a forced vertex in $G$ with $s=f(\overrightarrow{uv})$, and
let $t$ be a forced in-neighbour of $s$ in $\mathcal{H}(G)$ with
$t=f(\overrightarrow{xy})$ for some vertices $u,v,x,y$. Then $x=v$.
\end{claim}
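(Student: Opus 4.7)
My plan is to pin $x$ down as an element of $N[s]\setminus N[t]$. The hypothesis that $t$ is an in-neighbour of $s$ in $\mathcal{H}(G)$, combined with $s=f(\overrightarrow{uv})$, gives $v=f(\overrightarrow{ts})$ by Claim~\ref{clm:f(G)-2}, so $N[s]=N[t]\cup\{v\}$. Since $t=f(\overrightarrow{xy})$ already guarantees $x\neq t$ and $x\not\sim t$, once $x\in N[s]$ is established, $x\in N[s]\setminus N[t]=\{v\}$ will yield $x=v$ at once.

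To show $x\in N[s]$, I would pass through $y$. From $N[y]=N[x]\cup\{t\}$ I read off $x\sim y$, $y\sim t$, and $t\not\sim x$ (with $x,y,t$ pairwise distinct). Combining $y\sim t$ with $N[t]\subseteq N[s]$ gives $y\in N[s]$, so either $y=s$ or $y\sim s$. In the first subcase, $x\sim y=s$ yields $x\sim s$ and so $x\in N[s]$. In the second subcase, $s\in N(y)\subseteq N[y]=N[x]\cup\{t\}$, and since $s\neq t$ this forces $s\in N[x]$, i.e.\ $x=s$ or $x\sim s$.

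It then remains to exclude the only degenerate possibility, $x=s$, which is allowed by the second subcase. If $x=s$, substituting into $N[y]=N[x]\cup\{t\}$ and using $N[s]=N[t]\cup\{v\}$ together with $t\in N[s]$ gives $N[y]=N[s]\cup\{t\}=N[s]$; but $y\neq s$ in this subcase, so $y$ and $s$ would be distinct twins, contradicting the twin-freeness of $G$. Hence $x\sim s$ (in the first subcase this already held automatically, since there $x\neq y=s$), so $x\in N[s]\setminus N[t]=\{v\}$ and $x=v$.

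The main delicacy I expect is simply tracking which pairs of vertices among $x,y,s,t$ can or cannot coincide; twin-freeness is precisely what eliminates the single bad coincidence $x=s$, and every other branch collapses directly to the desired identity. No tool beyond Claim~\ref{clm:f(G)-2} and the neighborhood equations defining the arcs of $\mathcal{H}(G)$ should be needed.
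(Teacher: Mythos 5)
Your proof is correct and follows essentially the same route as the paper's: invoke Claim~\ref{clm:f(G)-2} to get $v=f(\overrightarrow{ts})$, show that $x$ lies in $N[s]$ but not in $N[t]$, and conclude $x=v$ by the uniqueness of the $ts$-forced vertex. You are in fact a bit more careful than the paper, which asserts $s\sim y$ and $x\sim s$ directly without explicitly ruling out the coincidences $y=s$ and $x=s$ (the latter, as you observe, is exactly where twin-freeness is needed).
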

\begin{proof}
Since $t\sim y$, then $s\sim y$ too. But since $t=f(\overrightarrow{xy})$,
$x\sim s$ and $x\not\sim t$. Now by Claim~\ref{clm:f(G)-2},
$v=f(\overrightarrow{ts})$, that is, $v$ is the unique vertex such
that $v$ is adjacent to $s$, but not to $t$. Therefore $x=v$.
\end{proof}

We now obtain the following lemma using the previous claims.

\begin{lemma}\label{clm:f(G)-5}
Let $s$ be a non-isolated sink in $\mathcal{H}(G)$ which is forced in
$G$ with $s=f(\overrightarrow{uv})$ for some vertices $u$ and
$v$. Then either $s$ has a non-forced predecessor $t$ in
$\mathcal{H}(G)$ such that $F(s)\subseteq F(t)$, or there exists a
non-forced vertex $w(s)$ such that $F(s)\subseteq
N_G[w(s)]$. Moreover, if there are $\ell$ additional sinks
$\{s_1,\ldots,s_\ell\}$ which are all non-isolated in $\mathcal{H}(G)$ and such that $w(s)=w(s_1)=\ldots=w(s_\ell)$, then
there exists a set of $\ell+1$ distinct vertices inducing a clique together with $w(s)$.
\end{lemma}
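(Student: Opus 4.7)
The plan is to follow the unique chain of predecessors of $s$ in $\mathcal{H}(G)$ and split on whether this chain eventually reaches a non-forced vertex. Since $s$ is a forced non-isolated sink, Claim~\ref{clm:f(G)-3} gives $s$ exactly one in-neighbour $t_1$ in $\mathcal{H}(G)$; iteratively, as long as the current $t_i$ is forced, Claim~\ref{clm:f(G)-3} again gives it at most one in-neighbour $t_{i+1}$, so the chain $s = t_0, t_1, t_2, \ldots$ is uniquely determined until we first meet either a non-forced vertex or a source. I split on which of these happens first.

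If some $t_r$ along the chain is non-forced, I set $t := t_r$. Since $s$ is a sink, $F(s)$ consists of $s$ together with all its transitive predecessors in $\mathcal{H}(G)$. The vertices $s, t_1, \ldots, t_{r-1}$ are all transitive successors of $t$, and any predecessor of $s$ beyond $t_r$ is also a predecessor of $t$; hence $F(s) \subseteq F(t)$, as required.

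Otherwise the chain terminates at a forced source $t_r$. Starting from $s = f(\overrightarrow{v_0 v_1})$ with $v_0 = u$ and $v_1 = v$, I would build inductively a parallel sequence $v_2, v_3, \ldots, v_{r+1}$ using Claim~\ref{clm:f(G)-4}: each forced in-neighbour $t_i$ of $t_{i-1} = f(\overrightarrow{v_{i-1} v_i})$ must satisfy $t_i = f(\overrightarrow{v_i v_{i+1}})$ for a uniquely determined $v_{i+1}$. Setting $w(s) := v_{r+1}$, two properties must be verified. First, $w(s)$ is non-forced: if $w(s) = f(\overrightarrow{xy})$ for some $x,y$, then the second implication of Claim~\ref{clm:f(G)-2} applied to $t_r = f(\overrightarrow{v_r v_{r+1}})$ forces $y = t_r$, which would require $t_r$ to have an in-neighbour---contradicting that $t_r$ is a source. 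Second, $F(s) = \{s, t_1, \ldots, t_r\} \subseteq N_G[w(s)]$: from $v_i \preceq v_{i+1}$ we get $N[v_1] \subseteq N[v_2] \subseteq \ldots \subseteq N[v_{r+1}]$, while the arc relations give $s \in N[v_1]$ and $t_i \in N[v_{i+1}]$ for every $i$.

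For the ``moreover'' clause, I would run the construction of the previous paragraph on each of the $\ell+1$ sinks $\{s, s_1, \ldots, s_\ell\}$ sharing the same $w$, producing for each $s_j$ a direct in-neighbour $v_{r_j}^{(j)}$ of $w$ in $\mathcal{H}(G)$. The hard part will be to extract $\ell+1$ distinct vertices (one per sink) that are pairwise adjacent in $G$ and all adjacent to $w$, so that together with $w$ they induce the required clique. My plan is to combine the ``clique minus a matching'' structure on the in-neighbours of $w$ in $\mathcal{H}(G)$ (as used in the proof of Lemma~\ref{lemma:bounded-F}) with an argument ruling out the delicate coincidences, such as $v_{r_j}^{(j)} = t_{r_{j'}}^{(j')}$ for $j \neq j'$, that could create missing edges; these should be precluded by a sink-versus-source parity check combined with the rigidity imposed by Claims~\ref{clm:f(G)-2}, \ref{clm:f(G)-3}, and~\ref{clm:f(G)-4}.
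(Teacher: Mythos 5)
Your handling of the dichotomy (the first two assertions of the lemma) is correct and follows essentially the paper's own route: you trace the unique chain of forced predecessors given by Claim~\ref{clm:f(G)-3}, and in the all-forced case you build the parallel sequence $v_2,\ldots,v_{r+1}$ via Claim~\ref{clm:f(G)-4}, exactly as the paper does. Your derivation that $w(s)=v_{r+1}$ is non-forced via the second implication of Claim~\ref{clm:f(G)-2} (the paper instead invokes Claim~\ref{clm:f(G)-4}) is a valid minor variant, and your proof of $F(s)\subseteq N_G[w(s)]$ through the nested neighbourhoods $N[v_1]\subseteq\cdots\subseteq N[v_{r+1}]$ is sound.

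The ``moreover'' clause, however, is where the bulk of this lemma's difficulty lies, and there you have only announced a plan, not given a proof. Two concrete steps are missing. First, pairwise adjacency of the candidates: each candidate $x_i$ (your $v_{r_i}^{(i)}$, an in-neighbour of $w$) satisfies $N_G[w]=N_G[x_i]\cup\{t_0^i\}$, where $t_0^i$ is the source of the chain of $s_i$, so its only possible non-neighbour among vertices of $N_G[w]$ is $t_0^i$; the ``clique minus a matching'' structure from Lemma~\ref{lemma:bounded-F} therefore cannot by itself yield a clique, and one must actually prove $x_i\neq t_0^j$ for all $i,j$ (and likewise $v_1\neq t_0^i$ to get adjacency to the candidate coming from $s$ itself). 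The paper does this by showing that $x_i=t_0^j$ would force $x_j=t_0^i$ and then $N_G[w]=N_G[t_1^i]$, contradicting twin-freeness; a ``sink-versus-source parity check'' cannot replace this, since both $x_i$ and $t_0^j$ are forced in-neighbours of $w$ and nothing a priori distinguishes them. Second, distinctness of the candidates: if $x_i=x_j$ then the two chains share their source, hence have a last common predecessor $t$ with two out-neighbours $t_i,t_j$, and ruling this out requires a separate structural argument (the paper's: $y_i=f(\overrightarrow{tt_i})$ and $y_j=f(\overrightarrow{tt_j})$ are distinct, $t=f(\overrightarrow{y_iz_i})$ for some $z_i$, and then $t_j\sim y_i$ and $t_j\not\sim y_i$ simultaneously, a contradiction). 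Neither of these arguments is a formal consequence of Claims~\ref{clm:f(G)-2}--\ref{clm:f(G)-4} as you suggest, so the proposal has a genuine gap exactly at the part of the statement that Proposition~\ref{prop:forced_planar} relies on.
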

\begin{proof}
First of all, recall that $\mathcal{H}(G)$ has no directed circuits.
Suppose $s$ has a non-forced predecessor in $\mathcal{H}(G)$ and let
$t$ be one such predecessor having the shortest distance to $s$ in $\mathcal{H}(G)$. By
Claim~\ref{clm:f(G)-3}, predecessors of $s$ are either successors or predecessors of $t$, and
there is a directed path from $t$ to $s$ in $\mathcal{H}(G)$. Hence $F(s)\subseteq
F(t)$, which proves the first part of the statement.

Now suppose all predecessors of $s=f(\overrightarrow{uv})$ are forced. By
Claim~\ref{clm:f(G)-3}, $s$ and its predecessors form a directed path
$\{t_0,\ldots,t_m,s\}$ in $\mathcal{H}(G)$ (for an illustration, see
Figure~\ref{fig:lemma16-a}). Note that by Claim~\ref{clm:f(G)-2},
we have $v=f(\overrightarrow{t_ms})$. By our
assumption we know that $t_m$ is forced, say
$t_m=f(\overrightarrow{xv_m})$ for some vertices $x$ and $v_m$. But
now by Claim~\ref{clm:f(G)-4}, $x=v$ and
$t_m=f(\overrightarrow{vv_m})$. Now, repeating these arguments for
each other predecessor of $s$ shows that there is a directed path
$\{u,v,v_m,\ldots,v_0\}$ with $t_m=f(\overrightarrow{vv_m})$ and for
all $i$, $0\le i\le m-1$, $t_i=f(\overrightarrow{v_{i+1}v_i})$. In
particular, $t_0=f(\overrightarrow{v_1v_0})$.
Observe also that for all $i\geq 1$, $v_{i}=f(\overrightarrow{t_{i-1}t_{i}})$.
By applying Claim~\ref{clm:f(G)-4} on vertices $v_1,v_0$ and $t_0$, if $v_0$ is
forced then $t_0$ has an in-neighbour in $\mathcal{H}(G)$, a
contradiction --- hence $v_0$ is non-forced. Moreover note that since
$v_0\sim t_0$, then $v_0$ is adjacent to all successors of $t_0$ in
$\mathcal{H}(G)$, that is, to all elements of $F(s)$. Therefore,
putting $w(s)=v_0$, we obtain the second part of the statement.

For the last part, suppose there exists a set of $\ell$ additional forced sinks $\{s_1,\ldots,s_\ell\}$ which are non-isolated in $\mathcal{H}(G)$ and such that all
their predecessors in $\mathcal{H}(G)$ are forced with $w(s_i)=v_0$ for $1\leq i\leq \ell$ (for an
illustration, see Figure~\ref{fig:lemma16-b}). For each such sink $s_i$, by the previous paragraph,
the vertices of $F(s_i)$ induce a directed path $\{t^i_0,\ldots,t^i_{m_i},s_i\}$ in
$\mathcal{H}(G)$. Moreover we know that there is a vertex $x_i$ such that $t^i_0$ is
$x_iv_0$-forced. We claim that the set of vertices $X=\{x_1,\ldots,x_\ell\}$ together with $v_0$ and
$v_1$, form a clique in $G$ of $\ell+2$ vertices. 

We first claim that for all $i,j$ in $\{1,\ldots,\ell\}$, $x_i\neq
t_0^j$. If $i=j$, this is clear by our assumptions. Otherwise, suppose
by contradiction, that $x_i=t_0^j$ for some $i\neq j$ in
$\{1,\ldots,\ell\}$. Then we claim that $x_j=t_0^i$. Indeed, by the
previous part of the proof, we know that
$f(\overrightarrow{x_jv_0})=t_0^j=x_i$ --- hence $x_j\not\sim
x_i$. But since $\overrightarrow{x_iv_0}$ is an arc in
$\mathcal{H}(G)$, we must have
$f(\overrightarrow{x_iv_0})=x_j$. Again, we know that
$f(\overrightarrow{x_iv_0})=t_0^i$, hence $x_j=t_0^i$. Let $t_1^i$
denote the successor of $t_0^i$ in the directed path from $t_0^i$ to
$s_i$ in $\mathcal{H}(G)$. We know from the previous part of the proof
that $f(\overrightarrow{t_0^it_1^i})=x_i=t_0^j$. However since
$t_0^i=x_j$ we also know that $f(\overrightarrow{t_0^iv_0})=x_i$. This
implies that $N_G[v_0]=N_G[t_1^i]$, a contradiction since these two
vertices are distinct and $G$ is twin-free.

Now, observe that the vertices of $X$ must all be pairwise adjacent.
All vertices of $X$ are adjacent to $v_0$, and for each $x_i$,
$N[v_0]=N[x_i]\cup\{t^i_0\}$, hence $x_i$ is adjacent to all
neighbours of $v_0$ except $t^i_0$. But by the previous paragraph, we
know that $t^i_0\neq x_j$ for all $j\in\{1,\ldots,\ell\}$, hence $x_i$
is adjacent to all $x_j\neq x_i$, $j\in\{1,\ldots,\ell\}$. For the
same reason, each $x_i$ is adjacent to $v_1$. Hence, the vertices of
$X$ form a clique together with $v_0$ and $v_1$.

Finally, let us show that all the vertices of $X$ are distinct: by
contradiction, suppose that $x_i=x_j$ for some $i\neq j$, $1\leq
i,j\leq \ell$. Since $t^i_0$ is $x_iv_0$-forced and $t^j_0$ is
$x_jv_0$-forced, we have $t^i_0=t^j_0$. Since $s_i$ and $s_j$ are
distinct, this means that $s_i$ and $s_j$ have one predecessor in
common. Hence their common predecessor which is nearest to $s_i$ and
$s_j$, say $t$, has two out-neighbours. Let $t_i$ (respectively $t_j$)
be the out-neighbour of $t$ which is a predecessor of $s_i$
(respectively $s_j$) --- see Figure~\ref{fig:lemma16-c} for an
illustration. We know that there are two vertices $y_i,y_j$ such that
$y_i=f(\overrightarrow{tt_i})$ and
$y_j=f(\overrightarrow{tt_j})$. First note that $y_i$ and $y_j$ are
distinct: otherwise, we would have
$N[t_i]=N[t]\cup\{y_i\}=N[t]\cup\{y_j\}=N[t_j]$ and then $t_i, t_j$
would be twins in $G$. Observe that since $t\not\sim
y_i$ and $y_i\neq f(\overrightarrow{tt_j})$, we have $t_j\not\sim y_i$. We
know that $t$ is forced, in fact by the first part of this proof, we also know
that $t=f(\overrightarrow{y_iz_i})$ for some vertex $z_i$. Hence
$z_i\sim t$, and since $N[t]\subseteq N[t_j]$, $z_i\sim t_j$. But
since $t_j\neq f(\overrightarrow{y_iz_i})$, $t_j\sim y_i$, a
contradiction. Hence $x_i$ and $x_j$ are distinct, which completes the
proof.
\end{proof}

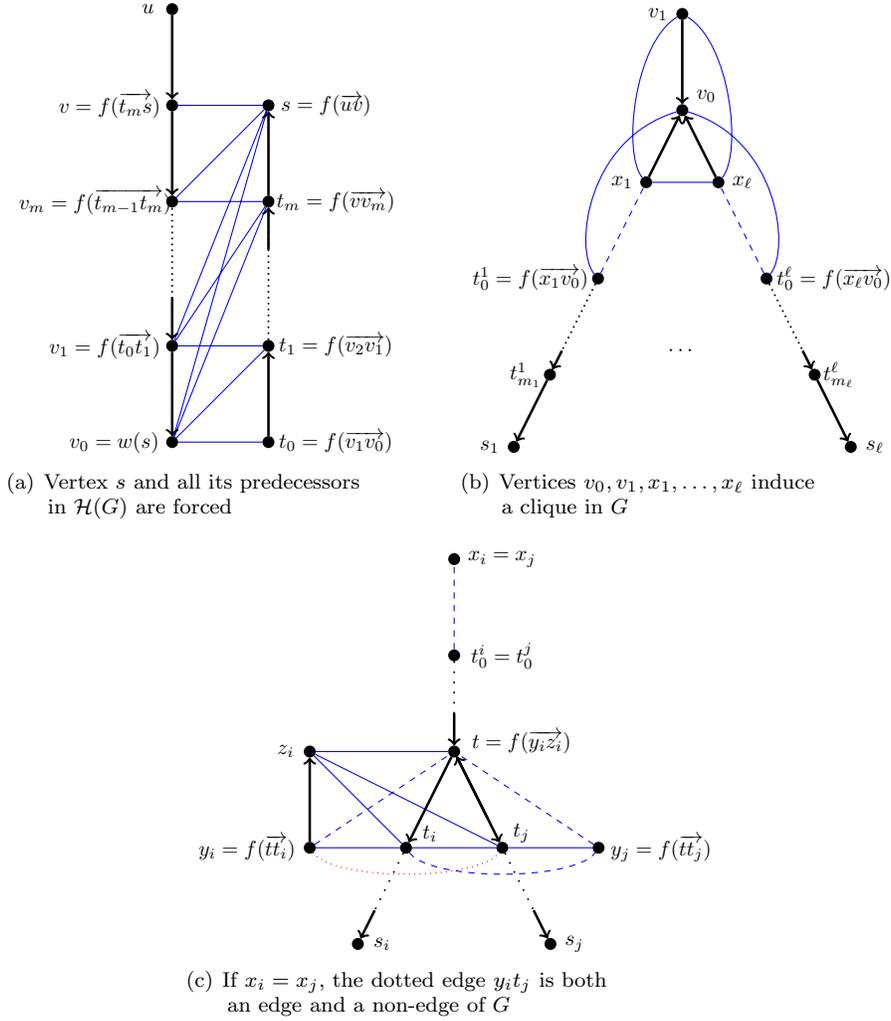
\begin{figure}[!htpb]
\centering
\subfigure[Vertex $s$ and all its predecessors\newline\hspace*{1.5em} in $\mathcal{H}(G)$ are forced]{\label{fig:lemma16-a}
\scalebox{0.8}{\begin{tikzpicture}[join=bevel,inner sep=0.6mm,line width=0.8pt,scale=0.8]
\path (0,0) node[draw,shape=circle,fill] (v0) {};
  \path (v0)+(-1.2,0) node {$v_0=w(s)$};
\path (0,2) node[draw,shape=circle,fill] (v1) {};
  \path (v1)+(-1.4,0) node {$v_1=f(\overrightarrow{t_0t_1})$};
\path (0,5) node[draw,shape=circle,fill] (vk) {};
  \path (vk)+(-1.6,0) node {$v_m=f(\overrightarrow{t_{m-1}t_m})$};
\path (0,7) node[draw,shape=circle,fill] (v) {};
  \path (v)+(-1.3,0) node {$v=f(\overrightarrow{t_ms})$};
\path (0,9) node[draw,shape=circle,fill] (u) {};
  \path (u)+(-0.5,0) node {$u$};

\path (2,0) node[draw,shape=circle,fill] (t0) {};
  \path (t0)+(1.4,0) node {$t_0=f(\overrightarrow{v_1v_0})$};
\path (2,2) node[draw,shape=circle,fill] (t1) {};
  \path (t1)+(1.4,0) node {$t_1=f(\overrightarrow{v_2v_1})$};
\path (2,5) node[draw,shape=circle,fill] (tk) {};
  \path (tk)+(1.4,0) node {$t_m=f(\overrightarrow{vv_m})$};
\path (2,7) node[draw,shape=circle,fill] (s) {};
  \path (s)+(1.2,0) node {$s=f(\overrightarrow{uv})$};

\draw[dotted] (v1) -- (vk)
              (t1) -- (tk);

\draw[blue, line width=0.5pt] (v0) -- (t0)
            (v0) -- (t1)
            (v1) -- (t1)
            (vk) -- (tk)
            (vk) -- (s)
            (v) -- (s)
            (v1) -- (s)
            (v0) -- (s)
            (v1) -- (tk)
            (v0) -- (tk);

\draw[->, line width=1.2pt] (u) -- (v);
\draw[->, line width=1.2pt] (v) -- (vk);
\draw[->, line width=1.2pt] (vk)+(0,-2) -- (v1);
\draw[->, line width=1.2pt] (v1) -- (v0);
\draw[->, line width=1.2pt] (t0) -- (t1);
\draw[->, line width=1.2pt] (t1)+(0,2) -- (tk);
\draw[->, line width=1.2pt] (tk) -- (s);
\end{tikzpicture}}
}\qquad
\subfigure[Vertices $v_0,v_1,x_1,\ldots,x_\ell$ induce\newline\hspace*{1.5em} a clique in $G$]{\label{fig:lemma16-b}
\scalebox{0.8}{\begin{tikzpicture}[join=bevel,inner sep=0.6mm,line width=0.8pt,scale=0.8]
\path (0,0) node[draw,shape=circle,fill] (v0) {};
  \path (v0)+(0.5,0.3) node {$v_0$};
\path (0,2) node[draw,shape=circle,fill] (v1) {};
  \path (v1)+(-0.5,0) node {$v_1$};
\path (-0.75,-1.5) node[draw,shape=circle,fill] (x1) {};
  \path (x1)+(-0.5,0) node {$x_1$};
\path (-1.75,-3.5) node[draw,shape=circle,fill] (t01) {};
  \path (t01)+(-1.4,0) node {$t_0^1=f(\overrightarrow{x_1v_0})$};
\path (-2.75,-5.5) node[draw,shape=circle,fill] (tm11) {};
  \path (tm11)+(-0.5,0) node {$t_{m_1}^1$};
\path (-3.5,-7) node[draw,shape=circle,fill] (s1) {};
  \path (s1)+(-0.5,0) node {$s_1$};

\path (0.75,-1.5) node[draw,shape=circle,fill] (x2) {};
  \path (x2)+(0.5,0) node {$x_\ell$};
\path (1.75,-3.5) node[draw,shape=circle,fill] (t02) {};
  \path (t02)+(1.4,0) node {$t_0^\ell=f(\overrightarrow{x_\ell v_0})$};
\path (2.75,-5.5) node[draw,shape=circle,fill] (tm22) {};
  \path (tm22)+(0.5,0) node {$t_{m_\ell}^\ell$};
\path (3.5,-7) node[draw,shape=circle,fill] (s2) {};
  \path (s2)+(0.5,0) node {$s_\ell$};

\path (0,-5) node {$\cdots$};

\draw[dotted] (t01) -- (tm11)
              (t02) -- (tm22);

\draw[blue, line width=0.5pt] (t01) .. controls +(-0.5,0.6) and +(-2,-0.6) .. (v0)
            (t02) .. controls +(0.5,0.6) and +(2,-0.6) .. (v0)
            (x1) .. controls +(-0.5,0.6) and +(-1,-0.6) .. (v1)
            (x2) .. controls +(0.5,0.6) and +(1,-0.6) .. (v1)
            (x1) -- (x2);

\draw[->, line width=1.2pt] (v1) -- (v0);
\draw[->, line width=1.2pt] (x1) -- (v0);
\draw[->, line width=1.2pt] (t01)+(-0.75,-1.5) -- (tm11);
\draw[->, line width=1.2pt] (tm11) -- (s1);
\draw[->, line width=1.2pt] (x2) -- (v0);
\draw[->, line width=1.2pt] (t02)+(0.75,-1.5) -- (tm22);
\draw[->, line width=1.2pt] (tm22) -- (s2);

\draw[-,dashed, blue, line width=0.5pt] (t01) -- (x1)
                      (t02) -- (x2);
\end{tikzpicture}}
}\\
 \subfigure[If $x_i=x_j$, the dotted edge $y_it_j$ is both\newline\hspace*{1.5em} an edge and a non-edge
of $G$]{\label{fig:lemma16-c}
 \scalebox{0.8}{\begin{tikzpicture}[join=bevel,inner sep=0.6mm,line width=0.8pt,scale=0.8]
\path (0,0) node[draw,shape=circle,fill] (t) {};
  \path (t)+(1.4,0.2) node {$t=f(\overrightarrow{y_iz_i})$};
\path (0,2) node[draw,shape=circle,fill] (t0) {};
  \path (t0)+(1.0,0) node {$t_0^i=t_0^j$};
\path (0,4) node[draw,shape=circle,fill] (xi) {};
  \path (xi)+(1.0,0) node {$x_i=x_j$};

\path (-1,-2) node[draw,shape=circle,fill] (ti) {};
  \path (ti)+(0.5,0.3) node {$t_i$};
\path (-3,-2) node[draw,shape=circle,fill] (yi) {};
  \path (yi)+(-1.3,0) node {$y_i=f(\overrightarrow{tt_i})$};
\path (-3,0) node[draw,shape=circle,fill] (zi) {};
  \path (zi)+(-0.5,0) node {$z_i$};
\path (-2,-4) node[draw,shape=circle,fill] (si) {};
  \path (si)+(0.5,0) node {$s_i$};
\path (1,-2) node[draw,shape=circle,fill] (tj) {};
  \path (tj)+(0.4,0.3) node {$t_j$};
\path (3,-2) node[draw,shape=circle,fill] (yj) {};
  \path (yj)+(1.3,0) node {$y_j=f(\overrightarrow{tt_j})$};
\path (2,-4) node[draw,shape=circle,fill] (sj) {};
  \path (sj)+(0.5,0) node {$s_j$};

\draw[loosely dotted] (ti) -- (si)
              (tj) -- (sj)
              (t0) -- (t);

\draw[blue, line width=0.5pt] (t) -- (zi)
            (zi) -- (tj)
            (zi) -- (ti)
            (yi) -- (ti)
            (ti) -- (tj)
            (yj) -- (tj);

\draw[->, line width=1.2pt] (t)+(0,0.8) -- (t);
\draw[->, line width=1.2pt] (t) -- (ti);
\draw[->, line width=1.2pt] (t) -- (tj);
\draw[->, line width=1.2pt] (yi) -- (zi);
\draw[->, line width=1.2pt] (x2) -- (v0);
\draw[->, line width=1.2pt] (si)+(0.35,0.7) -- (si);
\draw[->, line width=1.2pt] (sj)+(-0.35,0.7) -- (sj);

\draw[-,dashed, blue, line width=0.5pt] (t0) -- (xi)
                      (yi) -- (t)
                      (yj) -- (t)
                      (ti) .. controls +(0.5,-0.7) and +(-0.5,-0.7) .. (yj);

\draw[dotted,red, line width=0.5pt] (yi) .. controls +(0.5,-0.7) and +(-0.5,-0.7) .. (tj);
\end{tikzpicture}}
}

\caption{Three situations in the proof of Lemma~\ref{clm:f(G)-5}. Arcs belong to
  $\mathcal{H}(G)$. Full thin edges belong to $G$ only, dashed edges
  are non-edges in $G$.}
\label{fig:lemma16}
\end{figure}

Finally, let us recall and prove Proposition~\ref{prop:forced_planar}.

\begin{proposition2}
Let $G$ be a graph having no $k$-clique. Then there exists a constant
$\gamma(k)$ depending only on $k$, such that
$f(G)\geq\tfrac{1}{\gamma(k)}$.
\end{proposition2}
\begin{proof}
To prove the result, we use $\mathcal{H}(G)$ to construct a set
$X=\{x_1,\ldots,x_\ell\}$ of non-forced vertices such that
$\bigcup_{i=1}^\ell A(x_i)=V(G)$, where $A(x_i)$ is a set of at most
$\gamma(k)$ vertices. Then we have $\ell\geq\tfrac{n}{\gamma(k)}$ vertices in $X$ and the claim of
the proposition follows.

We now describe a procedure to build set $X$ while considering each
non-isolated sink of $\mathcal{H}(G)$. We denote by $s$ the currently
considered sink.
\vspace{.2cm}

\textbf{Case 1:} Sink $s$ is non-forced. Then we set $A(s)$ to be
$F(s)$ together with all the vertices which are forced by a pair $u,v$
of vertices of $F(s)$. Note that by Lemma~\ref{lemma:bounded-F},
$|F(s)|\leq \beta(k)$, where $\beta(k)$ only depends on $k$. Hence,
$|A(s)|\leq \beta(k)+\binom{\beta(k)}{2}$.
\vspace{.2cm}

\textbf{Case 2:} Sink $s$ is forced. By Lemma~\ref{clm:f(G)-5}, either
$s$ has a non-forced predecessor $t$ such that $F(s)\subseteq F(t)$,
or there exists a non-forced vertex $w(s)$ such that $F(s)\subseteq
N_G[w]$. 

In the first case, we choose $t$ as our non-forced vertex, and we set
$A(t)$ to be $F(t)$ together with all the vertices which are forced by
a pair $u,v$ of vertices of $F(t)$. Again we have $|A(t)|\leq
\beta(k)+\binom{\beta(k)}{2}$.
  
In the second case, we choose $w=w(s)$ as our non-forced vertex. Now,
let $S=\{s,s_1,\ldots,s_\ell\}$ be the set of forced sinks having no
non-forced predecessor and such that $w(s)=w(s_1)=\ldots
w(s_\ell)$. By Lemma~\ref{clm:f(G)-5} we know that there are $\ell+1$
distinct vertices inducing a clique together with $w$, hence $\ell+2< k$.
We set $A(w)$ to be $F(w)\cup F(s)\cup F(s_1)\cup\ldots\cup F(s_\ell)$ together with all the vertices
which are forced by a pair $u,v$ of vertices of this set. We
have $|A(w)|\leq k\beta(k)+\binom{k\beta(k)}{2}$.
\vspace{.2cm}

We have now covered all the vertices which are not isolated in
$\mathcal{H}(G)$, since for each non-isolated sink $s$ of
$\mathcal{H}(G)$, $F(s)$ is a subset of $A(x)$ for some $x\in X$.
Moreover all isolated vertices of $\mathcal{H}(G)$ which are forced,
have also been put into some set $A(x)$. Hence only non-forced
isolated vertices of $\mathcal{H}(G)$ need to be covered. For each
such vertex $v$, we add $v$ to $X$ and set $A(v)=\{v\}$.

Finally, all vertices belong to some set $A(x)$, $x\in X$, and the
size of each set $A(x)$ is at most $\gamma(k)=k\beta(k)+\binom{k\beta(k)}{2}$,
which completes the proof.
\end{proof}

\section{Upper bounds for graphs with girth at least 5}\label{sec:girth5}

This section is devoted to the study of graphs that have girth at
least $5$. We will use these results in Section~\ref{sec:RRG}, which
deals with random regular graphs.

Despite being different than our previous proofs, the ones of this
section have also a probabilistic flavour. One can check that for graphs of
girth~5, applying the Local Lemma does not lead to a satisfying
result. However, by using the Alteration method, a better bound can be
given.

We start by defining an auxiliary notion that will be used in this
section. A subset $D\subseteq V(G)$ is called a \emph{$2$-dominating
  set} if for each vertex $v$ of $V(G)\setminus D$, $|N(v)\cap D|\ge
2$~\cite{FJ85}. The next lemma shows that we can use a $2$-dominating
set to construct an identifying code.

\begin{lemma}\label{lemma:2DS=IC}
  Let $G$ be a twin-free graph on $n$ vertices having girth at
  least~5. Let $D$ be a $2$-dominating set of $G$. If the subgraph
  induced by $D$, $G[D]$, has no isolated edge, $D$ is an identifying
  code of $G$.
\end{lemma}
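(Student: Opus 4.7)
The plan is to verify the two defining properties of an identifying code directly on $D$. The first step is to invoke Observation~\ref{obs:dist2}, which reduces the separation requirement to pairs of vertices at distance at most~$2$ in~$G$; dominating every vertex is then the only other item to check. The engine driving the argument will be the two structural consequences of having girth at least~$5$: no two adjacent vertices share a common neighbour (no triangle), and no two vertices at distance exactly~$2$ share more than one common neighbour (no $4$-cycle).

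Next I would dispatch domination immediately---for $v\in D$ we have $v\in N[v]\cap D$, and for $v\notin D$ the $2$-domination hypothesis yields $|N(v)\cap D|\ge 2$---and then split the separation analysis into cases indexed by $d(u,v)\in\{1,2\}$ and by how many of $u,v$ lie in~$D$. The recurring tactic will be to pick a $D$-neighbour of one vertex (obtained from $2$-domination, or just by taking $u$ or $v$ itself when it lies in $D$) and to argue, either by triangle-freeness or by the bound $|N(u)\cap N(v)|\le 1$, that it fails to lie in the closed neighbourhood of the other vertex, thereby exhibiting an element of $(N[u]\Delta N[v])\cap D$.

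I expect the main (indeed, essentially only) subtle subcase to be when $u,v$ are adjacent and both already lie in $D$: here $2$-domination gives no leverage, since it talks only about vertices outside $D$, so this is precisely where the hypothesis that $G[D]$ has no isolated edge is indispensable. I plan to use it to extract a neighbour $w\in D$ of, say, $u$ with $w\ne v$, and then triangle-freeness forces $w\not\sim v$, so $w$ separates the pair. For the distance-$2$ subcase in which neither $u$ nor $v$ lies in~$D$, I will combine the two $D$-neighbours of $u$ with the $C_4$-free inequality $|N(u)\cap N(v)|\le 1$ to guarantee that some $D$-neighbour of $u$ is not a neighbour of $v$. All remaining subcases (a distance-$2$ pair with at least one endpoint in $D$, and the mixed/both-outside variants of the adjacent case) should be one-line consequences of the vertex itself belonging to $D$ or of $2$-domination plus triangle-freeness, so the proof will reduce to a short case check in which only the two situations singled out above carry real content.
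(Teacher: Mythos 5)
Your proposal is correct and follows essentially the same route as the paper: domination is immediate from $2$-domination, and separation is a short case check where common neighbours are excluded by the absence of triangles and $4$-cycles, with the no-isolated-edge hypothesis used exactly for adjacent pairs inside $D$. The only cosmetic difference is that you reduce to distance $\le 2$ via Observation~\ref{obs:dist2}, whereas the paper organizes the cases by membership in $D$ and handles all distances directly; both are fine.
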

\begin{proof}
  First observe that $D$ is dominating since it is
  $2$-dominating. Let us check that $D$ is also separating.

  Note that all the vertices that do not belong to $D$
  are separated because they are dominated at least twice each and
  $g(G)>4$.

  Similarly, a vertex $x\in D$ and a vertex $y\in V(G)\setminus D$ are
  separated since $y$ has two vertices which dominate it, but
  they cannot both dominate $x$ (otherwise there would be a triangle or a 4-cycle in
  $G$).

  Finally, consider two vertices of $D$. If they are not adjacent they
  are separated by themselves. Otherwise, by the assumption that
  $G[D]$ has no isolated edge and that $G$ has no triangles, we know
  that at least one of them has a neighbour in $D$, which separates
  them since it is not a neighbour of the other.
\end{proof}

The following theorem makes use of Lemma~\ref{lemma:2DS=IC}. The idea
of the proof is inspired by a classic proof of a result on dominating
sets which can be found in the first chapter of~\cite{AS00}.

\begin{theorem}
\label{the:g5}
Let $G$ be a graph on $n$ vertices with minimum degree $\delta$ and
girth at least $5$. Then
$\M(G)\leq (1+o_{\delta}(1))\frac{3\log{\delta}}{2\delta}n$. Moreover if $G$ has average degree $\overline{d}=O_\delta(\delta(\log{\delta})^2)$ then $\M(G)\leq
\frac{\log{\delta}+\log\log\delta+O_\delta(1)}{\delta}n$. 
\end{theorem}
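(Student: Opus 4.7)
The strategy is to apply Lemma~\ref{lemma:2DS=IC} by constructing a 2-dominating set $D$ of $G$ whose induced subgraph has no isolated edge, so that $D$ is automatically an identifying code. The construction follows the Alteration Method in three stages. First, select a random seed set $A\subseteq V(G)$ by including each vertex independently with probability $p$, to be tuned later. Second, patch 2-domination by letting $B=\{v\notin A:|N(v)\cap A|\leq 1\}$, so that $A\cup B$ is a 2-dominating set. Third, patch isolated edges by adding, for each edge $uv$ isolated in $G[A\cup B]$, one extra vertex from $N(u)\cup N(v)\setminus(A\cup B\cup\{u,v\})$, which exists whenever the edge is isolated and $\delta\geq 2$; let $C$ be the set of added vertices and set $D=A\cup B\cup C$. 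A short case check shows that $G[D]$ has no isolated edge.

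For the first bound, set $p=\log\delta/\delta$. By monotonicity of $\Pr(\Bin(d,p)\leq 1)$ in $d$ and linearity of expectation, $\mathbb{E}[|B|]\leq n\bigl((1-p)^\delta+\delta p(1-p)^{\delta-1}\bigr)=o(pn)$. Since each vertex of $A\cup B$ lies in at most one isolated edge of $G[A\cup B]$, we have $|C|\leq|A\cup B|/2$, and so $\mathbb{E}[|D|]\leq\tfrac{3}{2}(pn+\mathbb{E}[|B|])=(1+o_\delta(1))\tfrac{3\log\delta}{2\delta}n$. The probabilistic method then yields a realization of the required size.

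For the ``moreover'' part the factor $\tfrac{3}{2}$ above must be avoided. Choose $p=(\log\delta+\log\log\delta+c)/\delta$ for a suitable constant $c$; then $\mathbb{E}[|A|+|B|]\leq(\log\delta+\log\log\delta+O_\delta(1))n/\delta$. Bounding $\mathbb{E}[|C|]$ directly is the delicate step. For each edge $uv\in E(G)$, the girth-5 assumption implies that $N(u)\setminus\{v\}$ and $N(v)\setminus\{u\}$ are disjoint independent sets of size at least $\delta-1$, and that events of the form ``$w\notin A\cup B$'' for $w$ on one side are independent of those on the other side and of $\{u\in A\},\{v\in A\}$. A case analysis on whether $u$ and $v$ lie in $A$ or in $B$ then yields $\Pr(uv\text{ isolated in }G[A\cup B])\leq(1+o_\delta(1))p^2(1-p)^{2(\delta-1)}$. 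Summing over $|E(G)|=n\overline{d}/2=O_\delta(n\delta(\log\delta)^2)$ gives $\mathbb{E}[|C|]=o(n/\delta)$, and hence $\mathbb{E}[|D|]\leq(\log\delta+\log\log\delta+O_\delta(1))n/\delta$.

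The main technical obstacle is precisely the last estimate on $\mathbb{E}[|C|]$: since $B$ is not an independent selection but depends on $A$, the girth-5 hypothesis is essential to decouple the event ``both endpoints lie in $A\cup B$'' from the event ``no other neighbour lies in $A\cup B$'', and thereby to show that isolated edges are much scarcer than general edges of $G[A\cup B]$.
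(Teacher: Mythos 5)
Your overall strategy coincides with the paper's: build a $2$-dominating set by the alteration method, patch away isolated edges, and invoke Lemma~\ref{lemma:2DS=IC}. However, two of your probability estimates are incorrect, and the first one breaks the constant you claim. With $p=\log\delta/\delta$, the bound $\mathbb{E}[|B|]\le n\bigl((1-p)^\delta+\delta p(1-p)^{\delta-1}\bigr)$ is \emph{not} $o(pn)$: the second term equals $(1+o_\delta(1))\log\delta\cdot e^{-\log\delta}=(1+o_\delta(1))p$, so already for a $\delta$-regular graph $\mathbb{E}[|B|]=(1+o_\delta(1))pn$, and your computation only yields $\mathbb{E}[|D|]\le\tfrac32\bigl(pn+\mathbb{E}[|B|]\bigr)=(3+o_\delta(1))\tfrac{\log\delta}{\delta}n$, a factor $2$ worse than the stated $(1+o_\delta(1))\tfrac{3\log\delta}{2\delta}n$. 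This is exactly why the paper takes $p=\tfrac{\log\delta+\log\log\delta}{\delta}$ (any $p$ with $\delta p-\log\delta\to\infty$ and $\delta p\sim\log\delta$ would do): then $(1+\delta p)e^{-\delta p}=O_\delta(1/\delta)=o_\delta(p)$, the patch $B$ is negligible, and multiplying by $3/2$ gives the first bound. So the first half is repairable by the very choice of $p$ you use in the second half, but as written that step fails.

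In the second half, the key intermediate claim $\Pr\bigl(uv\text{ isolated in }G[A\cup B]\bigr)\le(1+o_\delta(1))p^2(1-p)^{2(\delta-1)}$ is false: isolated edges typically arise with their endpoints entering through $B$, not through $A$. Consider the event that $u,v\notin A$, none of the $\ge 2\delta-2$ other neighbours lies in $A$, and none of them lies in $B$; its probability is $\Omega\bigl((1-p)^{2\delta}\bigr)$, since conditionally each such neighbour belongs to $B$ with probability only $O_\delta(1/\delta)$ (girth~$5$ keeps these events essentially independent), so excluding $B$ costs only a constant factor. On this event $u,v\in B$ and $uv$ is isolated, and $(1-p)^{2\delta}$ exceeds your bound by a factor of order $p^{-2}$. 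The correct estimate, obtained by summing over the four cases for the endpoints exactly as in the paper's Case~2, is $p^2(1-p)^{2\delta-2}+2p(1-p)^{2\delta-1}+(1-p)^{2\delta}=(1-p)^{2\delta-2}$, with no $p^2$ gain. Fortunately this weaker bound still suffices: with $\overline{d}=O_\delta(\delta(\log\delta)^2)$ it gives $\mathbb{E}[|C|]=O_\delta(n/\delta)$ rather than $o(n/\delta)$, which is absorbed into the $O_\delta(1)/\delta$ term, so the ``moreover'' conclusion stands once the estimate is corrected.
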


\begin{proof}[Proof]
  Let $S\subseteq V(G)$ be a random subset of vertices, where each vertex
  $v\in V(G)$ is added to $S$ uniformly at random with probability $p$
  (where $p$ will be determined later). For every vertex $v\in V(G)$, we
  define the random variable $X_v$ as follows:
$$
X_v = \left\{ \begin{array}{rl} 
0 & \mbox{if } |N[v]\cap S|\geq 2 \\ 
1 & \mbox{otherwise}
\end{array}\right. 
$$

Let $T=\{v \mid X_v=1\}$. This set contains, in particular, the subset of vertices which are not $2$-dominated by $S$. Note that $|T|=\sum X_v$. 
Let us estimate the size of $T$. Observing that $|N[v]\cap S|\sim \Bin(\deg(v)+1,p)$ and
$\deg(v)\geq \delta$, we obtain:

\begin{eqnarray*}
\mathbb{E}(|T|) & = & \sum_{v\in V(G)} \mathbb{E}(X_v)\\ 
& \leq & n \left( (1-p)^{\delta+1}+(\delta+1)p(1-p)^\delta \right)\\
& = & n(1-p)^\delta((1-p)+(\delta+1)p)\\
& \leq & n(1+\delta p)e^{-\delta p}.
\end{eqnarray*}
where we have used the fact that $1-x\leq e^{-x}$.
Now, note that the set $D=S\cup T$ is a $2$-dominating set of $G$. 
We have $|D|\leq |S|+|T|$. Hence

\begin{eqnarray}
\mathbb{E}(|D|) & \leq & \mathbb{E}(|S|)+\mathbb{E}(|T|)\nonumber\\
& \leq & np+n(1+\delta p)e^{-\delta p}\label{eq:expD}
\end{eqnarray}

Let us set $p=\tfrac{\log \delta+\log\log\delta}{\delta}$. Plugging this into Equation~\eqref{eq:expD}, we obtain:
$$\mathbb{E}(|D|)\leq\frac{\log\delta+\log\log\delta}{\delta}n+\frac{1+\log\delta+\log\log\delta}{\delta\log\delta}n=\frac{\log\delta+\log\log\delta+O_\delta(1)}{\delta}n$$
This shows that there exists at least one $2$-dominating set $D$
having this size.
\vspace{.2cm}

\textbf{Case 1:} (general case) Note that we can use
Lemma~\ref{lemma:2DS=IC} by considering all pairs $u,v$ of vertices of
$D$ forming an isolated edge in $G[D]$, and add an arbitrary neighbour of either
one of them to $D$. Observe that such a vertex exists, otherwise $u$
and $v$ would be twins in $G$. Since there are at most
$\frac{|D|}{2}$ such pairs, we obtain a $2$-dominating set of size at
most $|D|+\frac{|D|}{2}=
(1+o_{\delta}(1))\frac{3\log{\delta}}{2\delta}n$ having the desired
property. Now applying Lemma~\ref{lemma:2DS=IC} completes Case~$1$.
\vspace{.2cm}

\textbf{Case 2:} (sparse case) Whenever
$\overline{d}=O_\delta(\delta(\log{\delta})^2)$, we can get a
better bound by estimating the number of isolated edges of $G[D]$. For convenience, we define the
random variables $Y_{uv}$ for each edge $uv$ of $G$, as follows:
$$
Y_{uv} = \left\{ \begin{array}{rl} 
1 & \mbox{if $N[u]\Delta N[v]\subseteq V(G)\setminus S$}  \\ 
0 & \mbox{otherwise}
\end{array}\right. 
$$

An isolated edge in $G[D]$ might have been created in several ways. First, at the initial
construction step of $S$: if both $u,v$ belong to $S$, but none of their other neighbours do which
happens with probability at most $p^2(1-p)^{2\delta-2}$. A second possibility is in the step where
we add the vertices of $T$ to our solution. This could happen if both $u,v$ were not dominated at
all by $S$, which occurs with probability at most $(1-p)^{2\delta}$, or if exactly one of $u,v$ was
part of $S$ and none of their neighbours were, which has probability at most $2p(1-p)^{2\delta-1}$.
Thus, the total probability of having an isolated edge in $G[D]$ is bounded from above as follows.
$$\Pr(Y_{uv}=1)\leq p^2(1-p)^{2\delta-2}+(1-p)^{2\delta}+2p(1-p)^{2\delta-1}=(1-p)^{2\delta-2}$$

Using the previous observation together with the facts that $p=\tfrac{\log\delta+\log\log\delta}{\delta}$ and $1-x\leq e^{-x}$, let us
calculate the expected value of $Y=\sum_{uv\in E(G)} Y_{uv}$.

$$
\mathbb{E}(Y)= \sum_{uv\in E(G)} \mathbb{E}(Y_{uv})\leq \frac{n\overline{d}}{2} (1-p)^{2\delta-2}
\leq \frac{n\overline{d}}{2}e^{-(2\delta-2)p} =
\frac{n\overline{d}e^{-2(\log\delta+\log\log\delta)}}{2}
= \frac{n\overline{d}}{2\delta^2(\log\delta)^2}
$$

We construct $U$ by picking an arbitrary neighbour of either $u$ or $v$ for each edge $uv$ such that $Y_{uv}=1$. We have $|U|\leq Y$.
The final set $\mathcal{C}=S\cup T\cup U$ is an identifying code. Now we have:
$$
\mathbb{E}(|\mathcal{C}|)\leq \mathbb{E}(|S|)+\mathbb{E}(|T|)+\mathbb{E}(|U|)\leq \frac{\log{\delta}+\log\log\delta+O_\delta(1)}{\delta}n + \frac{\overline{d}}{2\delta^2(\log\delta)^2}n
$$

Using that $\overline{d}=O_{\delta}(\delta(\log{\delta})^2)$,
\begin{equation}\label{eq:expected}
\mathbb{E}(|\mathcal{C}|)\leq \frac{\log{\delta}+\log\log\delta+O_\delta(1)}{\delta}n 
\end{equation}

Then there exists some choice of $S$ such that $|\mathcal{C}|$ has the
desired size, and completes the proof.
\end{proof}

In fact, it is shown in the next section (Corollary~\ref{cor:lb}) that
Theorem~\ref{the:g5} is asymptotically tight. 

Moreover, note that Theorem~\ref{the:g5} cannot be extended much in
the sense that if we drop the condition on girth~5, we know
arbitrarily large $d$-regular triangle-free graphs having large
minimum identifying codes. For instance, Construction~\ref{constr3} of
Section~\ref{apdx:constructions} provides a graph $G$ which satisfies
$\M(G)=n-\tfrac{n}{d}$. Similarly, we cannot drop the
minimum degree condition. Indeed it is known that any $(d-1)$-ary
complete tree $T_{d,h}$ of height $h$, which is of maximum degree~$d$,
minimum degree~1 and has infinite girth, also has a large identifying
code number (i.e. $\M(T_{d,h})=n-\tfrac{n}{d-1+o_d(1)}$~\cite{BCHL05}).

\section{Identifying codes of random regular graphs}\label{sec:RRG}

From the study of regular graphs arises the question of the value of
the identifying code number for most regular graphs. We know some
lower and upper bounds for this parameter, but is it
concentrated around some value? A good way to study this question is
to look at random regular graphs.

Consider the Configuration Model, where a $d$-regular
multigraph on $n$ vertices is obtained by selecting some perfect
matching of $K_{nd}$ at random~(see~\cite{B01} for further reference). We will only consider cases where
$nd$ is even, as otherwise there does not exist any $d$-regular graph on $n$ vertices.
In the Configuration Model, the set of vertices in $K_{nd}$ is partitioned into $n$ cells of size $d$ and each cell $W_v$ is
associated to a vertex $v$ of the random regular graph. An edge $e$ of a perfect
matching of $K_{nd}$ induces either a loop in $v$ (if it connects two elements of $W_v$) or an edge
between $v$ and $u$ (if it connects a vertex from $W_v$ to a vertex in $W_u$).

In general, this model may produce graphs with loops and multiple edges.
We will denote by $\mathcal{G}^*(n,d)$ the former probability space and by
$\mathcal{G}(n,d)$ the same probability space conditioned on the event
that $G$ is simple. It is shown in~\cite{MW91} that the following holds:
$$
\Pr\big(G\in\mathcal{G}(n,d)\mid G\in\mathcal{G}^*(n,d) \big)=(1+o(1))e^{\tfrac{1-d^2}{4}}
\qquad\text{if $d=o(\sqrt{n}).$}
$$ Thus, for constant $d$ any property that holds with probability tending to $1$ for
$\mathcal{G}^*(n,d)$ as $n\rightarrow\infty$, will also hold with
probability tending to $1$ for $\mathcal{G}(n,d)$. In this case we
will say that the property holds \emph{with high probability}
(w.h.p.). In fact our bounds include asymptotic terms in $d$, which
means they are meaningful for sufficiently large $d$.
\begin{theorem}
\label{prop:RRG}
	Let $G\in\mathcal{G}(n,d)$ then for any $d\geq 3$,  $\M(G)\leq \tfrac{\log{d}+\log\log
d+O_d(1)}{d}n$ w.h.p..
\end{theorem}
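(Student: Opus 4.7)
The plan is to adapt the proof of Theorem~\ref{the:g5} (Case~2, the sparse case) to random $d$-regular graphs, whose average degree $\overline{d}=d$ trivially satisfies the sparsity hypothesis $\overline{d}=O_d(\delta(\log \delta)^2)$. The obstacle is that $G\in\mathcal{G}(n,d)$ does not have girth $\geq 5$ with high probability; instead, in the Configuration Model $\mathcal{G}^*(n,d)$ it is classical that the numbers $X_3,X_4$ of triangles and $4$-cycles converge jointly to independent Poisson random variables with $d$-bounded means $\tfrac{(d-1)^3}{6}$ and $\tfrac{(d-1)^4}{8}$, so $\mathbb{E}[X_3+X_4]=O_d(1)$ and by Markov's inequality $X_3+X_4\leq \log n$ w.h.p. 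Since for constant $d$ we have $\Pr(G\in\mathcal{G}(n,d)\mid G\in\mathcal{G}^*(n,d))=\Theta(1)$, it suffices to establish the claim in $\mathcal{G}^*(n,d)$.

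Conditioning on a graph $G\in\mathcal{G}^*(n,d)$ with at most $\log n$ short cycles, I would repeat the random construction in the proof of Theorem~\ref{the:g5}: include each vertex independently in a set $S$ with probability $p=\tfrac{\log d+\log\log d}{d}$, let $T$ be the set of vertices not $2$-dominated by $S$, and let $U$ be a set of size at most $\tfrac{|S\cup T|}{2}$ destroying every isolated edge of $G[S\cup T]$. The same expectation computation as in Case~2 of Theorem~\ref{the:g5}, now with $\overline{d}=d$, gives $\mathbb{E}[|S|+|T|+|U|]\leq \tfrac{\log d+\log\log d+O_d(1)}{d}\,n$, so some realization achieves this size.

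To promote $S\cup T\cup U$ to an identifying code without the girth hypothesis, I would revisit Lemma~\ref{lemma:2DS=IC}. Inspecting its proof, every separation failure of the constructed $2$-dominating set must come from a pair $\{u,v\}$ of distinct vertices sharing at least two common neighbours, which forces $u$ and $v$ to lie together on a triangle or $4$-cycle of $G$. Hence the number of obstructing pairs is bounded by $O_d(X_3+X_4)=O_d(\log n)$. For each such pair I would add to the code one vertex of $N[u]\Delta N[v]$ (which is nonempty because $G$ is w.h.p.\ twin-free in $\mathcal{G}^*(n,d)$ for $d\geq 3$, another standard Configuration-Model computation). This contributes $O_d(\log n)=o(n/d)$ further vertices.

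Combining, for w.h.p.\ $G\in\mathcal{G}^*(n,d)$ there is an identifying code of size $\tfrac{\log d+\log\log d+O_d(1)}{d}\,n$, and the bound transfers to $\mathcal{G}(n,d)$. The hard part is the structural observation that every violation of separation is localized on a short cycle, so that the failures can be counted by $X_3+X_4$; the Configuration-Model computations ($\mathbb{E}[X_3+X_4]=O_d(1)$ and w.h.p.\ twin-freeness) are standard but must be spelled out, and one must combine the two sources of randomness (the random graph $G$ and the random vertex set $S$) carefully, e.g.\ by first conditioning on a typical $G$ and then applying the expectation argument to $S$.
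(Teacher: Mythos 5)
Your proposal follows essentially the same route as the paper's proof: reuse the alteration construction and expectation bound from Theorem~\ref{the:g5} (Case~2) without the girth hypothesis, observe that every separation failure is confined to pairs lying on a triangle or $4$-cycle, bound $X_3+X_4$ via the Configuration Model and Markov's inequality, patch the $O_d(X_3+X_4)$ bad pairs using w.h.p.\ twin-freeness, and transfer from $\mathcal{G}^*(n,d)$ to $\mathcal{G}(n,d)$. The only blemish is the phrase describing $U$ as having size at most $\tfrac{|S\cup T|}{2}$ (that is the Case~1 bound, which would only give $\tfrac{3}{2}\cdot\tfrac{\log d}{d}n$); your subsequent use of the Case~2 expectation for $|U|$ is the correct one and is what the paper does.
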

\begin{proof}
First of all we have to show that almost all random regular graphs are twin-free.
	
Observe that the number of perfect matchings of $K_{2m}$ is
$(2m-1)!!=(2m-1)(2m-3)(2m-5)\dots 1$.  Fix a vertex $u$ of $G$ and let
$N(u)=\{ v_1,\dots,v_d \}$. We bound from above the probability that $u$ and
$v_1$ are twins, i.e. $N[u]=N[v_1]$. The number of perfect matchings
of $K_{nd}$ such that in the resulting graph $G$ of
$\mathcal{G}(n,d)$, $v_1$ and $v_2$ are adjacent, is at most
$(d-1)(d-1)(nd-2d-3)!!$. Indeed, there must be an edge between $v_1$
and $v_2$, which gives $(d-1)(d-1)$ possibilities. Since $u$ has $d$
neighbours, the number of possibilities for the remaining graph is the
number of perfect matchings of $K_{nd-2d-2}$.

Analogously the number of perfect matchings with $v_2,v_3\in N(v_1)$
is at most $(d-1)(d-1)(d-2)(d-1)(nd-2d-5)!!$. Thus we have:

\begin{eqnarray*}
\Pr(N[u]=N[v_1]) & \leq &\Pr(N[u]\subseteq N[v_1]) \\
& =&\frac{(d-1)(d-1)(d-2)(d-1)\dots 2(d-1)1(d-1)(nd-4d+1)!!}{(nd-2d-1)!!} \\
& \leq & \frac{d^{d-1}(d-1)!}{(nd-2d-1)\dots (nd-4d+3)} \\
& \leq & \left(\frac{d}{n}\right)^{d-1}\qquad \text{for $n$ large enough.}
\end{eqnarray*}

As we have at most $\tfrac{nd}{2}$ possible pairs of twins (one for
each edge), by the union bound and since $d\geq 3$, for sufficiently large $n$ we obtain:
$$
\Pr(G\mbox{ has twins})\leq \frac{nd}{2}\left(\frac{d}{n}\right)^{d-1}
$$
which tends to $0$ as $n$ tends to infinity.

Therefore, random regular graphs are twin-free w.h.p.	

By~\eqref{eq:expected}, for any $G\in\mathcal{G}(n,d)$,
we have a set $\mathcal{C}$ with
$$
|\mathcal{C}|\leq \frac{\log d+\log\log d+O_d(1)}{d}n
$$ 

that separates any pair of vertices except from the ones where both
vertices belong to a triangle or a 4-cycle. We have to add some vertices
to $\mathcal{C}$ in order to separate the vertices of these small
cycles.
	
Classical results on random regular graphs (independently,~\cite[Corollary 2.19]{B01}
and~\cite{W81}) state that the random variables that count the number
of cycles of length $k$, $X_k$, tend in distribution to independent
Poisson variables with parameter $\lambda_k= \tfrac{1}{2k}(d-1)^k$.

Observe that:
$$
\mathbb{E}(X_3)=\frac{(d-1)^3}{6} \quad \mathbb{E}(X_4)=\frac{(d-1)^4}{8}
$$
i.e. a constant number of triangles and 4-cycles are expected.

Using Markov's inequality we can bound the probability of having too many small cycles:

$$
\Pr(X_3>t)\leq \frac{(d-1)^3}{6t} \qquad \Pr(X_4>t)\leq \frac{(d-1)^4}{8t}
$$ 

Setting $t=\vartheta(n)$, where $\vartheta(n)\to \infty$, the previous probabilities are $o(1)$.
Then w.h.p., we have at most $\vartheta(n)$ cycles
of length~$3$ and $\vartheta(n)$ cycles of length~$4$.

Let $T=\{ u_1,u_2,u_3\}$ be a triangle in $G$. As $d\geq 3$
there exists at least one vertex $v_i$ outside the triangle (moreover, we showed that the graph has
no twins w.h.p.). Since our graph is twin-free, for
each ordered pair $(u_i,u_j)$ there exists some vertex $v_{ij}$, such that $v_{ij}\in N(u_i)\backslash N(u_j)$.
Observe that we can add $v_{12}$, $v_{23}$ and $v_{31}$ to $\mathcal{C}$ and then any pair of vertices from
$T$ will be separated. 

If $T=\{ u_1,u_2,u_3,u_4\}$ induces a $K_4$, each pair of vertices of
$T$ is contained in some triangle and is separated by the last
step. If $T$ induces a 4-cycle, adding $T$ to $\mathcal{C}$ separates
all the elements in $T$. Otherwise, $T$ induces two triangles and
adding $T$ to $\mathcal{C}$ separates the two vertices which have not
been separated in the last step.

After these two steps, we have added at most $7\vartheta(n)$ vertices to
$\mathcal{C}$. Hence, for any $G\in\mathcal{G}(n,d)$ w.h.p. we obtain:

$$
\M(G)\leq \frac{\log{d}+\log\log d+O_d(1)}{d}n + 7\vartheta(n) =\frac{\log{d}+\log\log
d+O_d(1)}{d}n 
$$
Observe that the $\tfrac{O_d(1)}{d}n$ term contains the $7\vartheta(n)$ term.
\end{proof}

Theorem~\ref{prop:RRG}  shows that despite the fact that for any
$d$, we know infinitely many $d$-regular graphs having a very large
identifying code number (e.g. $n-\tfrac{n}{d}$ for the graphs of
Construction~\ref{constr2} of Section~\ref{apdx:constructions}), almost all $d$-regular
graphs have a very small identifying code.

Moreover, $\M(G)$ is concentrated, as the following theorem and its
corollary show. In fact the following result might be already known,
since a similar result is stated for independent dominating sets
in~\cite{HHV09}. However we could not find it in the literature and
decided to give a proof for the sake of completeness.

\begin{theorem}\label{prop:domination}
  Let $G\in\mathcal{G}(n,d)$, then w.h.p. all the dominating sets of
  $G$ have size at least $\tfrac{\log{d}-2\log\log d}{d}n$.
\end{theorem}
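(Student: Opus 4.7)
My plan is to apply a first-moment argument in the configuration model $\mathcal{G}^*(n,d)$. Let
$$k = \left\lceil \tfrac{(\log d - 2\log\log d)\, n}{d} \right\rceil - 1,$$
and let $N_k$ denote the number of dominating sets of $G$ of size exactly $k$. Since every superset of a dominating set is dominating, showing $\mathbb{E}(N_k) = o(1)$ implies via Markov's inequality that w.h.p.\ $G$ has no dominating set of size $\leq k$, hence every dominating set has size $\geq k+1 \geq \frac{(\log d-2\log\log d)n}{d}$. The transfer from $\mathcal{G}^*(n,d)$ to $\mathcal{G}(n,d)$ is automatic since simplicity has probability $\Theta_d(1)$, as already used earlier in the paper.

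For a fixed $S \subseteq V$ with $|S|=k$, I would bound $\Pr(S \text{ is dominating})$ as follows. Reveal the random pairing restricted to the $kd$ half-edges of $S$ and let $m$ be the number of \emph{cross} half-edges at $S$, i.e., those matched with half-edges of cells in $V\setminus S$. A routine configuration-model computation shows that $m$ is sharply concentrated around $m^\star := kd(1-k/n)$. Conditional on $m$, the set of cross half-edges is a uniformly random $m$-subset of the $(n-k)d$ half-edges in $V\setminus S$, so for each $v\in V\setminus S$,
$$\Pr(v \text{ has no cross half-edge}) \;=\; \frac{\binom{(n-k-1)d}{m}}{\binom{(n-k)d}{m}} \;=\; (1+o(1))\,e^{-m/(n-k)} \;=\; (1+o_d(1))\frac{(\log d)^2}{d},$$
using $m/(n-k) \sim kd/n = \log d - 2\log\log d$.

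The events ``$v$ is uncovered'' for $v\in V\setminus S$ form a negatively associated family (the indicators that each half-edge of $V\setminus S$ is cross are negatively associated by uniform sampling without replacement, and negative association is preserved under grouping by cell). Hence
$$\Pr(S \text{ dominating}) \;\leq\; \left(1 - (1+o_d(1))\tfrac{(\log d)^2}{d}\right)^{n-k} \;\leq\; \exp\!\left(-(1+o_d(1))\tfrac{n(\log d)^2}{d}\right).$$
Combining with $\binom{n}{k} \leq (en/k)^k$ and the asymptotic $k\log(en/k) = (1+o_d(1))\tfrac{n}{d}\bigl((\log d)^2 - 3(\log d)(\log\log d)\bigr)$, the union bound yields
$$\log \mathbb{E}(N_k) \;\leq\; -(1+o_d(1))\tfrac{3 n (\log d)(\log\log d)}{d},$$
which tends to $-\infty$ as $n\to\infty$ for any fixed $d\geq 3$, so $\mathbb{E}(N_k) = o(1)$ as required.

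The main obstacle will be making the negative-association step fully rigorous inside the configuration model; one option is to invoke the classical negative association of uniform sampling without replacement together with closure under disjoint sums, and an alternative is to bypass negative association entirely by writing $\Pr(S \text{ dominating})$ via an explicit inclusion--exclusion over uncovered vertices and controlling the resulting alternating sum. All other ingredients are routine configuration-model bookkeeping and asymptotic manipulations.
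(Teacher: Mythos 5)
Your proposal is correct and takes essentially the same route as the paper: a first-moment/union bound over all candidate sets of size $\approx\tfrac{(\log d-2\log\log d)n}{d}$, a per-vertex non-domination probability of order $e^{-kd/n}=\tfrac{(\log d)^2}{d}$ in the configuration model, negative correlation of the coverage events to multiply these probabilities, and a comparison of the resulting rate with the entropy term $\binom{n}{k}$, exactly as in the paper's proof (which establishes the negative correlation by a direct sequential conditional computation rather than your negative-association argument, and works with $c>2$ instead of exactly $c=2$). Two small technical remarks: the concentration of the number $m$ of cross half-edges is unnecessary, since the worst case $m=kd$ already gives $q\ge(1-\tfrac{kd}{(n-k)d-d})^d=(1-o_d(1))\tfrac{(\log d)^2}{d}$ uniformly in $m$; and, as in the paper, the final cancellation $(\log d)^2-(1+o_d(1))(\log d)^2\le -3(1+o_d(1))\log d\log\log d$ only closes for $d$ sufficiently large, so the claim ``for any fixed $d\ge 3$'' should carry the same large-$d$ caveat the paper itself makes.
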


\begin{proof}
We will proceed by contradiction.  Given a set of vertices $D$ of
size~$m$, we will compute the probability that $D$ dominates
$Y=V(G)\setminus D$. Recall that $G$ has been obtained from the
configuration model by selecting a random perfect matching of
$K_{nd}$. Let $y\in Y$ fixed, then let $A_y=\{N(D)\cap \{y\} \neq
\emptyset\}$ be the event that $y$ is dominated by $D$. Its
complementary event corresponds to the situation where none of the edges of
the perfect matching of $K_{nd}$ connects the points corresponding to
$y$ to the ones corresponding to any vertex of $D$. 
Define $W_D=\cup_{v\in D} W_v$ as the set of cells corresponding to $D$ in $K_{nd}$. Then for any $v\in
W_D$, the event $B_v$ corresponds to the fact that $v$ is not connected to any point in $W_y$. If
$W_D=\{v_1,\dots,v_{md}\}$,

\begin{eqnarray*}
\Pr(\overline{A_y})&=&\Pr(\cap_{v\in W_D} B_v)\\
&=&\Pr( B_{v_1})\Pr( B_{v_2}\mid B_{v_1})\dots \Pr( B_{v_{md}}\mid \cap_{i=1}^{md-1} B_{v_i})\\
&=&\left(1-\frac{d}{nd-1}\right)\left(1-\frac{d}{nd-3}\right)\dots
\left(1-\frac{d}{nd-(2md-1)}\right)\\ & = & \prod_{i=1}^{md} \left(
1-\frac{d}{nd-(2i-1)}\right)\\ & \geq & \prod_{i=1}^{md} \left(
1-\frac{1}{n-2m}\right)\\
\end{eqnarray*}

Since $1-x=e^{-x+(\log (1-x)+x)}$ (here we take $x=\tfrac{1}{n-2m}$)
and $\log(1-x)+x=O(x^2)$ (by the Taylor expansion of the logarithm in
$x=0$), we obtain:

\begin{eqnarray*}
\Pr(\overline{A_y})& \geq & \exp \left\{- \sum_{i=1}^{md} \frac{1}{n-2m}+O\left(\frac{1}{(n-2m)^2}\right)\right\}\\
& = & \exp \left\{-(1+o(1))\frac{md}{n-2m} \right\}
\end{eqnarray*}

The probability that $D$ is dominating all vertices of $Y= \{y_1,\dots,y_{n-m}\}$ is:
$$
\Pr\left(\cap_{y\in Y} A_y\right)= \Pr\left(A_{y_1}\right)\Pr\left(A_{y_2}\mid A_{y_1}\right)\dots
\Pr\left(A_{y_{n-m}}\mid \cap_{j=1}^{n-m-1} A_{y_j}\right)
$$

We claim that $\Pr\left(A_{y_{i}}\mid \cap_{j=1}^{i-1} A_{y_j}\right)\leq \Pr\left(A_{y_i}\right)$.
Suppose that $y_1,\dots,y_{i-1}$ are dominated. This means that the
corresponding perfect matching of $K_{nd}$ has an edge between one of
the points corresponding to $y_j$ ($1\leq j\leq i-1$) and one of the
points corresponding to the vertices of $D$. The probability that
$y_i$ is not dominated by $D$ is now the probability that none of the
remaining edges of the perfect matching connect any vertex of $D$ with
$y_i$. Hence:

\begin{align*}
\Pr\left(\overline{A_{y_{i}}}\mid \cap_{j=1}^{i-1}
A_{y_j}\right)=&\left(1-\frac{d}{nd-2i+1}\right)\left(1-\frac{d}{nd-2i-1}\right)\dots
\left(1-\frac{d}{nd-2md+1}\right)\\
 \geq & \left(1-\frac{d}{nd-1}\right)\left(1-\frac{d}{nd-3}\right)\dots
\left(1-\frac{d}{nd-2md+1}\right)\\
 = & \Pr(\overline{A_{y_i}})
\end{align*}

By considering the complementary events, $\Pr\left(A_{y_{i}}\mid
\cap_{j=0}^{i-1} A_{y_j}\right)\leq \Pr\left(A_{y_i}\right)$. Hence
these events are negatively correlated, and:

$$
\Pr\left(\cap_{y\in Y} A_y\right)\leq \prod_{i=1}^{n-m} \Pr(A_{y_i}) \leq \left( 1-
e^{-\tfrac{md}{n-2m}}\right)^{n-m}\leq \exp\left\{-(n-m)e^{-\tfrac{md}{n-2m}}\right\} 
$$

For the sake of contradiction, let $m\leq \tfrac{\log{d}-c\log\log{d}}{d}n$ for some $c>2$. Then:

\begin{eqnarray*}
\Pr\left(\cap_{y\in Y} A_y\right)&\leq&\exp\left\{-\left(1-\frac{\log d-c\log\log d}{d}\right)n\exp\left\{-\frac{\log d-c\log\log d}{1-2\tfrac{\log d-c\log\log d}{d}}\right\}\right\}\\
&=&\exp\left\{-\left(1+o_d(1)\right)n\exp\left\{-\frac{\log d-c\log\log d}{1+o_d(1)}\right\}\right\}\\
&=&(1+o_d(1))e^{-\frac{(\log d)^c}{d}n}
\end{eqnarray*}

Note that if no set of size $m$ dominates $Y$, neither will do a smaller one. So we have to look
just at the sets of size $m$.
The number of these sets can be bounded by
\begin{align*}
\binom{n}{m} \leq& \frac{n^{m}}{m!} \leq 
\left(\frac{en}{m}\right)^{m} = \left(
\frac{de}{\log{d}-c\log\log
d}\right)^{\tfrac{\log{d}-c\log\log{d}}{d}n}\\
=&(1+o_d(1))\left(\frac{de}{\log{d}}\right)^{\tfrac{\log{d}-c\log\log{d}}{d}n}
\end{align*}
where we have used $m!\geq \left(\tfrac{m}{e}\right)^m$. 

Let $E_{DS}$ be the event that $G$ has a dominating set of
size $m$. Applying the union bound, we obtain:
\begin{eqnarray*}
\Pr(E_{DS})& \leq & 
(1+o_d(1))\left(\frac{de}{\log{d}}\right)^{\tfrac{\log{d}-c\log\log{d}}{d}n} e^{-\frac{(\log{d})^c}{d}n}  \\
& = & (1+o_d(1))\exp\left\{\frac{\log{d}-c\log\log{d}}{d}(\log{d}+1-\log\log{d})n
-\frac{(\log{d})^c}{d}n \right\}\\
&  = & (1+o_d(1))\exp\left\{ \left(\frac{(\log{d})^2}{d}	
-\frac{(\log{d})^c}{d}+o_d\left(\frac{(\log{d})^2}{d}\right)\right)n \right\} \longrightarrow 0
\end{eqnarray*}
since $c>2$. This shows that w.h.p. no set of size less than $\tfrac{\log{d}-2\log\log{d}}{d}n$ can
dominate the whole graph and completes the proof. 
\end{proof}

Since any identifying code is also a dominating set, we obtain the
following immediate corollary.
\begin{corollary}
  Let $G\in\mathcal{G}(n,d)$, then w.h.p. $\M(G)\geq\tfrac{\log{d}-2\log\log d}{d}n$.
\end{corollary}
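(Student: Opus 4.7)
The plan is extremely short because this is essentially an immediate consequence of Theorem~\ref{prop:domination} combined with the definitional fact that every identifying code is a dominating set. The strategy is simply to chain these two facts together.

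First I would recall from the definition of an identifying code given in Section~1 that any identifying code $\mathcal{C}$ of $G$ is, in particular, a dominating set of $G$. Consequently, the minimum size $\M(G)$ of an identifying code is always at least as large as the minimum size of a dominating set of $G$ (the domination number $\gamma(G)$). This inequality $\M(G) \geq \gamma(G)$ holds deterministically for every twin-free graph and requires no probabilistic argument.

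Next I would invoke Theorem~\ref{prop:domination}, which asserts that for $G \in \mathcal{G}(n,d)$, with high probability every dominating set of $G$ has size at least $\tfrac{\log d - 2\log\log d}{d}n$. In particular, $\gamma(G) \geq \tfrac{\log d - 2\log\log d}{d}n$ w.h.p. Combining this with the deterministic inequality $\M(G) \geq \gamma(G)$ yields $\M(G) \geq \tfrac{\log d - 2\log\log d}{d}n$ w.h.p., which is exactly the desired bound.

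There is no real obstacle here: the corollary is just a one-line observation that exploits the monotone relationship between the two parameters. The entire content of the estimate is already packed into Theorem~\ref{prop:domination}, and the corollary merely transfers that bound to $\M(G)$ via the containment of the class of identifying codes inside the class of dominating sets. Together with the matching upper bound from Theorem~\ref{prop:RRG}, this shows that w.h.p. $\M(G) = \tfrac{\log d + \Theta(\log\log d)}{d}n$ for $G \in \mathcal{G}(n,d)$, confirming the sharpness claim made in the introduction.
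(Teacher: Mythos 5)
Your proposal is correct and matches the paper exactly: the paper derives this corollary in one line from Theorem~\ref{prop:domination} using precisely the observation that any identifying code is, by definition, a dominating set. Nothing is missing.
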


Plugging together Theorems~\ref{prop:RRG} and~\ref{prop:domination},
we obtain the following result.

\begin{corollary}
\label{cor:lb}
Let $G\in\mathcal{G}(n,d)$, then w.h.p. 
$$\frac{\log{d}-2\log\log d}{d}n\leq\M(G)\leq\frac{\log{d}+\log\log
  d+O_d(1)}{d}n$$
\end{corollary}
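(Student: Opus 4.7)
The plan is to simply combine the two preceding results. The upper bound $\M(G)\leq\tfrac{\log d+\log\log d+O_d(1)}{d}n$ is precisely the content of Theorem~\ref{prop:RRG}, which was proved using the Configuration Model together with Theorem~\ref{the:g5} (after removing the few short cycles that random regular graphs contain).

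For the lower bound, I would invoke the immediate corollary of Theorem~\ref{prop:domination} noted just above: since any identifying code is in particular a dominating set, and Theorem~\ref{prop:domination} asserts that w.h.p. every dominating set of $G\in\mathcal{G}(n,d)$ has size at least $\tfrac{\log d-2\log\log d}{d}n$, the same lower bound holds for $\M(G)$ w.h.p.

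Finally, to get both bounds simultaneously, I would use a standard union bound on the two "bad" events (the upper bound failing, or the lower bound failing). Each of these holds with probability $o(1)$ as $n\to\infty$ (for $d$ fixed, or at least for sufficiently large $d$ as required by the asymptotic notation $O_d(1)$), so their union still has probability $o(1)$, and the conjunction of both bounds holds w.h.p. There is essentially no obstacle here since the heavy lifting has already been done in the two cited theorems; the only thing to check is that the asymptotic regimes in $d$ for the two statements are compatible, which they are because both are stated uniformly for $d\geq 3$ (or for $d$ large enough) within $\mathcal{G}(n,d)$.
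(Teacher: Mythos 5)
Your proposal is correct and matches the paper's argument exactly: the corollary is obtained by plugging together Theorem~\ref{prop:RRG} for the upper bound and Theorem~\ref{prop:domination} (via the fact that every identifying code is a dominating set) for the lower bound, with the two w.h.p.\ statements combined by a union bound.
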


\section{Extremal constructions}~\label{apdx:constructions}

This section gathers some constructions which show the tightness of
some of our upper bounds. Some of these constructions can be found in~\cite{F09}.

\begin{construction}\label{constr1}
Given any $d_H$-regular multigraph $H$ (without loops) on $n_H$ vertices, let $\mathcal{C}_1(H)$ be the graph
on $n=n_H(d_H+1)$ and maximum degree $d=d_H+1$ constructed as follows:
\begin{enumerate}
\item Replace each vertex $v$ of $H$ by a clique $K(v)$ of $d_H+1$ vertices
\item For each vertex $v$ of $H$, let $N(v)=\{v_1,\ldots, v_{d_H}\}$ and
  $K(v)=\{k_0(v),\ldots,k_{d_H}(v)\}$. For each $k_i(v)$ but one ($1\le
  i\le d_H$), connect it with an edge in $\mathcal{C}_1(H)$, to a unique
  vertex of $K(v_i)$, denoted $f\left(k_i(v)\right)$.
\end{enumerate}
\end{construction}

One can see that the graphs $\mathcal{C}_1(H)$ given by
Construction~\ref{constr1} are twin-free. Moreover, for each vertex
$v$ of $H$ and for each $1\le i\le d_H$, note that
$f\left(k_i(v)\right)$ is $k_0(v)k_i(v)$-forced. Therefore
$\mathcal{C}_1(H)$ has $d_Hn_H=n-\tfrac{n}{d}$ forced vertices. In
fact these forced vertices form an identifying code,
therefore $\M(\mathcal{C}_1(H))=n-\tfrac{n}{d}$. An example of this
construction is given in Figure~\ref{fig:constr1}, where $H$ is the
hypercube of dimension~3, $H_3$, and the black vertices are those which
belong to a minimum identifying code of $\mathcal{C}_1(H_3)$.

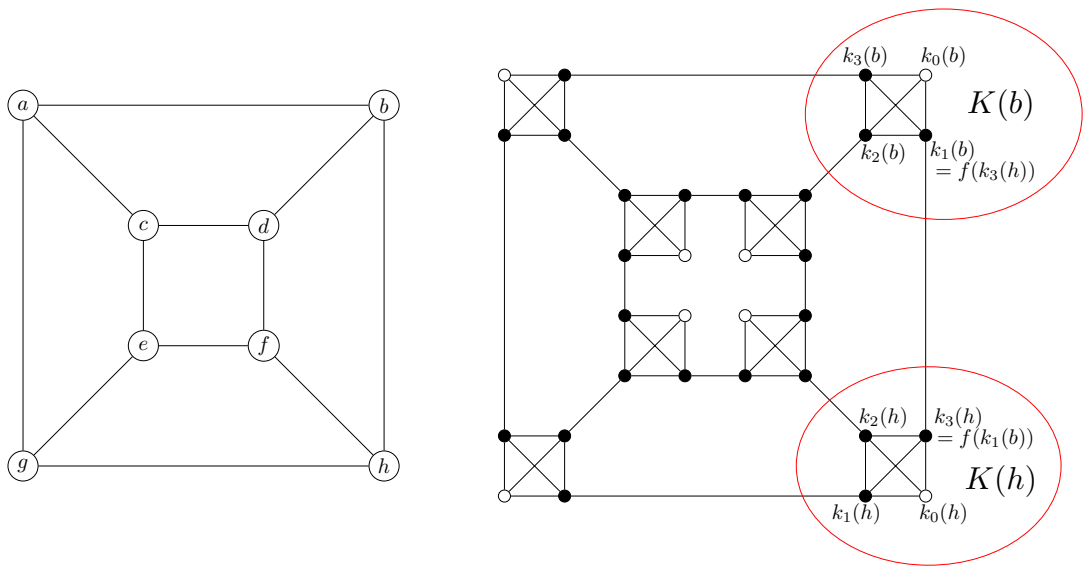
\begin{figure}[ht!]
\centering
\scalebox{0.8}{\begin{tikzpicture}[join=bevel,inner sep=0.5mm,line width=0.4pt, scale=0.5,minimum size=.2cm]
\path (0,1) node[draw,shape=circle,minimum size=.5cm] (v000) {$g$}
      (4,5) node[draw,shape=circle,minimum size=.5cm] (v001) {$e$}
      (12,1) node[draw,shape=circle,minimum size=.5cm] (v100) {$h$}
      (12,13) node[draw,shape=circle,minimum size=.5cm] (v110) {$b$}
      (0,13) node[draw,shape=circle,minimum size=.5cm] (v010) {$a$}
      (8,9) node[draw,shape=circle,minimum size=.5cm] (v111) {$d$}
      (4,9) node[draw,shape=circle,minimum size=.5cm] (v011) {$c$}
      (8,5) node[draw,shape=circle,minimum size=.5cm] (v101) {$f$}
      (16,0) node[draw,shape=circle] (160) {}
      (18,0) node[draw,shape=circle,fill=black] (180) {}
      (16,2) node[draw,shape=circle,fill=black] (162) {}
      (18,2) node[draw,shape=circle,fill=black] (182) {}
      (28,0) node[draw,shape=circle,fill=black] (280) {}
      (28,2) node[draw,shape=circle,fill=black] (282) {}
      (30,2) node[draw,shape=circle,fill=black] (302) {}
      (30,0) node[draw,shape=circle] (300) {}
      (16,12) node[draw,shape=circle,fill=black] (1612) {}
      (18,12) node[draw,shape=circle,fill=black] (1812) {}
      (16,14) node[draw,shape=circle] (1614) {}
      (18,14) node[draw,shape=circle,fill=black] (1814) {}
      (28,12) node[draw,shape=circle,fill=black] (2812) {}
      (28,14) node[draw,shape=circle,fill=black] (2814) {}
      (30,14) node[draw,shape=circle] (3014) {}
      (30,12) node[draw,shape=circle,fill=black] (3012) {}
      (20,4) node[draw,shape=circle,fill=black] (204) {}
      (22,4) node[draw,shape=circle,fill=black] (224) {}
      (20,6) node[draw,shape=circle,fill=black] (206) {}
      (22,6) node[draw,shape=circle] (226) {}
      (24,4) node[draw,shape=circle,fill=black] (244) {}
      (26,4) node[draw,shape=circle,fill=black] (264) {}
      (24,6) node[draw,shape=circle] (246) {}
      (26,6) node[draw,shape=circle,fill=black] (266) {}
      (20,8) node[draw,shape=circle,fill=black] (208) {}
      (22,8) node[draw,shape=circle] (228) {}
      (20,10) node[draw,shape=circle,fill=black] (2010) {}
      (22,10) node[draw,shape=circle,fill=black] (2210) {}
      (24,8) node[draw,shape=circle] (248) {}
      (26,8) node[draw,shape=circle,fill=black] (268) {}
      (24,10) node[draw,shape=circle,fill=black] (2410) {}
      (26,10) node[draw,shape=circle,fill=black] (2610) {}
      (3012)+(2.5,1) node[scale=1.5] {$K(b)$}
      (300)+(2.5,.5) node[scale=1.5] {$K(h)$}
      (3014)+(.6,.6) node {$k_0(b)$}
      (3012)+(.9,-.5) node {$k_1(b)$}
      (3012)+(2.0,-1.2) node {$=f(k_3(h))$}
      (2812)+(.6,-.6) node {$k_2(b)$}
      (2814)+(0,.6) node {$k_3(b)$}
      (300)+(.6,-.6) node {$k_0(h)$}
      (282)+(0.6,.6) node {$k_2(h)$}
      (280)+(-0.3,-.6) node {$k_1(h)$}
      (302)+(1.1,.6) node {$k_3(h)$}
      (302)+(2.0,-.1) node {$=f(k_1(b))$};
\draw (v000) -- (v100) -- (v110) -- (v010) -- (v000) -- (v001) -- (v011) -- (v111) -- (v101) -- (v001)
      (v010) -- (v011)
      (v110) -- (v111)
      (v100) -- (v101)
      (1612) -- (162) -- (160) -- (180) -- (182) -- (162) -- (180) -- (280) -- (300) -- (302) -- (282) -- (280) -- (302)
      (302) -- (3012) -- (3014) -- (2814) -- (2812) -- (3012) -- (2814) -- (1814) -- (1614) -- (1612) -- (1812) -- (1814) -- (1612)
      (208) -- (206) -- (204) -- (224) -- (206) -- (226) -- (224) -- (244) -- (264) -- (266) -- (244) -- (246) -- (266) -- (268)
      (208) -- (228) -- (2210) -- (208) -- (2010) -- (2210) -- (2410) -- (2610) -- (268) -- (248) -- (2410) -- (268)
      (1614) -- (1812) -- (2010) -- (228)
      (160) -- (182) -- (204) -- (226)
      (248) -- (2610) -- (2812) -- (3014)
      (246) -- (264) -- (282) -- (300);
\draw[red] (2812)+(2.6,.7) ellipse (4.6cm and 3.5cm)
           (280)+(2.1,1) ellipse (4.4cm and 3.3cm);
\end{tikzpicture}}
\caption{The graphs $H_3$ and $\mathcal{C}_1(H_3)$}
\label{fig:constr1}
\end{figure}

The following construction is very similar, but yields regular graphs.

\begin{construction}{\cite{F09}}\label{constr2}
Given any $d_H$-regular multigraph $H$ (without loops) on $n_H$ vertices, let
$\mathcal{C}_2(H)$ be the $d$-regular graph on $n=n_Hd_H$ vertices
(where $d=d_H$) constructed as follows:
\begin{enumerate}
\item Replace each vertex $v$ of $H$ by a clique $K(v)$ of $d_H$ vertices.
\item For each vertex $v$ of $H$, let $N(v)=\{v_1,\ldots, v_{d_H}\}$
  and $K(v)=\{k_1(v),\ldots,k_{d_H}(v)\}$. For each $k_i(v)$ ($1\le i\le
  d_H$), connect it with an edge in $\mathcal{C}_2(H)$, to a unique
  vertex of $K(v_i)$, denoted $f\left(k_i(v)\right)$.
\end{enumerate}
\end{construction}

Note that for some vertex $v$ of $H$, in order to separate each pair
of vertices $k_i(v),k_j(v)$ of $K(v)$ in $\mathcal{C}_2(H)$, either
$f\left(k_i(v)\right)$ or $f\left(k_j(v)\right)$ must belong to any
identifying code. Repeating this argument for each pair shows that at
least $d-1$ such vertices are needed in the code. Since for any two
cliques $K(u)$ and $K(v)$, the set of these neighbours are disjoint,
this shows that at least $n_H(d-1)$ vertices are needed in an
identifying code of $\mathcal{C}_2(H)$. In fact it is easy to
construct an identifying code of this size. This shows that despite
the fact that $\mathcal{C}_2(H)$ has no forced vertices,
$\M(\mathcal{C}_2(H))=n-\tfrac{n}{d}$. An example of this
construction is given in Figure~\ref{fig:constr2}, where $H$ is the
complete graph $K_5$, and the black vertices form a minimum
identifying code of $\mathcal{C}_2(K_5)$.

Construction~\ref{constr1} and~\ref{constr2} are close to Sierpi\'nski graphs, which were defined
in~\cite{KM97}. Recently in~\cite{GKMMP12}, it has been shown that Sierpi\'nski graphs are also
extremal with respect to Conjecture~\ref{conj}, i.e. for any Sierpi\'nski graph $G$ on $n$ vertices with maximum degree~$d$, $\M(G)=n-\tfrac{n}{d}$.

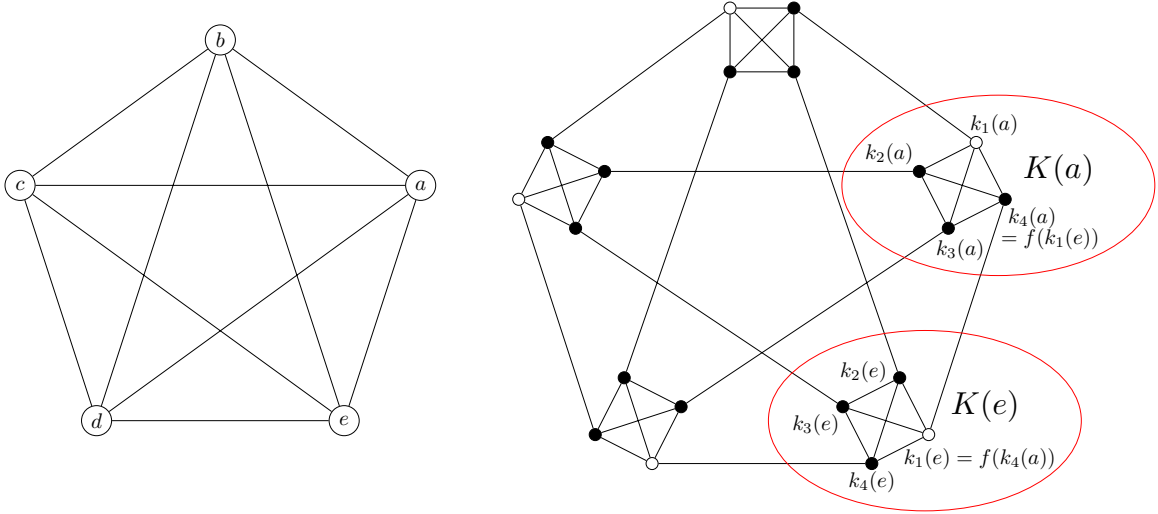
\begin{figure}[ht!]
\centering
\scalebox{0.8}{\begin{tikzpicture}[join=bevel,inner sep=0.5mm,line width=0.4pt, scale=0.5,minimum size=.2cm]
\path (18:7cm) node[draw,shape=circle,minimum size=.5cm] (a) {$a$}
      (18+72:7cm) node[draw,shape=circle,minimum size=.5cm] (b) {$b$}
      (18+2*72:7cm) node[draw,shape=circle,minimum size=.5cm] (c) {$c$}
      (18+3*72:7cm) node[draw,shape=circle,minimum size=.5cm] (d) {$d$}
      (18+4*72:7cm) node[draw,shape=circle,minimum size=.5cm] (e) {$e$}
      (18,0) node (dec) {}
      (dec)+(18:7cm) node (a2) {}
      (dec)+(18+72:7cm) node (b2) {}
      (dec)+(18+2*72:7cm) node (c2) {}
      (dec)+(18+3*72:7cm) node (d2) {}
      (dec)+(18+4*72:7cm) node (e2) {}

      (a2)+(72:1.5cm) node[draw,shape=circle] (a11) {}
      (a2)+(72+90:1.5cm) node[draw,shape=circle,fill=black] (a12) {}
      (a2)+(72+2*90:1.5cm) node[draw,shape=circle,fill=black] (a21) {}
      (a2)+(72+3*90:1.5cm) node[draw,shape=circle,fill=black] (a22) {}
      (b2)+(45:1.5cm) node[draw,shape=circle,fill=black] (b11) {}
      (b2)+(45+90:1.5cm) node[draw,shape=circle] (b12) {}
      (b2)+(45+2*90:1.5cm) node[draw,shape=circle,fill=black] (b21) {}
      (b2)+(45+3*90:1.5cm) node[draw,shape=circle,fill=black] (b22) {}
      (c2)+(18:1.5cm) node[draw,shape=circle,fill=black] (c11) {}
      (c2)+(18+90:1.5cm) node[draw,shape=circle,fill=black] (c12) {}
      (c2)+(18+2*90:1.5cm) node[draw,shape=circle] (c21) {}
      (c2)+(18+3*90:1.5cm) node[draw,shape=circle,fill=black] (c22) {}
      (d2)+(18:1.5cm) node[draw,shape=circle,fill=black] (d11) {}
      (d2)+(18+90:1.5cm) node[draw,shape=circle,fill=black] (d12) {}
      (d2)+(18+2*90:1.5cm) node[draw,shape=circle,fill=black] (d21) {}
      (d2)+(18+3*90:1.5cm) node[draw,shape=circle] (d22) {}
      (e2)+(72:1.5cm) node[draw,shape=circle,fill=black] (e11) {}
      (e2)+(72+90:1.5cm) node[draw,shape=circle,fill=black] (e12) {}
      (e2)+(72+2*90:1.5cm) node[draw,shape=circle,fill=black] (e21) {}
      (e2)+(72+3*90:1.5cm) node[draw,shape=circle] (e22) {}

      (a2)+(3.2,0.5) node[scale=1.5] {$K(a)$}
      (e2)+(3.3,0.5) node[scale=1.5] {$K(e)$}
      (a11)+(.6,.6) node {$k_1(a)$}
      (a12)+(-1.0,.6) node {$k_2(a)$}
      (a21)+(.4,-.7) node {$k_3(a)$}
      (a22)+(0.9,-.6) node {$k_4(a)$}
      (a22)+(1.5,-1.3) node {$=f(k_1(e))$}
      (e11)+(-1.2,0.2) node {$k_2(e)$}
      (e12)+(-0.9,-0.6) node {$k_3(e)$}
      (e21)+(0,-.6) node {$k_4(e)$}
      (e22)+(1.7,-.8) node {$k_1(e)=f(k_4(a))$}
;
 
\draw (a) -- (b) -- (c) -- (d) -- (e) -- (a) -- (c) -- (e)
      (a) -- (d) -- (b) -- (e)
      (b12) -- (b21) -- (b11) -- (b22) -- (b12) -- (b11) -- (a11) -- (a12) -- (a22) -- (a21) -- (a11) -- (a22)
      (a22) -- (e22) -- (e11) -- (e21) -- (e22) -- (e12) -- (e21) -- (d22) -- (d11) -- (d21) -- (d22) -- (d12) -- (d21)
      (d21) -- (c21) -- (c22) -- (c12) -- (c21) -- (c11) -- (c12) -- (b12)
      (b21) -- (b22) -- (e11) -- (e12) -- (c22) -- (c11) -- (a12) -- (a21) -- (d11) -- (d12) -- (b21);

\draw[red] (a2)+(1.2,0) ellipse (5.2cm and 3cm)
           (e2)+(1.3,0) ellipse (5.2cm and 3cm);
\end{tikzpicture}}
\caption{The graphs $K_5$ and $\mathcal{C}_2(K_5)$}
\label{fig:constr2}
\end{figure}

\begin{construction}{\cite{F09}}\label{constr3}
  Given an even number $2k$ and an integer $d\ge 3$, we construct a
  twin-free $d$-regular triangle-free graph $\mathcal{C}_3(2k,d)$ on
  $n=2kd$ vertices as follows.
\begin{enumerate}
\item Let $\{c_0,\ldots,c_{2k-1}\}$ be a set of $2k$ vertices and add the
  edges of the perfect matching $\{c_ic_{i+1\bmod 2k} \mid i \mbox{ is odd}\}$.
\item For each even $i$ ($0\leq i\leq 2k-2$), build a copy $K(i)$ of the
  complete bipartite graph $K_{d-1,d-1}$. Join vertex $c_i$ to all
  vertices of one part of the bipartition of $K(i)$, and join vertex
  $c_{i+1}$ to all other vertices of $K(i)$.
\end{enumerate}
\end{construction}

Consider an identifying code of $\mathcal{C}_3(2k,d)$. Note that in
each copy $K(i)$ of $K_{d-1,d-1}$, at least $2d-4$ vertices belong to
the code in order to separate the vertices being in the same part of
the bipartition of $K(i)$. Now if exactly $2d-4$ vertices of $K(i)$
belong to the code, in order to separate the two remaining vertices,
either $c_i$ or $c_{i+1}$ belongs to the code. Hence for each odd $i$,
at most three vertices from $\{c_i,c_{i+1}\}\cup V(K(i))$ do not belong
to a code of $\mathcal{C}_3(2k,d)$. On the other hand, taking all
vertices $c_i$ such that $i$ is even together with $d-2$ vertices of
each part of the bipartition of each copy of $K_{d-1,d-1}$ yields an
identifying code of this size. Hence
$\M(\mathcal{C}_3(2k,d))=k+2k(d-2)=n-\tfrac{n}{2d/3}$. An example
of this construction is given in Figure~\ref{fig:constr3}, where
$2k=8$, $d=3$, and the black vertices form a minimum identifying code
of $\mathcal{C}_3(8,3)$.

\begin{figure}[ht!]
\centering
\scalebox{0.7}{\begin{tikzpicture}[join=bevel,inner sep=0.5mm,line width=0.6pt, scale=0.5,minimum size=.2cm]
\path (0,0) node[draw,shape=circle] (c1) {}
      (4,0) node[draw,shape=circle,fill=black] (c2) {}
      (8,0) node[draw,shape=circle] (c3) {}
      (12,0) node[draw,shape=circle,fill=black] (c4) {}
      (16,0) node[draw,shape=circle] (c5) {}
      (20,0) node[draw,shape=circle,fill=black] (c6) {}
      (24,0) node[draw,shape=circle] (c7) {}
      (28,0) node[draw,shape=circle,fill=black] (c8) {}
      (c1)+(0,1) node[scale=1.2] {$c_0$}
      (c2)+(0,1) node[scale=1.2] {$c_1$}
      (c3)+(0,1) node[scale=1.2] {$c_2$}
      (c4)+(0,1) node[scale=1.2] {$c_3$}
      (c5)+(0,1) node[scale=1.2] {$c_4$}
      (c6)+(0,1) node[scale=1.2] {$c_5$}
      (c7)+(0,1) node[scale=1.2] {$c_6$}
      (c8)+(0,1) node[scale=1.2] {$c_7$}
      (c1)+(0,-4) node[draw,shape=circle,fill=black] (c11) {}
      (c1)+(0,-8) node[draw,shape=circle] (c12) {}
      (c2)+(0,-4) node[draw,shape=circle,fill=black] (c21) {}
      (c2)+(0,-8) node[draw,shape=circle] (c22) {}
      (c3)+(0,-4) node[draw,shape=circle,fill=black] (c31) {}
      (c3)+(0,-8) node[draw,shape=circle] (c32) {}
      (c4)+(0,-4) node[draw,shape=circle,fill=black] (c41) {}
      (c4)+(0,-8) node[draw,shape=circle] (c42) {}
      (c5)+(0,-4) node[draw,shape=circle,fill=black] (c51) {}
      (c5)+(0,-8) node[draw,shape=circle] (c52) {}
      (c6)+(0,-4) node[draw,shape=circle,fill=black] (c61) {}
      (c6)+(0,-8) node[draw,shape=circle] (c62) {}
      (c7)+(0,-4) node[draw,shape=circle,fill=black] (c71) {}
      (c7)+(0,-8) node[draw,shape=circle] (c72) {}
      (c8)+(0,-4) node[draw,shape=circle,fill=black] (c81) {}
      (c8)+(0,-8) node[draw,shape=circle] (c82) {}
      (c11)+(-2,-2) node[scale=1.5] {$K(0)$};
\draw (c1) -- (c21) -- (c12) -- (c22) -- (c11) -- (c21)
      (c4) -- (c31) -- (c41) -- (c32) -- (c42) -- (c31)
      (c5) -- (c61) -- (c52) -- (c62) -- (c51) -- (c61)
      (c8) -- (c71) -- (c81) -- (c72) -- (c82) -- (c71)
      (c1) -- (c22)
      (c12) -- (c2)
      (c11) -- (c2) -- (c3) -- (c41)
      (c3) -- (c42)
      (c32) -- (c4) -- (c5)
      (c5) -- (c62)
      (c52) -- (c6)
      (c51) -- (c6) -- (c7) -- (c81)
      (c7) -- (c82)
      (c72) -- (c8)
      (c1) .. controls +(-6,4) and +(6,4) .. (c8);
\draw[red] (c11)+(1.5,-2) ellipse (5cm and 3cm);
\end{tikzpicture}}
\caption{The graph $\mathcal{C}_3(8,3)$}
\label{fig:constr3}
\end{figure}
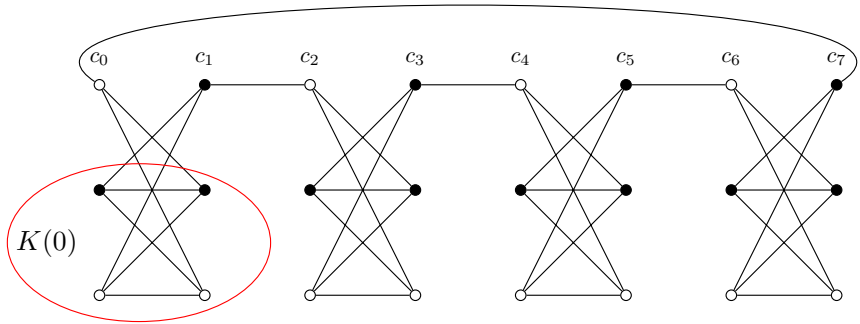

~\\~\\
\vspace{0.3cm} \noindent\textbf{Acknowledgements}\\ The authors are
thankful to the referee for his careful reading and very detailed
comments, which have helped them improving the quality of this paper.

\end{document}